\documentclass{article}

\usepackage[a4paper,margin=2.5cm]{geometry}

\usepackage{amsmath,amssymb,amsfonts}
\usepackage{amsthm}
\usepackage{physics}
\usepackage{braket}
\usepackage{bbm}
\usepackage{mathtools}
\usepackage{microtype}
\usepackage[acronym]{glossaries}

\usepackage{graphicx}
\usepackage[caption=false,font=footnotesize]{subfig}

\usepackage{multirow}
\usepackage{booktabs}
\usepackage{array}
\usepackage{enumitem}

\usepackage{xcolor}
\usepackage{comment}
\usepackage{marvosym}
\usepackage[english]{babel}
\usepackage[colorlinks=true,linkcolor=blue,citecolor=blue,urlcolor=blue]{hyperref}

\newtheorem{theorem}{Theorem}

\newtheorem{lemma}[theorem]{Lemma}

\newacronym{sbs}{SBS}{Stimulated Brillouin Scattering}
\newacronym{qkd}{QKD}{Quantum Key Distribution}
\newacronym{cco}{CCO}{Collective Casimir Operator}
\newacronym{povm}{POVM}{Positive Operator-Valued Measurement}
\newacronym{pvm}{PVM}{Projection-Valued Measurement}
\newacronym{pun}{PUN}{Passive Unitary Normalizable}
\newacronym{sld}{SLD}{Symmetric Logarithmic Derivative}
\newacronym{qfi}{QFI}{Quantum Fisher Information}
\newacronym{cr}{CR}{Cramer Rao}
\newacronym{bss}{BSS}{Blind Source Separation}
\newacronym{tdma}{TDMA}{Time Division Multiple Access}
\newcommand{\eins}{\mathbbm{1}}

\newcommand{\bb}{\mathbf{b}}
\newcommand{\ba}{\mathbf{a}}
\newcommand{\cov}{\mathrm{cov}}
\newcommand{\svar}{\mathrm{var}}

\newcommand{\clip}{\mathop{\text{\LeftScissors}}\nolimits}

\DeclareMathOperator{\off}{off}

\DeclareMathOperator{\diag}{diag}

\allowdisplaybreaks

\title{
Quantum-Limited Blind Source Separation of Classical Light
}

\author{
Janis Nötzel, Kiran Adhikari\\[0.5em]
\small Emmy Noether Group, Theoretical Quantum Systems Design\\
\small Technical University of Munich, Munich, Germany
}

\date{}
\begin{document}

\maketitle

\begin{abstract}
Using the framework of quantum multiparameter estimation, we study the problem of separating independent thermal optical sources mixed by an unknown passive linear transformation, which is known as blind source separation in signal processing. We propose a sensing-assisted method that iteratively estimates and suppresses optical correlations directly within the unit cells of a programmable interferometer. We show that collective quantum measurements exhibit a substantial advantage over conventional detection methods by constructing collective measurements that asymptotically achieve the Holevo Cramér--Rao bound in a unit cell of the interferometer for the respective sub-problem. By systematically arranging multiple such cells in a photonic mesh, our proposal implements an in situ Jacobi diagonalization of the input multimode correlation matrix. We contrast our method with a conventional approach where heterodyne detection is used to reconstruct the full covariance matrix and subsequently diagonalize it on a classical computer. A comparison with this heterodyne tomography approach shows that our sensing-based method is particularly advantageous in the weak-light regime.
\end{abstract}

\section{Introduction}
    We study the problem of blind source separation from a quantum perspective, where $K$ thermal light sources $\sigma =S_{N_1}\otimes\ldots\otimes S_{N_K}$ are mixed by a passive linear optical channel described by a unitary $U$. The resulting state $\rho=US_{N_1}\otimes\ldots\otimes S_{N_K}U^\dagger$ is the output visible to an analyzer attempting to identify the individual energy levels $N_1,\ldots,N_K$ as well as the mixing process $U$. By singling out first the base case of two modes $U(\theta,\phi)(S_M\otimes S_N)U(\theta,\phi)^\dagger$ we prove that the Holevo \gls{cr} bound for joint measurement of $(M,N,\theta,\phi)$ and gain matrix $G=\mathrm{diag}(1,1,1,\sin^2(2\theta)/4)$ is achievable and describe the asymptotically optimal \glspl{povm}. Our description includes sharp \gls{cr}-optimal measurements of $(M,N,\theta)$ when $\phi=0$. 
    
    We then move on to consider a hypothetical Holevo-optimal implementation of the measurement procedure in the sense of an elementary cell with two quantum optical in- and outputs as well as a classical input for starting and stopping the measurement process, and a classical output for the acquired classical information $(M,N,\theta,\phi)$. We describe the use of such a cell for the process of estimating a multi-parameter representation $U(\hat\Theta)$ of $U$ when each cell is embedded and connected in a neural network structure, where the overall network learns the structure of $U$ and thereby allows an observer to eventually obtain the parameters $N_1,\ldots,N_K$. We provide a comparison of this method against the state of the art, where heterodyne detection is used to obtain an estimate $\hat V$ of the true covariance matrix $V$ of $\rho$ and the model parameters are derived from $\hat V$. 

    Conceptually, the transformation $\rho\to \sigma = U(\hat\theta,\hat\phi)^\dagger\rho U(\hat\theta,\hat\phi)$ is similar to the diagonalization of a matrix. On the level of transformations of the covariance- and correlation matrices of $\rho$ and $\sigma$, this correspondence is exact. Our analysis shows that, in the regime of low photon numbers $N_1,\ldots,N_K\ll1$, the proposed quantum sensing- based Jacobi method outperforms aforementioned conventional heterodyne detection-based method. Not surprisingly, the respective quantum advantage is inversely proportional to the thermal energies $N_1,\ldots,N_K$.

    Beyond blind source separation, our proposed architecture provides a complementary perspective on quantum computation for signal processing. Rather than encoding a classical problem into a universal discrete quantum circuit, the optical field carrying the signal is processed directly by quantum-optical transformations and measurements, while the computation emerges from the physical evolution and adaptive configuration of the device. This distinction is relevant in view of recent (un-) computability results showing fundamental limitations of universal gate-based quantum computing, including quantum gate-circuit emulation, concatenation, and general quantum compiling \cite{BocheBoeckEtAl2025,BoeckBocheEtAl2026}. The present approach instead exploits a task-specific quantum processor acting directly on the physical signal, here a classical thermal optical field, and thereby points toward quantum-enabled analog signal processing as an alternative computational paradigm.

    Most importantly, the proposed architecture clarifies based on a particular problem instance the advantage of avoiding a process, by which data is first \emph{made} digital before being re-encoded into a (possibly quantum) computer for analysis.

    \subsection{Related Work}
    \subsubsection{Independent Component Analysis}
        The well-established problem of \gls{bss} provides a closely related perspective on the recovery of latent structure from multichannel observations. In its standard form, one assumes that the measured signals are unknown mixtures of a collection of latent sources. For an linear mixing model, the observations may be written as
        \begin{equation}
            x^K(t)
            =
            As^K(t)
            +
            \nu^K(t),
            \label{eq:bss-mixing-model}
        \end{equation}
        where
        \begin{equation}
            s^K(t) = \left(s_1(t), \ldots, s_K(t) \right)^{\mathsf{T}}
        \end{equation}
        contains the unknown source signals, $A$ is an unknown mixing matrix, and $\boldsymbol{\nu}(t)$ describes measurement noise. The objective is to construct an unmixing matrix $W$ such that
        \begin{equation}
            y^K(t) = Wx^K(t) \approx s^K(t) + \nu'^K(t),
            \label{eq:bss-unmixing-model}
        \end{equation}
        despite the absence of direct knowledge of either $A$ or $\mathbf{s}(t)$ at the computing- or signal processing unit tasked with the inversion ~\cite{jutten1991blind,comon1994independent, hyvarinen2000independent}.
        
        The problem cannot be solved without added assumptions. Independent component analysis, in particular, assumes that there are $K$ sources which are statistically independent and typically exploits their non-Gaussianity. Separation can then be
        formulated as the minimization of statistical dependence, the maximization of non-Gaussianity, the diagonalization of higher-order cumulant matrices, or the maximization of information transferred through an adaptive nonlinear network~\cite{comon1994independent, cardoso1993blind,bell1995information}. Other formulations use temporal
        correlations, nonstationarity, spectral diversity, sparsity, or prior information about the mixing process. These assumptions are essential: without sufficient source structure, the mixing transformation and the
        latent signals are generally not identifiable.
        
        Even under added assumptions, the recovered sources are typically determined only up to ambiguities such as e.g. permutations among the $K$ different sources. 
        
        The optical correlation problem considered by us may also be interpreted as a structured source-separation problem. A field generated by a number $K$ of independent thermal sources is received by several detectors after mixing through a passive linear unitary map. The goal of the receiver is to decompose the source into the source- and mixing part. 
        

    \subsubsection{Multi-parameter estimation and Quantum Fisher Information}\label{subsec:quantum-fisher-metric}
        
        A parametrized family of quantum states
        \begin{equation}
            \mathcal{S} =\{
                \rho_{\Theta} : \Theta
                =
                (\theta_1,\ldots,\theta_p)
                \in\mathcal{R}
            \}
        \end{equation}
       where $\mathcal R\subset\mathbb R^p$ may be regarded as a statistical manifold. Each parameter direction $\partial_i \equiv \partial/\partial\theta_i$ is associated with a symmetric logarithmic derivative $L_i$ \cite{monras2013phase}, defined implicitly by
        \begin{equation}
            \partial_i\rho_{\Theta}
            =
            \frac{1}{2}
            \left(
                L_i\rho_{\Theta}
                +
                \rho_{\Theta}L_i
            \right).
            \label{eq:sld-multiparameter}
        \end{equation}
        The \gls{sld} quantum Fisher information matrix is
        \begin{equation}
            \left(F_Q(\Theta)\right)_{ij} = \frac{1}{2} \Tr(\rho_{\Theta}\left\{L_i,L_j\right\}),
            \label{eq:qfim-definition}
        \end{equation}
        where $\{A,B\}=AB+BA$ denotes the anticommutator. We use the convention in which $F_Q$ itself is called the quantum Fisher metric.
        The quantum Fisher metric measures the infinitesimal statistical distinguishability of neighboring quantum states. It also has an operational interpretation: it is obtained by maximizing the classical Fisher information over all quantum measurements~\cite{braunstein1994statistical}.
        
        For a positive-operator-valued measure $\mathcal{M}=\{M_x\}_x$, the outcome distribution is
        \begin{equation}
            p(x\mid\Theta) = \Tr(\rho_{\Theta}M_x).
        \end{equation}
        The associated classical Fisher information matrix is
        \begin{equation}
            \left(F_{\mathsf{M}}(\Theta)\right)_{ij} = \sum_x p(x\mid\Theta)[\partial_i\log p(x\mid\Theta)][\partial_j\log p(x\mid\Theta)],
            \label{eq:classical-fisher-matrix}
        \end{equation}
        with the sum replaced by an integral for continuous outcomes. For every measurement,
        \begin{equation}
            F_{\mathsf{M}}(\Theta) \leq F_Q(\Theta),
            \label{eq:measurement-qfi-inequality}
        \end{equation}
        where $\leq$ denotes the positive-semidefinite matrix ordering in above equation. The quantum Fisher metric consequently provides a measurement-independent upper bound on the information obtainable from the state. For $k$ independent copies and a locally unbiased estimator $\widehat{\Theta}$, the \gls{sld} quantum \gls{cr} bound gives
        \begin{equation}
            \cov(\hat{\Theta}) \geq \frac{1}{k} F_Q(\Theta)^{-1}.
            \label{eq:multiparameter-qcrb}
        \end{equation}
        Equivalently, for a positive cost matrix $G$,
        \begin{equation}
            \Tr(G\cdot\cov(\hat\Theta))\geq\frac{1}{k}\Tr(G\cdot F_Q(\Theta)^{-1}).
            \label{eq:weighted-sld-bound}
        \end{equation}
        In a single-parameter problem this bound is asymptotically attainable under standard regularity assumptions. In a multiparameter problem, however, the measurements that are optimal for different parameter directions may be incompatible. Equation~\eqref{eq:multiparameter-qcrb} is then a valid lower bound in every direction but need not be
        simultaneously attainable by a single measurement.
        
        A tighter operational benchmark is provided by the Holevo Cram\'er--Rao bound~\cite{holevo2011probabilistic,
        ragy2016compatibility}. Let $\boldsymbol{X}=(X_1,\ldots,X_p)$ be a collection of Hermitian
        operators satisfying the local-unbiasedness conditions 
        \begin{align}
            \operatorname{Tr}
            \left[
                \rho_{\Theta}X_i
            \right]
            &=
            0,
            \label{eq:holevo-unbiasedness-mean}
            \\
            \operatorname{Tr}
            \left[
                \left(
                    \partial_j\rho_{\Theta}
                \right)X_i
            \right]
            &=
            \delta_{ij}.
            \label{eq:holevo-unbiasedness-derivative}
        \end{align}
        For the (generally complex-valued) matrix $Z_{ij}[\boldsymbol{X}] = \Tr(\rho_{\Theta}X_iX_j)$, the Holevo cost is \cite{ragy2016compatibility}
        \begin{equation}
            C_{\mathrm{H}}\left(G,\Theta\right) = 
            \min_{\boldsymbol{X}}
            \left\{
                \Tr(G\,\Re\{Z[\boldsymbol{X}]\})
                +
                \left\|
                    \sqrt{G}\,
                    \Im\{Z[\boldsymbol{X}]\}
                    \sqrt{G}
                \right\|_1
            \right\},
            \label{eq:holevo-cost}
        \end{equation}
        where $\|\cdot\|_1$ denotes the trace norm. The corresponding bound for $k$ (potentially collective) estimation steps is
        \begin{equation}
            \Tr(G\cdot \cov(\hat\Theta)) \geq \frac{1}{k}C_{\mathrm{H}}\left(G,\Theta\right).
            \label{eq:holevo-bound}
        \end{equation}
        The real part in Eq.~\eqref{eq:holevo-cost} describes the ordinary estimation covariance, while the trace-norm term penalizes the incompatibility of the observables required to estimate different parameters. Consequently,      \begin{equation}
            C_{\mathrm{H}}\left(G,\Theta\right)\geq\Tr(G\cdot F_Q(\Theta)^{-1}).
            \label{eq:holevo-stronger-than-sld}
        \end{equation}
        The two expressions agree in a single-parameter problem, but they may differ when several parameters must be inferred simultaneously. An important compatibility condition is
        \begin{equation}
            \Tr(\rho_{\Theta}[L_i,L_j]) = 0\qquad \forall\ i,j,
            \label{eq:weak-commutativity}
        \end{equation}
        where $[A,B]=AB-BA$ is the usual commutator. This condition is weaker than requiring the \glspl{sld} to commute as operators. Under the regularity and positive-definiteness assumptions considered in \cite{ragy2016compatibility}, \ref{eq:weak-commutativity} is necessary and sufficient for the Holevo bound to coincide with the quantum Fisher information bound in the asymptotic collective-measurement setting. 

    \subsubsection{Quantum-Optimal Optical Processing and Learning}
        A related line of work has investigated how quantum-information methods can improve the extraction and processing of classical information carried by optical fields. An early example is \cite{Guha2011Structured}, which showed that structured joint-detection receivers can achieve superadditive communication rates and approach the Holevo limit by coherently processing multiple optical symbols before detection.
        In \cite{Rosati2024LearningTheory}, a statistical learning theory for continuous-variable photonic circuits, establishing learnability guarantees for states, measurements, and channels generated by such processors, was proposed. Subsequently, the problem of learning a quantum receiver was considered in \cite{RosatiSolana2024}. The learning of quantum processes whose governing classical parameters cannot be actively controlled by the observer was the subject of study in \cite{FanizzaQuekRosati2024}. Recent work on photon-starved polarimetry to reconstruct properties of weak classical optical fields from limited photon counts was carried out in \cite{RosatiEtAl2026Polarimetry}. 
        
        In passive imaging, the work \cite{GraceEtAl2020} demonstrated that mode-selective optical processing can substantially outperform conventional direct imaging even when the required receiver basis is initially unknown, an adaptive modal receiver attaining the quantum Fisher information for localization of incoherent sources was developed in \cite{SajjadEtAl2021}. Object-independent linear-optical processing was used for discrimination between incoherent optical objects in \cite{GraceGuha2022}. 
        
        Together, these results demonstrate that optical fields carrying classical information can contain information that is not optimally accessed by conventional detection, while appropriately designed quantum-optical techniques provide strictly improved performance.

        In this context, the present work takes a complementary step by treating an unknown mode basis itself as the object to be learned and physically inverted: Quantum-limited local estimation is embedded into a self-configuring passive optical network that iteratively diagonalizes an unknown multimode thermal field, such that all inference and signal processing are performed directly on the incoming optical state.

    \subsubsection{Butterfly Networks}\label{subsubsec:butterfly-networks}
        
        Butterfly networks are sparse, layered architectures for implementing structured linear transformations. Their name derives from the
        characteristic pattern of pairwise interactions appearing in divide-and-conquer algorithms such as the fast Fourier transform. For
        $N=2^L$ inputs, a butterfly transform contains $L=\log_2 N$ stages, with each stage coupling the modes in $N/2$ disjoint pairs. 
        
        Butterfly networks have recently been studied as parts of machine-learning models  \cite{dao2019butterfly,ailon2021sparse,lin2021deformable}. In order to increase their expressivity, butterfly networks have been combined with other layers introducing e.g. permutations and further operations \cite{dao2019butterfly,ailon2021sparse}. 
        
        \subsubsection{Relation to linear-optical networks}
        
        A lossless $K$-mode linear-optical circuit can be described completely by a matrix $U\in U(K)$. It transforms $K$-mode coherent states $|\alpha^K\rangle = |\alpha_1\rangle\otimes\ldots\otimes|\alpha_K\rangle$ according to 
        \begin{equation}
            |\alpha^K\rangle_{\mathrm{in}} \to |U\alpha^K\rangle_{\mathrm{out}}.
            \label{eq:optical-unitary}
        \end{equation}
        The same matrix can be used to describe the transformation of bosonic creation- and annihilation operators. The elementary component (also called ``unit cell'', a name we shall borrow later) of many programmable photonic processors is a tunable Mach--Zehnder interferometer. Together with phase shifters, it implements (up to equivalent phase conventions) a two-mode transformation
        \begin{equation}
            U_{pq} = \eins^{\otimes(p-1)} \otimes V_p(\phi)U_{p,p+1}(\theta) \otimes \eins^{\otimes(K-p-1)},
            \label{eq:optical-two-mode-rotation}
        \end{equation}
        embedded into a total of $K$ modes. Here, $V(_p\phi)$ denotes a phase shift of the $p$-th mode and $U_{pq}$ a beam-splitter $U(\theta)=\exp(\theta(a^\dagger_pa_q-a_pa^\dagger_q))$. Thus, the central optical building block is precisely a complex two-dimensional rotation.
        
        Universal interferometers arrange these two-mode transformations so that an arbitrary element of $U(K)$ can be synthesized. The triangular construction of \cite{reck1994experimental} and the rectangular construction of \cite{clements2016optimal} require
        \begin{equation}
            \frac{K(K-1)}{2}
        \end{equation}
        tunable two-mode couplers, in addition to single-mode phase. The quadratic number of elements reflects the $K^2$ real dimensions of
        $U(K)$.
        
        The universal processor demonstrated in \cite{taballione2019reconfigurable,taballione2021universal} is of
        this general type. While an optical processor such as e.g. \cite{taballione2021universal} is universal and can implement any passive linear-optical transformation on its input modes, an optical butterfly network will for example, efficiently
        implement only the Fourier transform, Hadamard-type transforms, and learned hierarchical mode mixers.
        
        To implement a general nonunitary neural-network weight matrix $W\in\mathbb{C}^{M\times N}$, one may use the singular-value
        decomposition
        \begin{equation}
            W
            =
            U\Sigma V^\dagger.
            \label{eq:optical-svd}
        \end{equation}
        In conventional optical neural networks, $U$ and $V$ are implemented
        using universal interferometer meshes, while $\Sigma$ is implemented
        using mode-dependent attenuation or amplification
        ~\cite{shen2017deep}.

        \subsubsection{Trainable Complete Optical Meshes}
            Programmable optical interferometer meshes have been developed as trainable hardware for implementing linear and unitary transformations. As an example, the work \cite{Pai2019MatrixOptimization} established optimization methods for configuring universal photonic circuits to realize target unitary matrices and analyzed their robustness to device imperfections. An efficient training method that preserves unitarity throughout optimization was introduced in \cite{Lu2023EfficientTraining}. An experimental demonstration of a coherent nanophotonic neural network based on a programmable Mach–Zehnder-interferometer mesh applied it to classification tasks was carried out in \cite{shen2017deep}. The proposed method aimed at displaying a computational speed enhancement of at least two orders of magnitude over the state-of-the-art and three orders of magnitude in power efficiency for conventional learning tasks. The authors of \cite{Hughes2018InSituBackpropagation} showed based on classical optics that gradients of photonic-network parameters can be measured directly in the physical device through in-situ optical backpropagation, enabling hardware-aware training . In \cite{Miller2013SelfConfiguring}, a self-configuring universal linear optical network was proposed that can learn desired input–output mode transformations using local feedback rather than global numerical optimization . Finally the work \cite{Annoni2017Unscrambling} experimentally realized a self-configuring photonic circuit that automatically learned to reverse strong mixing between optical modes, demonstrating adaptive mode unscrambling.

        \subsubsection{Relation to the Jacobi algorithm}
        
        The connection between current passive optical networks and the Jacobi algorithm arises from the concatenated use of two-dimensional plane rotations, which is the core ingredient to both. Consider a Hermitian matrix $A=A^\dagger$. A Jacobi step selects a pair of indices $(p,q)$ and constructs a rotation $G_{pq}$ that acts nontrivially
        only in the corresponding two-dimensional subspace. The matrix is updated according to
        \begin{equation}
            A^{(k+1)} = G_{pq}^{\dagger} A^{(k)} G_{pq},
            \label{eq:jacobi-update}
        \end{equation}
        with the rotation angle and phase chosen so that
        \begin{equation}
            \left[A^{(k+1)}\right]_{pq} = \left[A^{(k+1)}\right]_{qp} = 0.
            \label{eqn:jacobi-annihilation}
        \end{equation}
        Repeated sweeps over index pairs drive the off-diagonal norm toward zero, yielding
        \begin{equation}
            Q^\dagger A Q \approx \Lambda, \qquad Q=\prod_k G_{p_kq_k},
            \label{eqn:jacobi-eigendecomposition}
        \end{equation}
        where $\Lambda$ is diagonal. This method is a standard tool for eigenvalue, singular-value, and matrix-factorization
        problems \cite{givens1958computation,golub2000eigenvalue,golub2013matrix}.
        
        Mathematically, each Jacobi rotation has the same form as a tunable two-mode optical interferometer. A sequence of Jacobi rotations can therefore be interpreted as a sequence of Mach--Zehnder settings, and the accumulated eigenvector matrix $Q$ can be implemented directly as a linear-optical network.
        
        Universal optical decomposition algorithms are closely related to this procedure but are not identical to the classical symmetric Jacobi algorithm. Reck- and Clements-type algorithms use \emph{one-sided} Givens eliminations to zero selected entries of a target \emph{unitary} and thereby factor it into optical components. Jacobi diagonalization instead uses the \emph{two-sided} similarity transformation in Eq.~\eqref{eq:jacobi-update}, so that Hermiticity and the eigenvalues of $A$ are preserved. Nevertheless, both methods reduce a global matrix problem to a sequence of programmable $2\times2$
        operations.
        
        This connection is especially explicit in self-configuration algorithms for rectangular photonic meshes. A rectangular interferometer was programmed using left and right products of Givens rotations that diagonalize a matrix associated with
        the target transformation in \cite{hamerly2022accurate}. The procedure is Jacobi-like in that successive local rotations annihilate selected matrix elements, while the optical hardware physically realizes each rotation.
        
        A butterfly network may in turn be interpreted as a predetermined, parallel schedule of Givens rotations. Summing up, a Jacobi algorithm chooses off-diagonal matrix entries (so-called ``pivots'') as well as rotation angles and $2\times2$ unitary transformations, which set the pivots to zero. A sequence of pivots which addresses every off-diagonal entry once is called a ``sweep''. Usually, a larger number of sweeps is required before a matrix is diagonalized. A butterfly network fixes a sparse sequence of pairings in advance and normally performs only logarithmically many stages. 
                
        In machine learning, a product of selected Givens rotations provides a compact parametrization of orthogonal or unitary weight matrices~\cite{frerix2019approximating}. In photonics, the same parametrization maps directly to interferometer settings. A butterfly architecture further groups these rotations into shallow parallel layers. Thus, butterfly networks, universal optical meshes, and Jacobi methods are all constructed from two-mode rotations. They differ primarily in which pairs are coupled, how the rotations are scheduled, whether the schedule is fixed or adaptive, and whether the resulting architecture is universal or restricted.

        In this work in particular, we ask the question how the $2\times2$ rotations could be implemented in an optimal way, where optimality is defined based on the notion of quantum parameter estimation. 

\subsection{Conventional Method}
    The conventional way of estimating the parameters $(M,N,\theta,\phi)$ from $k$ copies of $\rho$ is to apply $2$ heterodyne measurements on both A and B mode. Successifely, the measurement results are used to estimate the covariance matrix in the Gaussian state formalism. Most importantly, this object is then assumed to be accessible to standard computational tools. As a result, the conventional Jacobi method may be used to diagonalize it. In a next step, the parameters $\theta,\phi$ can be obtained by a standard decomposition of the unitary diagonalizing the covariance matrix. Due to the one-to-one relationship between covariance- and correlation matrices for the \gls{pun} states under consideration, this is completely sufficient for diagonalizing the correlation matrix as well, end delivers in particular $(M,N,\theta,\phi)$.

\subsection{Quantum Method}
    We first explain the functionality of an individual cell which is able to diagonalize one $2\times2$ block of the correlation matrix $\mathbf A=[\Tr(a_i^\dagger a_j\rho)]_{ij}$. The method is built from the fundamental problem of estimating $(\theta,\phi)$ in the model $\rho=V_A(\phi)U_{AB}(\theta)[S_M\otimes S_N] U_{AB}(\theta)^\dagger V_A(\phi)^\dagger$, which we describe in what follows. Afterwards, we detail how a layered architecture based on several of the cells can be used to implement the transformation $\rho\to\sigma$.
    \subsubsection{Unit Cell}\label{subsubsec:unit cell}
        To start with, we assume $\phi=0$. We define three commuting collective $k$-mode measurements and resulting combined estimators, which together let us approach the \gls{qfi} \gls{cr} bound \cite{holevo2011probabilistic,Helstrom1969}. These are built from non-destructive photon- and photon difference counting plus one collective spin measurement. To define the measurements, we introduce the corresponding \glspl{povm} and operators:
        The nondestructive measurement of the total photon number in mode $X\in\{A^k,B^k\}$ is based on the \gls{povm} $\{P_t^X\}_{t\in\mathbb N}$, where
        \begin{align}
            P_t^X:=\sum_{n^k:\sum_in_i=t}|n_1\rangle\langle n_1|\otimes\ldots\otimes|n_k\rangle\langle n_k|.
        \end{align}
        The corresponding measured photon numbers are given by $t\in\mathbb N$, and the operator describing the measurement when carried out on mode $X\in\{A,B\}$ is $N_X:=\sum_t t\cdot P_t^X$. If only the total photon number of both modes together is of interest, the corresponding \gls{povm} is given by 
        \begin{align}
            P_t^{AB}:=\sum_{r+s=t}P_r^A\otimes P_s^B
        \end{align}
        and the associated operator is $N_{AB}:=\sum_tt\cdot P_t^{AB}$. To measure photon number differences in a nondestructive way we use the \gls{povm} defined via operators 
        \begin{align}
            \mathrm{PND}_t^{AB}:=\sum_{r-s=t}P_r^A\otimes P_s^B,
        \end{align}
        which define the operator $J_z:=\tfrac{1}{2}\sum_tt\cdot \mathrm{PND}_t^{AB}$. Equivalently, using the creation- and annihilation operators $a_i^\dagger$, $b_i^\dagger$ and $a_i$, $b_i$ on the $A$ (and $B$) part of the $i$-th copy of our system we can write $J_z$ as 
        \begin{align}
            J_z:=\tfrac{1}{2}\sum_i(a_i^\dagger a_i - b_i^\dagger b_i).
        \end{align}
        Upon introducing in addition corresponding collective Schwinger operators
        \begin{align}
            J_+:=\sum_ia_i^\dagger b_i,\qquad J_-:=\sum_ib_i^\dagger a_i,
        \end{align}
        we can construct a (collective) Casimir operator \cite{Casimir1931}
        \begin{align}
            \mathcal{J}&:=J_z^2 + \tfrac{1}{2}(J_+J_-+J_-J_+)\\
                &=J_x^2+J_y^2+J_z^2
        \end{align}
        where $J_x:=(J_++J_-)/2$ , $J_y:=\mathbbm{i}(J_--J_+)/2$. It then follows that 
        \begin{align}\label{eqn:commutators}
            [\mathcal{J},J_z]&=[\mathcal{J},J_+]=[\mathcal{J},J_-]=[\mathcal{J},P_s\otimes P_t]=[\mathcal{J},U(\theta)^{\otimes k}]=0\\
            [J_-,J_+]&=-2J_z.
        \end{align}
        Thus, $\mathcal{J}$ can be measured jointly with the photon numbers $s$ and $t$ in modes A and B, as well as with joint photon numbers and photon number differences. For the measurement, the operator that is utilized is however not $\mathcal{J}$ itself, but (with some deviation from the standard literature, where the label $J$ would be used to denote $\mathcal{J}$) rather 
        \begin{align}
            J = \sqrt{\mathcal{J}+\tfrac{1}{4}}-\frac{1}{2}.
        \end{align}
        The estimate achieving the Holevo Cramer-Rao bound is then based on measuring $J=\sqrt{\mathcal{J}+1/4}-1/2$ together with collective photon number $N_{AB}$ of both modes $(A,B)$ and the photon number difference $J_z=(N_A-N_B)/2$ to obtain an estimate for $(M,N,\tau)$ as
        \begin{align}
            \hat M = (\tfrac{1}{2}N_{AB}+J)/k,\qquad\hat N = (\tfrac{1}{2}N_{AB}-J)/k,\qquad\hat\tau = \tfrac{1}{2}+J_z/2J.
        \end{align}
        For the case of arbitrary $(\theta,\phi)$ we define $J_x:=(J_++J_-)/2$ , $J_y:=\mathbbm{i}(J_--J_+)/2$. Instead of measuring $J_z$, the third measurement is now given by the \gls{povm} elements
        \begin{align}\label{def:omega-povm}
            M_{j,t}(\omega):=E_t\frac{2j+1}{4\pi} P_{j,\omega}E_t,\qquad\omega\in S^2,
        \end{align}
        where $J_\omega|j,\omega\rangle = j|j,\omega\rangle$ for the eigenstates $|j,\omega\rangle$ of $J_\omega:=\omega_xJ_x+\omega_yJ_y+\omega_zJ_z$, $P_{j,\omega}\cong\eins_{t,j}\otimes|j,\omega\rangle_{\mathrm{irr}}\langle j,\omega|$ accounts for the fact that $J_\omega\psi=j\psi$ defines a subspace of dimension $\mu_{j,t}$ depending only on $j$ and $t$. Equivalently, $\mu_{j,t}$ is the multiplicity of the spin-$j$ irreducible representation within $\mathcal H_t$. Since the maximal eigenvalue $j$ of $J_\omega$ is nondegenerate within each spin-$j$ irreducible, the corresponding eigenspace in $\mathcal H_t$ has dimension $\mu_{j,t}$. Therefore, the coherent-state resolution of the identity \cite[Eqn.~3.16]{arcchiAtomicCoherentStatesInQuantumOptics} yields
        \begin{align}
            \int_{S^2} M_{j,t}(\omega)d\omega=E_tE_j.
        \end{align}
        Given measurement outcomes $(t,j)$ of total photon number and spin, the measurement outcomes $\omega=(\omega_x,\omega_y,\omega_z)$ fulfill on average over the $(j,t)$ outcomes the equality 
        \begin{align}
            \mathbb E\omega_i=n_i\Tr(\tfrac{J_\mathbf{n}}{J+1}\rho),
        \end{align}
        where $\mathbf n=(\sin(2\theta)\cos(\phi),\sin(2\theta)\sin(\phi),\cos(2\theta))$ and $J_\mathbf{n}=\sum_in_iJ_i$ (see Lemma \ref{lem:first-and-second-order-spin-momemnts}), and thus allow estimation of the different collective spin directions. For conversion to the angles $\hat\theta$, $\hat\phi$ we use the estimators 
        \begin{align}\label{def:(theta,phi)-estimator}
            \hat\theta=\arccos(\clip(\tfrac{j+1}{j}\hat \omega_{j,t,z}))/2,\qquad \hat\phi=\mathrm{arctan2}(\hat \omega_{j,t,y},\hat \omega_{j,t,x}),
        \end{align}
        where $\clip(x):=x$ if $x\in[-1,1]$, while $\clip(x)=-1$ once $x<-1$ and $\clip(x)=1$ for $x>1$ or $x=\infty$ (which happens when $j=0$).
        As we show in appendix \ref{app:holevo-optimality}, this measurement is asymptotically optimal in the Holevo sense. 

        After estimating numbers $\hat M, \hat N, \hat\theta, \hat\phi$, the unit cell updates its associated beam-splitter and phase-shifter values by $-\hat\theta+\pi/2$ and $-\hat\phi$
        .

    
    \subsubsection{Neural Network Perspective}
        Let $\mathcal L=(\mathcal L_t)_{t=1,\ldots,}$ with each $\mathcal L_t$ containing a number of indices $((i_1,j_1),\ldots, )$ such that the corresponding $2\times2$ matrices commute.  For example, the list can be given by 
        \[
            \begin{array}{cccccc}
                [ 1 & 2 ] & [ 3 & 4 ] & [ 5 & 6 ] \\
                2  & [ 1 & 4 ] & [ 3 & 6 ] & 5\\
                {}[ 2 & 4 ] & [ 1 & 6 ] & [ 3 & 5 ] \\
                4&[2&6]&[1&5]&3\\
                {}[4&6]&[2&5]&[1&3]\\
                6&[4&5]&[2&3]&1
            \end{array}
        \]
        where $[i,j]$ denotes joint processing of nodes $i$ and $j$ by a unit cell, and the schedule equals the one proposed in \cite{Whiteside1984ParallelJacobi}, which also fits exactly to the layout of today's universal optical chips.

        For each $\mathcal L_t$, the network executes an algorithm which applies the estimation procedure to each mode pair $(i,j)\in\mathcal L_t$, yielding estimates $(\hat M_i,\hat N_j,\hat\theta_{ij},\hat\phi_i)$. These are used to set the parameters $\theta_{\mathrm{cell}},\phi_{\mathrm{cell}}$ to $-\hat\theta_{ij},-\hat\phi_i$. If $\hat\theta=\theta$ and $\hat\phi=\phi$, then the correlation matrix $C=[\Tr(a^\dagger_ia_j\rho)]_{ij}$ satisfies $C_{ij}=0$, and the state after the unit cell has the property $\Tr_{S}(\mathcal C_{ij}\rho\mathcal C_{ij}^\dagger)=S_M\otimes S_N$, where $S=\{l\in[K]:l\neq i\wedge l\neq j\}$ for some parameters $M,N\geq0$. 
        
        This property follows since $\rho$ being a \gls{pun} state is equivalent to it being gauge-invariant \cite[Theorem 3.7]{completePUN}. Hence, the covariance matrix of the reduced state $\Tr_S(\rho)$ is gauge-invariant and thereby the two-mode state $\Tr_S(\rho)$ itself must be a \gls{pun} state. Since every two-mode \gls{pun} state can be written as $US_M\otimes S_NU^\dagger$ for a passive linear unitary $U$, the remaining step is to invoke tools such as the Reck decomposition \cite{reck1994experimental}, which deliver the decomposition $U=V_A(\phi)U_{AB}(\theta)$ since two out of the four parameters of the unitary are irrelevant for the problem due to phase covariance of thermal states and the ambiguity in the choice of a local reference frame for the unit cell. 
        In order to guarantee convergence of the diagonalization method, one can arrange the neural network based on the structure proposed in \cite{Whiteside1984ParallelJacobi}. This earlier proposal describes a realization of the Jacobi algorithm for software.
        We note that, while our later mathematical formulation suggests an algorithmic procedure, the entire proposed procedure can alternatively be considered as a series of ``sweep blocks'' which iteratively perform the same complete sweep several times. In this perspective, each unit cell runs the procedure of estimating the incoming light and then adapting its own parameters exactly once.

\section{Main Results}
    Our first result describes the performance of a universally optimal collective measurement, which consists of three commuting measurements but needs exact knowledge of the parameter $\phi$. To simplify conversion between estimators, we add assumptions on the parameters.
    \begin{theorem}\label{thm:main}
        Let $\theta\in(0,\pi/2)$ as well as $M-N\geq\Delta$ and $N\geq\Delta$ for some $\Delta,\epsilon>0$. The covariance matrix of the unit cell measurement applied to the state $\rho=V_A(\phi)U_{AB}(\theta)(S_M\otimes S_N)U_{AB}(\theta)^\dagger V_A(\phi)^\dagger$ is given by 
        \begin{align}
            \mathrm{cov}=\frac{1}{k}\left(\begin{array}{cccc}
                            M(M+1)&0&0&0\\
                            0&N(N+1)&0&0\\
                            0&0&\tfrac{2M(N+1)}{4(M-N)^2}&0\\
                            0&0&0&\tfrac{2M(N+1)}{\sin(2\theta)^2(M-N)^2}.
            \end{array}\right)+o(1/k)
        \end{align}
        Second, the measurement attains the Holevo CR bound $C^H$ \cite[Eqn. (6)]{ragy2016compatibility} for the cost matrix $G=\mathrm{diag}(1,1,1,\sin^2(2\theta)/4)$, which equals $C^H=M(M+1) + N(N+1) + M(N+1)/(M-N)^2 $. 
    \end{theorem}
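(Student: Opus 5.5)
Two things must be shown: the asymptotic covariance of the estimators $(\hat M,\hat N,\hat\theta,\hat\phi)$, and that $\Tr(G\cdot\cov)$ equals the Holevo cost $C^H$.

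\textbf{Reduction to a diagonal reference state.} The unit cell measurement is built from the collective operators $N_{AB}$, $\mathcal{J}$ (hence $J$), and the spin-coherent POVM $M_{j,t}(\omega)$. The first two commute with $(V_A(\phi)U_{AB}(\theta))^{\otimes k}$ by \eqref{eqn:commutators}. The last is rotated covariantly into a POVM of the same form with rotated $\omega$. So I will compute everything on $\sigma^{\otimes k}$ with $\sigma=S_M\otimes S_N$ and then rotate the spin direction to $\mathbf n$. On $\sigma^{\otimes k}$ the Schwinger representation gives a clean structure:
\begin{itemize}
\item the state is $\bigotimes_i S_M\otimes S_N$, and $\sigma$ is a function of $N_{AB}$ and $J_z$, namely $\propto (M/(M+1))^{N_A}(N/(N+1))^{N_B}$;
\item so $\sigma^{\otimes k}$ is block diagonal in $(t,j)$, and within each spin-$j$ irrep it is a thermal-like state $\propto x^{J_z}$ with $x=\frac{M(N+1)}{N(M+1)}>1$.
\end{itemize}

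\textbf{Statistics of $(t,j)$ and of $\omega$.} I will show that $J$ concentrates at $k(M-N)/2+O(\sqrt k)$. Heuristically, $\mathbb E J_z=k(M-N)/2$, and since $x>1$ is fixed (this is where $M-N\ge\Delta$ enters), the weight concentrates on states with $J_z$ near $j$. The precise claim is that $\frac12 N_{AB}\pm J$ behaves asymptotically like $N_A$ and $N_B$ measured in the rotated basis. These are sums of $k$ i.i.d.\ geometric variables with variances $M(M+1)$ and $N(N+1)$. Their covariance vanishes to leading order, because the modes are independent in the eigenbasis and the correction $J-J_{\mathbf n}$ is $O(1)$ in expectation, with spread controlled by the factor $x^{J_z}$. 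This gives the first two diagonal entries.

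For the angular part, Lemma \ref{lem:first-and-second-order-spin-momemnts} gives $\mathbb E\omega=\mathbf n\,\Tr(J_{\mathbf n}\rho/(J+1))$. Within an irrep the distribution of $\omega$ is the Husimi function of $x^{J_z}$, peaked at $\mathbf n$. The transverse fluctuations have variance $\approx\langle j-J_{\mathbf n}\rangle/j+1/(2j)$ per direction. The mean deviation $j-J_{\mathbf n}$ is geometric with mean $1/(x-1)=\frac{N(M+1)}{M-N}$. Combining this with $j\approx k(M-N)/2$ and converting by the delta method through $\hat\theta=\arccos(\cdot)/2$ (factor $1/(4\sin^2 2\theta)$ from $d\cos2\theta$) and through $\hat\phi$ (factor $1/\sin^2 2\theta$) should give
\[
\frac{2M(N+1)}{4k(M-N)^2}\quad\text{and}\quad\frac{2M(N+1)}{k\sin^2(2\theta)(M-N)^2}.
\]
The combination is $2\bigl(2N(M+1)/(M-N)+1\bigr)/(k(M-N))\cdot\tfrac14=\dots$, to be checked.

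Off-diagonal covariances vanish for two reasons. The radial variables $(t,j)$ are independent of $\omega$ given $(t,j)$ to leading order. The $x$ and $y$ transverse components are uncorrelated by the rotational symmetry about $\mathbf n$. The clipping and the case $j=0$ contribute exponentially small errors, since $j=0$ and boundary events have probability $e^{-\Omega(k)}$ given $N\ge\Delta$, $M-N\ge\Delta$ and $\theta\in(0,\pi/2)$. This yields the $o(1/k)$ remainder.

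\textbf{Holevo bound.} I will compute the SLD QFI of the two-mode family $(M,N,\theta,\phi)$ using Gaussian formulas. The thermal parts give $1/(M(M+1))$ and $1/(N(N+1))$. The rotation parts give $F_{\theta\theta}=4(M-N)^2/(M(N+1)+N(M+1))$-type terms.

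Here I notice a mismatch: the stated covariance $2M(N+1)$ is larger than the QFI value $M(N+1)+N(M+1)$. So attainment must come from the trace-norm term in \eqref{eq:holevo-cost}, because $\Tr(\rho[L_\theta,L_\phi])\neq0$. I will therefore evaluate $C_H$ directly. With $G$ as given, the $\theta$ and $\phi$ block is a two-parameter D-invariant (Gaussian) model, so $C_H=\Tr(GF^{-1})+\|\sqrt G F^{-1}DF^{-1}\sqrt G\|_1$, where $D$ is the imaginary part of the SLD Gram matrix. Computing $D_{\theta\phi}\propto\sin2\theta\,(M-N)$ should make the sum exactly $\frac{M(N+1)}{(M-N)^2}$ for the angular block, matching $\tfrac{2M(N+1)}{4(M-N)^2}\cdot 2$ after weighting by $G$.

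\textbf{Main obstacle.} The hardest step is the angular variance: controlling the spin-coherent POVM's Husimi statistics uniformly over the random $j$, and checking that the constants combine to exactly $2M(N+1)$. I will first try computing exactly within one irrep, where the geometric sum is explicit, and then average over $j$ by concentration.
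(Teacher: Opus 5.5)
\textbf{Where the argument breaks.} Your reduction to $\sigma^{\otimes k}$, the rotational-symmetry argument for the transverse cross-covariances, the treatment of $\hat M,\hat N$ through $J-J_{\mathbf n}=O(1)$, and the Holevo computation all follow the paper. On the last point, your D-invariant formula $\Tr(GF^{-1})+\|\sqrt G F^{-1}DF^{-1}\sqrt G\|_1$ is the paper's RLD expression, with $|(F^{-1}DF^{-1})_{\theta\phi}|=\frac{1}{2(M-N)\sin 2\theta}$.

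The gap is in the one step that produces the nontrivial constant. The per-direction variance you write, $\langle j-J_{\mathbf n}\rangle/j+1/(2j)$, is the variance of $J_{\mathbf m}/j$ ($\mathbf m\perp\mathbf n$) \emph{in the state}: $\langle J_{\mathbf m}^2\rangle/j^2=(j(1+2d)-d^2)/(2j^2)$ with $d=j-J_{\mathbf n}$. It is not the variance of the POVM outcome $\langle\mathbf m,\omega\rangle$. The spin-coherent POVM adds its own noise. By \eqref{eqn:second-order-spin-moment} the second-moment operator is $E_jE_t\bigl[\tfrac{2J_{\mathbf m}^2}{(j+1)(2j+3)}+\tfrac{1}{2j+3}\bigr]$, and the last term contributes a further $\tfrac{1}{2j}$ that your heuristic omits.

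With your formula, $j\approx k(M-N)/2$ and $\langle d\rangle=\frac{N(M+1)}{M-N}$ give a transverse variance of $\frac{2MN+M+N}{k(M-N)^2}$. Then $\hat\theta$ and $\hat\phi$ would both sit exactly at their SLD bounds. That is impossible, as you yourself note, since $\Tr(\rho[L_\theta,L_\phi])\neq0$. So as written the proposal does not produce the stated covariance. The ``to be checked'' line, which also carries a stray factor $2$ and matches neither value, is exactly where it fails.

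\textbf{The fix.} Work within one irrep with $q=\frac{M(N+1)}{N(M+1)}$ (your $x$). The outcome density is $\propto\bigl(\cos^2\tfrac{\nu}{2}+q^{-1}\sin^2\tfrac{\nu}{2}\bigr)^{2j}\approx e^{-j(1-q^{-1})\nu^2/2}$. Its per-direction transverse variance is therefore $\frac{q}{(q-1)j}=\frac{1+\langle d\rangle}{j}$, not $\frac{\langle d\rangle+1/2}{j}$. Since $\frac{q}{q-1}=\frac{M(N+1)}{M-N}$, averaging $1/j$ (Lemma~\ref{lem:bound-for-Jinverse}) gives $\frac{2M(N+1)}{k(M-N)^2}$.

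The excess $2M(N+1)-(2MN+M+N)=M-N$ is precisely what produces the trace-norm term $\frac{1}{2(M-N)}$ in $C^H$. This is the heart of the theorem, not a detail to be checked later.

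Separately, the delta method transfers variances only if you control higher moments of $\omega-\mathbf n$ uniformly over the random $j$. The paper obtains $\mathbb E\|\omega-\mathbf n\|^4=O(k^{-2})$ from the identity $\int(1-\langle\omega,\mathbf n\rangle)^2M_{j,t}(d\omega)=\frac{2E_jE_t(D_{\mathbf n}+1)(D_{\mathbf n}+2)}{(j+1)(2j+3)}$, combined with the moment bounds on $D$ and $J^{-1}$. Your remark that boundary events are exponentially rare does not replace this.
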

    Our second result describes the performance of a measurement consisting again of three commuting collective measurements. However, in this case, the third measurement yields the vector $\omega\in\mathbb R^3$, which can be used to estimate $(\theta,\phi)$ according to \eqref{def:(theta,phi)-estimator}.
    \begin{theorem}\label{thm:restricted-case}
        Let $\theta\in(0,\pi/2)$ as well as $M-N\geq\Delta$ and $N\geq\Delta$ for some $\Delta,\epsilon>0$. The estimators for $M$, $N$ and $\theta$ are consistent. The covariance matrix of the unit cell measurement for the restricted model, where $\rho = U(\theta)S_M\otimes S_N U(\theta)^\dagger$, is given by 
        \begin{align}
            \mathrm{cov}=\frac{1}{k}\left(\begin{array}{ccc}
                            M(M+1)&0&0\\
                            0&N(N+1)&0\\
                            0&0&\tfrac{2MN+M+N}{4(M-N)^2}\\
            \end{array}\right)+o(1/k)
        \end{align}
        Second, the measurement is optimal in the sense that it saturates the Holevo CR bound for all cost matrices $G$. 
    \end{theorem}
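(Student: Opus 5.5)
The plan is to work entirely in the collective Schwinger picture. I would also parametrize the third coordinate so that $\tau=\cos^2\theta$, with $\hat\theta$ obtained from $\hat\tau$ by the inverse map. Three structural facts do the work. First, $U(\theta)^{\otimes k}=\exp(-2\mathbbm{i}\theta J_y)$ is an $SU(2)$ rotation that commutes with $N_{AB}$ and with $\mathcal J$, hence with $J$, by \eqref{eqn:commutators}. Second, the law of the pair $(N_{AB},J)$ is therefore the same under $\rho^{\otimes k}$ and under $(S_M\otimes S_N)^{\otimes k}$. Third, $(S_M\otimes S_N)^{\otimes k}$ commutes with $J_z$ and is diagonal in the Fock basis. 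Consequently the statistics of $\hat M,\hat N$ can be computed in the unrotated frame, where $N_A$ and $N_B$ are independent sums of $k$ i.i.d.\ geometric variables. Only $\hat\tau$ sees the rotation, through $J_z=\cos(2\theta)J_{\mathbf n}-\sin(2\theta)J_\perp$, where $J_{\mathbf n}$ is the rotated $z$-component and $J_\perp$ the rotated $x$-component.

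\textbf{Step 1: the $M,N$ block.} In the unrotated frame I write
\begin{align}
J=\sqrt{J_z^2+J_x^2+J_y^2+\tfrac14}-\tfrac12 .
\end{align}
Since $M-N\geq\Delta$, the variable $J_z$ concentrates at $k(M-N)/2$, while $J_x^2+J_y^2=O_P(k)$. Expanding gives $J=J_z+O_P(1)$. Therefore
\begin{align}
\hat M=N_A/k+O_P(1/k),\qquad \hat N=N_B/k+O_P(1/k).
\end{align}
This yields the entries $M(M+1)/k$ and $N(N+1)/k$ with zero cross-covariance, because $N_A$ and $N_B$ are independent, up to $o(1/k)$.

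\textbf{Step 2: the $\tau$ entry.} Because $J_{\mathbf n}$ is the rotated image of $J_z$, $J-J_{\mathbf n}=O_P(1)$ by the same argument as in Step 1. Using $\cos(2\theta)=2\tau-1$, this gives the linearization
\begin{align}
\hat\tau-\tau\approx -\sin(2\theta)\,J_\perp/(2J).
\end{align}
In the rotated frame $J_\perp$ is the unrotated $J_x$. A direct computation gives, per copy, $\mathbb E[J_x^2]=\tfrac14\left(M(N+1)+N(M+1)\right)$, so $\mathrm{Var}(J_x)=k(2MN+M+N)/4$. Dividing by $(2J)^2\approx k^2(M-N)^2$ gives
\begin{align}
\mathrm{Var}(\hat\tau)=\sin^2(2\theta)\,\frac{2MN+M+N}{4k(M-N)^2}.
\end{align}
Applying the delta method with $d\tau/d\theta=-\sin(2\theta)$, which is nonzero on $(0,\pi/2)$, yields the stated $\theta$-entry $(2MN+M+N)/(4k(M-N)^2)$. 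The cross-covariances with $\hat M,\hat N$ vanish to leading order: $J_x$ is odd under the phase gauge $b\mapsto -b$, whereas $N_A$, $N_B$ and $J$ are even. Consistency follows from the law of large numbers together with continuity of $\arccos$ and $\clip$ away from the endpoints.

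\textbf{Step 3: optimality.} I would compute the SLD quantum Fisher information of the model $(M,N,\theta)$. The $M$ and $N$ directions are classical thermal parameters with Fisher information $1/(M(M+1))$ and $1/(N(N+1))$. The $\theta$ direction is the generator $2J_y$ acting on $S_M\otimes S_N$. The SLD for this generator is off-diagonal in the Fock basis, with entries $2\mathbbm{i}(p_m-p_n)/(p_m+p_n)$ times the matrix elements of $J_y$. Summing gives $F_{\theta\theta}=4(M-N)^2/(2MN+M+N)$, and the mixed entries vanish by the same parity argument as in Step 2. Hence $F_Q^{-1}$ equals the leading term of the covariance found above. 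By \eqref{eq:weighted-sld-bound} and \eqref{eq:holevo-stronger-than-sld}, $\Tr(G\,\mathrm{cov})\geq \frac1k C_H\geq\frac1k\Tr(GF_Q^{-1})$. Attaining the right-hand side therefore saturates the Holevo bound for every $G$.

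\textbf{Main obstacle.} I expect the hardest part to be upgrading the in-probability expansions to convergence of second moments with $o(1/k)$ error. This requires controlling the events $j=0$, where $\clip$ is active, and more generally small $J$. My first attempt would use Chernoff bounds for sums of geometric variables: since $M-N\geq\Delta$ and $N\geq\Delta$, they give $\Pr[J<k\Delta/4]\leq e^{-ck}$. Because all estimators are bounded, or polynomially bounded in $k$, this bad event contributes only $O(k^2e^{-ck})$. On the good event a second-order Taylor remainder can be bounded using fourth moments of $J_x$ and $J_y$, which are $O(k^2)$.
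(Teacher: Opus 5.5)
Your proposal is correct and follows essentially the paper's route. You reduce to $\rho_0$ via $[U(\theta)^{\otimes k},N_{AB}]=[U(\theta)^{\otimes k},J]=0$, control $D_z=J-J_z$, and linearize $\hat\tau-\tau$ as the $J_xJ^{-1}$ term plus an $O(1/k)$ remainder; you then evaluate $\Tr(J_x^2J^{-2}\rho_0)$ with $J\approx k(M-N)/2$ and a Chernoff bound on small $J$, kill cross terms because $J_x$ is odd, and compare with the SLD Fisher matrix. The differences are only in execution: you get $\Tr(\rho_0J_x^2)$ from single-copy Fock moments and decouple $J$ by concentration instead of summing over spin sectors, and you reach Holevo optimality by sandwiching $C_H$ (which applies \eqref{eq:holevo-bound} to your estimator, so it tacitly needs asymptotic local unbiasedness) where the paper checks $\Tr(\rho[L_i,L_j])=0$ directly; also, with generator $2J_y$ the SLD entries are $4\mathbbm{i}(p_m-p_n)/(p_m+p_n)$ times those of $J_y$, though your $F_{\theta\theta}$ is right.
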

    To quantify the accuracy and convergence properties of the proposed neural network algorithm, we use the machinery typically employed to analyze the Jacobi algorithm \cite{932f0815-1804-3f94-9d56-120cca8e1385, hari2017convergencecyclicquasicyclicblock}. A sweep is a (partially parallel) application of unitary $2\times2$ rotations, such that every mode pair $(i,j)$ ($i<j$) is addressed once. Thus, the number of Jacobi rotations $m$ in each sweep is given by
    \begin{equation}
        m:=\frac{K(K-1)}{2}. \label{eq:jac_number_pivots}
    \end{equation}

Let $C^{(\ell)}\in\mathbb{C}^{K\times K}$ be the Hermitian correlation matrix at the
beginning of sweep $\ell$, where $K$ denotes the number of modes. 
 Element wise, $C:=\big(\Tr[a_i^\dagger a_j\rho]\big)_{i,j}$. Since a two-mode covariance is diagonalized by a single rotation, we assume $K\ge3$.
Let $C_{\ell,t}$ denote the matrix after $t$ of the $m=K(K-1)/2$ pivot
operations in that sweep, so that
\begin{equation}
    C_{\ell,0}=C^{(\ell)}, \qquad C_{\ell,m}=C^{(\ell+1)}.
\end{equation}
Let $\off(X):=X-\diag(X)$ denotes the off-diagonal part of a Hermitian matrix $X$, and define off-diagonal residuals $s_\ell$ as:
    \begin{align}
           s_{\ell,t} := \Big( \sum_{i\neq j} |(C_{\ell,t})_{ij}|^2 \Big)^{1/2}, \qquad s_\ell:=s_{\ell,0}. \label{eq:jac_offnorm}
    \end{align}
Let $z_{\ell,t}$ and $r_{\ell,t}$ denote the pivot before and after the
operation at step $t$, respectively. Complex pivots are phase-aligned before the corresponding real Jacobi rotation is applied. Let $\lambda_{1},\ldots,\lambda_{K}$ be the eigenvalues of $C^{(\ell)}$, which in the specific case studied here are given by $N_1,\ldots,N_K$. Then the minimum spectral gap $\Delta$ is $\Delta := \min_{a\neq b}|\lambda_{a}-\lambda_{b}|$, which remains unchanged throughout each sweep and, by extension, for each application of the algorithm. Furthermore, intermediate eigenvalues of the pivot blocks are also interlaced by the invariant spectrum as per the Cauchy-interlacing theorem \cite{fisk2005shortproofcauchysinterlace}. 

\paragraph{Assumptions:}
To guarantee convergence, we need to make the following assumptions:
    \begin{itemize}
      \item[\textup{(A1)}] \emph{Nondegeneracy:} $\Delta>0$.
      \item[\textup{(A2)}] \emph{Local basin:} The initial off-diagonal norm satisfies $s_0 < \frac{\Delta}{K^2}.$
    \end{itemize}

\paragraph{Threshold procedure:}
 If $|z_{\ell,t}|\le\epsilon_{\mathrm{loc}}$, the pivot is skipped and
$r_{\ell,t}=z_{\ell,t}$. Otherwise, an angle
    $\widehat\theta_{\ell,t}\in[-\pi/4,\pi/4]$ is applied such that
    $|r_{\ell,t}|\le\epsilon_{\mathrm{loc}}$. Next, we define a global stoppage criterion, $\epsilon_{\mathrm{glob}}
    :=
    3K\epsilon_{\mathrm{loc}}$, and a stoppage sweep as:
\begin{equation}
     L
    :=
    \inf\left\{
        \ell\ge0:
        s_\ell\le\epsilon_{\mathrm{glob}}
    \right\}.
\end{equation} 

\begin{theorem}[Run-time of the algorithm]
    Assume \textup{(A1)--(A2)}, and the threshold procedure, choose $ 0<\epsilon_{\mathrm{loc}}
    \le
    \frac{\Delta}{6K^3}.$ Then, the stoppage sweep $L$ is
    \begin{equation}
         L
        =
        O\!\left(
            \log\log
            \frac{\Delta}{K^3\epsilon_{\mathrm{loc}}}
        \right)
    \end{equation}
such that, $  s_L
    \le
    \epsilon_{\mathrm{glob}}
    =
    3K\epsilon_{\mathrm{loc}}$. 
\label{thm_convergence_analysis}
\end{theorem}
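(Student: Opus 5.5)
We will prove the statement along the lines of the classical quadratic-convergence analysis of cyclic Jacobi methods (Henrici, Schönhage, Wilkinson), adapted to the threshold procedure. The sweep-wise recursion we aim for has the form
\begin{equation*}
    s_{\ell+1}\;\le\; \frac{c\,K\, s_\ell^2}{\Delta} \;+\; c'\sqrt{m}\,\epsilon_{\mathrm{loc}},
\end{equation*}
for moderate constants $c,c'$. Iterating it gives doubly exponential decay down to an $\epsilon_{\mathrm{loc}}$-floor, which is the source of the $\log\log$ in the statement.

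\textbf{Step 1: invariants within a sweep.} Each pivot operation is a unitary similarity acting on rows and columns $p,q$ only, so the Frobenius norm is invariant. Annihilating the pivot therefore changes the off-diagonal norm exactly as
\[
    s_{\ell,t}^2 = s_{\ell,t-1}^2 - 2|z_{\ell,t}|^2 + 2|r_{\ell,t}|^2,
\]
while all other off-diagonal entries are only rotated in pairs within rows and columns $p,q$. In particular $s_{\ell,t}^2\le s_\ell^2+2t\,\epsilon_{\mathrm{loc}}^2$, so by (A2) and the choice of $\epsilon_{\mathrm{loc}}$ all intermediate off-norms stay below roughly $\Delta/K^2$. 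By Weyl's inequality, the diagonal entries then stay within $s_{\ell,t}<\Delta/3$ of distinct eigenvalues. We order the eigenvalues consistently with the diagonal so that they keep their assignment throughout the sweep. Consequently every pivot block has diagonal gap at least $\Delta - 2s_{\ell,t}\ge \Delta/2$, and the rotation angle obeys
\[
    |\hat\theta_{\ell,t}|\le \frac{|\tan 2\hat\theta_{\ell,t}|}{2}\le \frac{2|z_{\ell,t}|}{\Delta}.
\]
Cauchy interlacing, already mentioned before the statement, gives the same gap control for the $2\times2$ blocks.

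\textbf{Step 2: quadratic bound per sweep (main obstacle).} The difficulty is that later rotations refill entries annihilated earlier in the sweep. Following Wilkinson's argument, we track each annihilated entry $(p,q)$ after its own rotation. It is modified only by subsequent rotations sharing an index, and each such modification is a rotation by an angle of size $O(|z|/\Delta)$ applied to another off-diagonal entry. The entry present at the end of the sweep is therefore bounded by the remnant $\epsilon_{\mathrm{loc}}$ plus a sum of products of the form $(\text{angle})\times(\text{off-diagonal entry})$. Summing the squares with Cauchy--Schwarz over the $m$ pivots, and using $\sum_t |z_{\ell,t}|^2\le s_\ell^2/2+m\epsilon_{\mathrm{loc}}^2$ from Step 1, yields the displayed recursion with $c$ a small multiple of $\sqrt{K}$ or $K$.

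Skipped pivots need separate treatment, since there the entry is not zeroed. Their contribution is at most $\epsilon_{\mathrm{loc}}$ each, and this is absorbed into the $\sqrt m\,\epsilon_{\mathrm{loc}}\le K\epsilon_{\mathrm{loc}}$ term.

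Pinning the constants so that (A2) makes the quadratic term contracting, namely $cK s_0/\Delta\le 1/2$, is the delicate bookkeeping. I would first try the parallel-ordering estimate of \cite{932f0815-1804-3f94-9d56-120cca8e1385,hari2017convergencecyclicquasicyclicblock} for block-cyclic orderings such as \cite{Whiteside1984ParallelJacobi}, which gives $c=O(K)$.

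\textbf{Step 3: iterating.} Set $x_\ell:=cK s_\ell/\Delta$. As long as $s_\ell>\epsilon_{\mathrm{glob}}=3K\epsilon_{\mathrm{loc}}$, the additive term is at most a third of $s_\ell$ only in a weak sense. We therefore split the analysis into two regimes.

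\emph{Quadratic regime.} While $cK s_\ell^2/\Delta\ge c'K\epsilon_{\mathrm{loc}}$, the recursion gives $x_{\ell+1}\le 2x_\ell^2$ with $x_0\le 1/4$ by (A2). Hence $x_\ell\le 2^{-2^\ell}$.

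\emph{Floor regime.} Once the floor dominates, the recursion gives $s_{\ell+1}\le 2c'K\epsilon_{\mathrm{loc}}$. With suitable constants, and using $\epsilon_{\mathrm{loc}}\le\Delta/(6K^3)$ to ensure that the floor lies below the basin, this is at most $3K\epsilon_{\mathrm{loc}}$.

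The first regime ends as soon as $2^{-2^\ell}\lesssim K^3\epsilon_{\mathrm{loc}}/\Delta$, that is, after $\ell=O(\log\log(\Delta/(K^3\epsilon_{\mathrm{loc}})))$ sweeps. One additional sweep then gives $s_L\le\epsilon_{\mathrm{glob}}$, which proves the claim.
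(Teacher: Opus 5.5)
Your skeleton (Frobenius bookkeeping per pivot, diagonal separation, tracking how later rotations refill already processed entries, a quadratic-plus-floor sweep recursion, repeated squaring) is the paper's. However, two ingredients that the statement as posed depends on are missing.

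The first is in Step~1. The chain $|\hat\theta_{\ell,t}|\le|\tan 2\hat\theta_{\ell,t}|/2\le 2|z_{\ell,t}|/\Delta$ holds for the exact Jacobi angle $\theta^*$, not for the implemented angle. The threshold procedure only gives some $\hat\theta\in[-\pi/4,\pi/4]$ with $|r_{\ell,t}|\le\epsilon_{\mathrm{loc}}$. Since $|r_{\ell,t}|=\tfrac{\rho_{\ell,t}}{2}|\sin 2(\hat\theta-\theta^*)|$, this a priori also allows $\hat\theta-\theta^*$ near $\pm\pi/2$. The paper excludes that branch by using the basin to get $|\theta^*|<\pi/12$, hence $|\hat\theta-\theta^*|<\pi/3$, and concludes $|\hat\theta-\theta^*|\le\eta_\ell=\pi\epsilon_{\mathrm{loc}}/(2g_\ell)$. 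This angle error adds a term $\sqrt2K(K-2)\eta_\ell s_\ell$ to the sweep bound. One must show it is at most $\tfrac12K\epsilon_{\mathrm{loc}}$, which is where $\epsilon_{\mathrm{loc}}\le\Delta/(6K^3)$ and $K\ge3$ enter. Together with the remnant term $K\epsilon_{\mathrm{loc}}$ it gives the floor $\tfrac32K\epsilon_{\mathrm{loc}}$, whose double is exactly $\epsilon_{\mathrm{glob}}=3K\epsilon_{\mathrm{loc}}$. A smaller point: you do not need $s_{\ell,t}^2\le s_\ell^2+2t\epsilon_{\mathrm{loc}}^2$. An executed pivot has $|r_{\ell,t}|\le\epsilon_{\mathrm{loc}}<|z_{\ell,t}|$, so $s_{\ell,t}$ is nonincreasing and all iterates stay in the (A2) basin without an induction through the recursion.

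The second is the constants, which are the actual content of the theorem (the thresholds $\Delta/K^2$ and $\Delta/(6K^3)$ and the factor $3$). Step~3 asserts $x_0\le\tfrac14$ ``by (A2)'' and $2c'\le3$ ``with suitable constants''. But $c,c'$ are outputs of Step~2, which you leave open (``a small multiple of $\sqrt K$ or $K$''). This matters because your reduction to $x_{\ell+1}\le2x_\ell^2$ is lossy. With a quadratic coefficient as large as the paper's $K(K-2)/(2g_\ell)$, (A2) yields only $x_0<\tfrac{K(K-2)}{2(K^2-2)}$, which tends to $\tfrac12$. Iterating $x_{\ell+1}\le2x_\ell^2$ from there costs about $\log_2K$ extra sweeps, so the count is no longer $O(\log\log(\Delta/(K^3\epsilon_{\mathrm{loc}})))$ with a $K$-independent constant.

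The paper avoids this by keeping the two terms separate in $x_\ell=K^2s_\ell/\Delta$: $x_{\ell+1}\le\tfrac32\beta+\tfrac12x_\ell^2$ with $\beta=K^3\epsilon_{\mathrm{loc}}/\Delta\le\tfrac16$. Then $x_1<\tfrac34$, $x_{\ell+1}\le x_\ell^2$ as long as $x_\ell\ge\sqrt{3\beta}$, and one more sweep gives $x\le3\beta$.

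Your Cauchy--Schwarz variant using $\sum_t|z_{\ell,t}|^2\le s_\ell^2/2+m\epsilon_{\mathrm{loc}}^2$ can in fact beat the paper's crude count of $2(K-2)$ refills, each bounded by $s_\ell^2/(4g_\ell)$. With the sharp bounds $|\sin\theta^*|\le|z_{\ell,t}|/g_\ell$ and $s_\ell/\sqrt2$ for the entry being mixed in, it gives a coefficient of about $\sqrt2(K-2)/g_\ell$, for which (A2) does imply $x_0\le\tfrac14$. With your looser angle bound $2|z_{\ell,t}|/\Delta$, (A2) no longer guarantees $x_0\le\tfrac14$ for small $K$. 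So the route is viable, but this bookkeeping has to be carried out rather than deferred.
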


The next theorem states the number of copies required by the sensing-assisted Jacobi algorithm.

\begin{theorem}[Sample complexity]
\label{thm:sample_complexity}
Let $\epsilon_{\mathrm{loc}}>0$ be the prescribed local residual accuracy, and let $\alpha_{\mathrm{loc}}\in(0,1)$ be the failure probability assigned to each unit-cell, which is related to the number $k$ of resource states consumed in each run of a unit cell via $\exp(-k\cdot C_\star\cdot \epsilon_\mathrm{loc}^2+\mathcal O(\log k))\leq \alpha_\mathrm{loc}$. A sufficient number of copies for the protocol to succeed is: 
   \begin{equation}
     N\!\left(
        \epsilon_{\mathrm{loc}},
        \alpha_{\mathrm{loc}}
    \right)
    = \mathcal O\!\left[ \frac{K}{C_\star\epsilon_{\mathrm{loc}}^{2}}
        \log\log\!\left(
            \frac{1}{\epsilon_{\mathrm{loc}}}
        \right)  \Bigg\{
            \log\!\left(
                \frac{1}{\alpha_{\mathrm{loc}}}
            \right) + D
        \Bigg\}
    \right], \quad D = 
            \log\!\left(
                1+
                \frac{1}
                     {C_{\star}\epsilon_{\mathrm{loc}}^{2}}
                \log\!\left(
                    \frac{1}{\alpha_{\mathrm{loc}}}
                \right)
            \right)
    \label{eq:total_copy_complexity_instance_local}
\end{equation}

\end{theorem}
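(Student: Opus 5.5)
The count will be assembled as (copies per unit-cell run) $\times$ (cell runs per sweep, counted sequentially per input copy) $\times$ (number of sweeps $L$). Theorem~\ref{thm_convergence_analysis} supplies $L$ under the hypothesis that every pivot operation meets $|r_{\ell,t}|\le\epsilon_{\mathrm{loc}}$. The sample-complexity question therefore reduces to two things. First, fix the per-cell $k$ so that this holds with probability at least $1-\alpha_{\mathrm{loc}}$. Second, account for how many cells consume fresh copies.

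\emph{Per-cell copy number.} From Theorems~\ref{thm:main} and \ref{thm:restricted-case}, the unit-cell estimators $(\hat M,\hat N,\hat\theta,\hat\phi)$ are consistent with covariance $\Theta(1/k)$. The residual pivot after the update is a smooth function of $(\hat\theta-\theta,\hat\phi-\phi)$: rotating by the estimated angles leaves an off-diagonal entry of order $(M-N)\,|\hat\theta-\theta|$ plus a phase term. Hence $|r_{\ell,t}|>\epsilon_{\mathrm{loc}}$ requires an angular deviation of order $\epsilon_{\mathrm{loc}}$. A large-deviation bound for the collective measurement, which is multinomial/thermal photon counting, then gives a failure probability $\exp(-kC_\star\epsilon_{\mathrm{loc}}^2+\mathcal O(\log k))$. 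This is exactly the hypothesised relation, so I take it as given, with $C_\star$ absorbing the spectral gap and the Fisher-information constants.

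Solving $kC_\star\epsilon_{\mathrm{loc}}^2-c\log k\ge\log(1/\alpha_{\mathrm{loc}})$ for $k$ is the main technical step. I would use the standard inversion of $x-c\log x\ge b$, namely $x\ge b+c\log(1+b)+O(1)$, with $x=kC_\star\epsilon_{\mathrm{loc}}^2$ and $b=\log(1/\alpha_{\mathrm{loc}})$. Rewriting the $\log k$ correction in terms of $x$ generates the extra term $D=\log\bigl(1+\tfrac{1}{C_\star\epsilon_{\mathrm{loc}}^2}\log\tfrac1{\alpha_{\mathrm{loc}}}\bigr)$. This gives
\begin{equation*}
k=\mathcal O\!\left(\frac{1}{C_\star\epsilon_{\mathrm{loc}}^2}\Bigl\{\log\tfrac{1}{\alpha_{\mathrm{loc}}}+D\Bigr\}\right).
\end{equation*}

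\emph{Counting cells.} In the parallel Whiteside-type schedule, a sweep consists of $\mathcal O(K)$ layers $\mathcal L_t$. Cells within a layer act on disjoint mode pairs, so they share the same copies of the $K$-mode state. Copies are consumed sequentially only across layers, which gives a factor $\mathcal O(K)$ per sweep. By Theorem~\ref{thm_convergence_analysis}, the number of sweeps is $L=\mathcal O(\log\log(\Delta/(K^3\epsilon_{\mathrm{loc}})))$, which is $\mathcal O(\log\log(1/\epsilon_{\mathrm{loc}}))$ once $\Delta,K$ are treated as fixed in the asymptotics. Multiplying the three factors yields \eqref{eq:total_copy_complexity_instance_local}.

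Success of the whole protocol follows from a union bound over the cells, with failure probability at most $mL\alpha_{\mathrm{loc}}$. The statement's bound is phrased per cell, so the target can be rescaled to $\alpha_{\mathrm{loc}}/(mL)$ if a global guarantee is wanted. This changes the bound only inside the logarithm.

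The main obstacle is justifying that the large-deviation exponent $C_\star$ is uniform along the trajectory. The cell parameters change from sweep to sweep, as the pivot blocks' effective $M-N$ shrink or grow. I would bound $C_\star$ below using (A1) together with Cauchy interlacing: the pivot-block eigenvalue gaps remain bounded away from zero inside the basin (A2), which keeps the Fisher information, and hence the exponent, uniformly controlled.
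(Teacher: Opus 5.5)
Your proposal is correct and follows the paper's proof essentially step for step. Like the paper, you invert the per-cell tail bound $\exp(-kC_\star\epsilon_{\mathrm{loc}}^2+c\log k)\le\alpha_{\mathrm{loc}}$ to get $k=\mathcal O\bigl(\tfrac{1}{C_\star\epsilon_{\mathrm{loc}}^2}\{\log(1/\alpha_{\mathrm{loc}})+D\}\bigr)$, multiply by the $\mathcal O(KL)$ sequential cell invocations with $L=\mathcal O(\log\log(1/\epsilon_{\mathrm{loc}}))$ from Theorem~\ref{thm_convergence_analysis}, and close with a union bound. Two points are in fact more careful than the paper: your justification of the $\mathcal O(K)$ per-sweep factor (cells in a layer act on disjoint mode pairs and share copies), and your union-bound count $mL$, where the paper only asserts $\mathcal O(KL)$ invocations and uses $R=KL/2$.
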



Different sensing mechanisms of a unit cell are possible, out of which we highlight four. The first one is heterodyne detection, the second one homodyne detection. As the third method we list the estimation procedure based on the \gls{povm} \eqref{def:omega-povm}. 
The fourth method is derived based on the question how to harness the quantum advantage identified during the development of the third method. This question can be separated into three parts: First, which parameter region is the one where quantum technology has provable advantage? Second, how can the optimal measurement be constructed in this domain? Finally, is today's technology sufficient?

As it turns out, the weak-light regime is the one where the proposed algorithm shows greatest potential. In this regime however, detection events are sparse. As we show below, this leads to a situation where unit cells that perform vacuum detection are close to optimal in the region of primary interest.

For implementation using state of the art technology, we assume three basic settings for the unit cell. Each measurement is given by a specific passive unitary transformation followed vacuum detection on both modes. The passive unitary transformations $U_x,U_y,U_z$ are defined by the interferometer settings $U_x=U(\pi/4)V_A(0)$, $U_y=U(\pi/4)V_A(-\pi/2)$, $U_z=U(0)V_A(0)$ while the vacuum detectors are modeled by $\{|0\rangle\langle0|,\eins-|0\rangle\langle0|\}$. 

Upon input of the state $\rho(M,N,\nu,\nu')$ the corresponding probabilities of the four different detection events can be calculated using \cite[Eqn. (7)]{Kim:20}:
\begin{align}
    p(0,0) &= \frac{1}{(M+1)(N+1)},\\
    p(0,1) &= \frac{\cos^2(\nu)N+\sin^2(\nu)M+MN}{(M+1)(N+1)},\\
    p(1,0) &= \frac{\cos^2(\nu)M+\sin^2(\nu)N+MN}{(M+1)(N+1)}. 
\end{align}
For the three different settings defined above, the respective values of $\cos^2(\nu_i)$ are given by $(1+\sin(2\theta)\cos(\phi))/2$, $(1+\sin(2\theta)\sin\phi))/2$ and $\cos^2(\theta)$. Thus the estimator 
\begin{align}
    \tilde{\mathbf n} = (\hat n_{A,x}-\hat n_{B,x}, \hat n_{A,y}-\hat n_{B,y}, \hat n_{A,z}-\hat n_{B,z})
\end{align}
which utilizes a total of $k=3$ copies of $\rho(M,N,\theta,\phi)$ to produce one estimate delivers, by averaging and normalization, an estimate $\omega$ of $\mathbf n=(\sin(2\theta)\cos(\phi),\sin(2\theta)\sin(\phi),\cos(2\theta))$. The proposed method now utilizes \gls{tdma} to deal with the incompatibility of measurements in the different spin directions. The convergence to the true value of $\mathbf n$ can be derived based on the Chernoff bound and is the content of Theorem \ref{thm:unit_cell_performance}. 

\begin{theorem}[Unit Cell Performance] \label{thm:unit_cell_performance}
    Let $\rho=V_A(\phi)U(\theta)(S_M\otimes S_N)U(\theta)^\dagger V_A(\phi)^\dagger$ with unknown parameters $N<M$, $\phi\in[0,2\pi)$, $\theta\in[0,\pi/2)$. Assume there are known constants $\Delta>0$, $B>0$ such that $M-N\geq\Delta>0$, $N>\Delta$ and $M<B$. 
    Assume the vector $\mathbf n = (\sin(2\theta)\cos(\phi),\sin(2\theta)\sin(\phi),\cos(2\theta))$ is estimated with a target precision of $1-\sqrt{7/8}>\epsilon>0$. Each method has a different asymptotic scaling denoted by $C_\mathrm{het}$ for heterodyne, $C_{\mathrm{hom}}$ for homdoyning, $C_\mathrm{spovm}$ for the spin \gls{povm} and $C_\mathrm{tdmapc}$ for the \gls{tdma} photon counting method. For the spin \gls{povm} we have a binary decision $\{\mathrm{abort},\mathrm{continue}\}$ where 
    \begin{align}
        \mathbb P(\|\omega-\mathbf n\|_2\geq\epsilon|\mathrm{continue})&\leq\exp(-k\cdot C\cdot\epsilon^2+\mathcal O(\log k))\label{eqn:standard-formula-for-error-exponent}\\
        \mathbb P(\mathrm{abort})&\leq\exp(-k\cdot C\cdot\epsilon^2+\mathcal O(\log k)).
    \end{align}
    For the other methods it holds $\mathbb P(\mathrm{abort})=0$. The constant $C$ is an element of $\{C_\mathrm{het},C_{\mathrm{hom}},C_\mathrm{spovm},C_\mathrm{tdmapc}\}$ and the latter constants are given by
    \begin{description}[
        align=left,
        labelwidth=3cm,
        leftmargin=3.5cm,
        labelsep=0.5cm
    ]
        \item[Heterodyne] $C_\mathrm{het}=\tfrac{\Delta^2}{3(B+1)^2}$
        \item[Homodyne] $C_\mathrm{hom}=\tfrac{\Delta^2}{48(B+1/2)^2}$
        \item[Spin \gls{povm}] $C_\mathrm{spovm}=\tfrac{\Delta^2}{16B(B+1)}$
        \item[\gls{tdma} Photon Counting] $C_\mathrm{tdmapc}=\tfrac{\Delta^2}{144B(B+1)^3}$
    \end{description}
    Further, an estimate $\omega$ of $\mathbf n$ that satisfies $\|\omega-\mathbf n\|\leq\epsilon$ leads to a local error of $|\Tr(a_i^\dagger b\rho')|\leq B\epsilon/2$, where $\rho'$ is the state after the unit cell applied the unitary rotation corresponding to $\omega$ on $\rho$.

\end{theorem}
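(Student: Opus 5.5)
The plan is to treat each of the four sensing methods in the same way. For each method, I would (i) write the single-copy statistic as an unbiased or rescaled estimator of a component of $\mathbf n$; (ii) bound its tails by a sub-exponential (Bernstein/Chernoff) inequality with explicit constants depending only on $\Delta$ and $B$; and (iii) combine the three components into a bound on $\|\omega-\mathbf n\|_2$.

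The common structure comes from the correlation matrix. For $\rho$ it is $C=U\,\mathrm{diag}(M,N)\,U^\dagger$, with
\[
\tfrac12\big(C_{AA}-C_{BB}\big)=\tfrac{M-N}{2}\cos(2\theta),\qquad
\mathrm{Re}\,C_{AB},\ \mathrm{Im}\,C_{AB}=\tfrac{M-N}{2}\sin(2\theta)\cos\phi,\ \tfrac{M-N}{2}\sin(2\theta)\sin\phi .
\]
Hence $\mathbf n=2\,(\mathrm{Re}C_{AB},\mathrm{Im}C_{AB},\tfrac12(C_{AA}-C_{BB}))/(M-N)$. Every method therefore estimates the vector $(M-N)\mathbf n/2$ and then normalizes. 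Normalizing by the estimated length rather than the true $M-N$ costs only a constant factor, because $M-N\ge\Delta$. The restriction $\epsilon<1-\sqrt{7/8}$ should be exactly what keeps the normalization step Lipschitz on the event where the unnormalized estimate is within a fixed fraction of its mean.

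For heterodyne, each copy gives complex Gaussian outcomes $(\alpha,\beta)$ with covariance $C+\eins$. The products $\bar\alpha\beta$ and $|\alpha|^2-|\beta|^2$ are sub-exponential with scale of order $B+1$, and Bernstein's inequality gives exponent $k\,\epsilon^2\Delta^2/(c(B+1)^2)$. For homodyne, the quadrature settings must be split between copies, and the variance scale is $B+1/2$, which accounts for the larger constant $48$. For TDMA photon counting, I would use the stated vacuum-detection probabilities $p(0,1)-p(1,0)=\cos(2\nu)(N-M)/((M+1)(N+1))$, which are Bernoulli events. Hoeffding then gives exponent proportional to $k\epsilon^2$ times the squared signal $((M-N)/((M+1)(N+1)))^2/3$, with the factor $1/3$ from splitting the copies among the three settings. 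Lower-bounding by $\Delta^2/(B+1)^4$ up to constants should reproduce $C_{\mathrm{tdmapc}}$ after the union bound over three components. I would not insist on exact constants, only on the form $\Delta^2/\mathrm{poly}(B)$ matching the statement.

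The spin POVM is the main obstacle. Here I would first condition on the measured $(t,j)$. The plan is to abort when $j/k$ falls below a threshold of order $\Delta/2$. Since $J/k\to(M-N)/2$, as in Theorem~\ref{thm:main}, the abort probability is exponentially small by a large-deviation bound for $N_{AB}$ and $J$. On the continue branch, I would use Lemma~\ref{lem:first-and-second-order-spin-momemnts}. The outcome $\omega$ under the coherent-state POVM has mean $\frac{J_{\mathbf n}}{J+1}$-weighted along $\mathbf n$, and the spin-coherent $Q$-function of the state is sharply concentrated, like $\exp(-j\,\mathrm{dist}^2)$ around $\mathbf n$. To make this precise I would use the representation of the collective state as a thermal mixture over irreps, $\rho^{\otimes k}=\bigoplus_{t,j}p_{t,j}\,\eins\otimes\rho_j$, with $\rho_j\propto (M/N)^{\mathbf n\cdot J}$-type weights restricted to spin $j$. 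I would then bound $\mathbb P(\|\omega-\mathbf n\|\ge\epsilon)$ directly from the Husimi function, which gives exponent $j\epsilon^2$ times a factor from the ratio $(M-N)/\sqrt{M(M+1)}$. Substituting $j\gtrsim k\Delta/2$ should give $C_{\mathrm{spovm}}=\Delta^2/(16B(B+1))$, with the $\mathcal O(\log k)$ term absorbing polynomial prefactors and the sum over $(t,j)$ outcomes.

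Finally, for the local-error claim, I would apply the rotation $W$ determined by $\omega$ and compute the off-diagonal entry of $WCW^\dagger$. This equals $\tfrac{M-N}{2}$ times the component of $\mathbf n$ orthogonal to the new $z$-axis, which is at most $\tfrac{M-N}{2}\|\omega-\mathbf n\|\le B\epsilon/2$, since an $SU(2)$ rotation acts on the Bloch vector by $SO(3)$.
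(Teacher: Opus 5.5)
Your overall architecture matches the paper's: per-method concentration of an unnormalized multiple of $\mathbf n$, then normalization and a union bound; for the spin POVM, conditioning on $(t,j)$ with an abort rule and a Husimi-function tail; and the $SO(3)$ argument for the local error. Using Bernstein for the sub-exponential heterodyne products is an adequate substitute for the paper's Hanson--Wright step.

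The genuine gap is the TDMA photon-counting part. Hoeffding's inequality only sees the range of $Y_{i,r}\in\{-1,0,1\}$. With $m=k/3$ copies per setting, $\lambda=\frac{M-N}{(M+1)(N+1)}$ and deviation $t=\lambda\epsilon/(2\sqrt3)$, it gives the exponent $k\lambda^2\epsilon^2/72\ge k\Delta^2\epsilon^2/(72(B+1)^4)$. This is not $C_{\mathrm{tdmapc}}\epsilon^2$ ``up to constants''. The ratio of your exponent to the claimed one is $2B/(B+1)$, so your bound is weaker for every $B<1$ and loses a full factor $1/B$ as $B\to0$. That is exactly the weak-light regime where the theorem has content. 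The missing idea is that detection events are rare: $\mathrm{Var}(Y_{i,r})\le\mathbb E Y_{i,r}^2=v:=\frac{M+N+2MN}{(M+1)(N+1)}<2B(B+1)$. You need a variance-sensitive inequality to exploit this. Bernstein, as used in the paper, gives the exponent $k\lambda^2\epsilon^2/\bigl(72(v+2t/3)\bigr)$. Since $v\ge\lambda$, the linear term only costs a factor $1+O(\epsilon)$. The bound $\lambda^2/v=\frac{(M-N)^2}{(M+1)(N+1)(M+N+2MN)}\ge\frac{\Delta^2}{2B(B+1)^3}$ is the origin of $144B(B+1)^3$.

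Your spin-POVM step gets the same $B$-dependence wrong. On a spin-$j$ block the conditional state is mixed, $\propto q^{J_{\mathbf n}}$ with $q=\frac{M(N+1)}{N(M+1)}$; it is not the coherent state $|j,\mathbf n\rangle$. Its Husimi function is therefore $q^j\bigl(\cos^2(\nu/2)+q^{-1}\sin^2(\nu/2)\bigr)^{2j}$. Integrating over the cap gives $\mathbb P(\nu\ge\epsilon\mid j,t)\le\bigl(1-(1-1/q)\sin^2(\epsilon/2)\bigr)^{2j+1}$, with $1-1/q=\frac{M-N}{M(N+1)}\ge\frac{\Delta}{B(B+1)}$. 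The rate is $(2j+1)(1-1/q)\sin^2(\epsilon/2)$. Your factor $(M-N)/\sqrt{M(M+1)}$ would instead give $\Delta^2/\sqrt{B(B+1)}$, which is inconsistent with the target.

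The abort threshold must also be a fixed fraction strictly below $\Delta/2$; the paper uses $j\ge k\Delta/6$. If $M-N=\Delta$, then $J/k\to\Delta/2$ and aborting would not be exponentially rare, so you cannot substitute $j\gtrsim k\Delta/2$.

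The hypothesis $\epsilon<1-\sqrt{7/8}$ enters here, through $\sin^2(\epsilon/2)\ge(1-\epsilon)^2\epsilon^2/4\ge\tfrac{7}{32}\epsilon^2$. Together with $2j+1\ge k\Delta/3$ this yields the exponent $\tfrac{7}{96}\,k\Delta^2\epsilon^2/(B(B+1))\ge kC_{\mathrm{spovm}}\epsilon^2$. The hypothesis is not needed for normalization, which is Lipschitz unconditionally since $\bigl\|a/\|a\|-b/\|b\|\bigr\|\le2\|a-b\|/\|b\|$.

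For the abort probability you can bypass the paper's Schur--Weyl multiplicity estimates. $J\ge J_z$ on common eigenvectors, and $J$ commutes with $U(\theta,\phi)^{\otimes k}$. Hence a classical Chernoff bound on $N_A-N_B$ under $\rho_0$ suffices.
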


\begin{figure}[t]
    \centering
    \includegraphics[width=\linewidth]{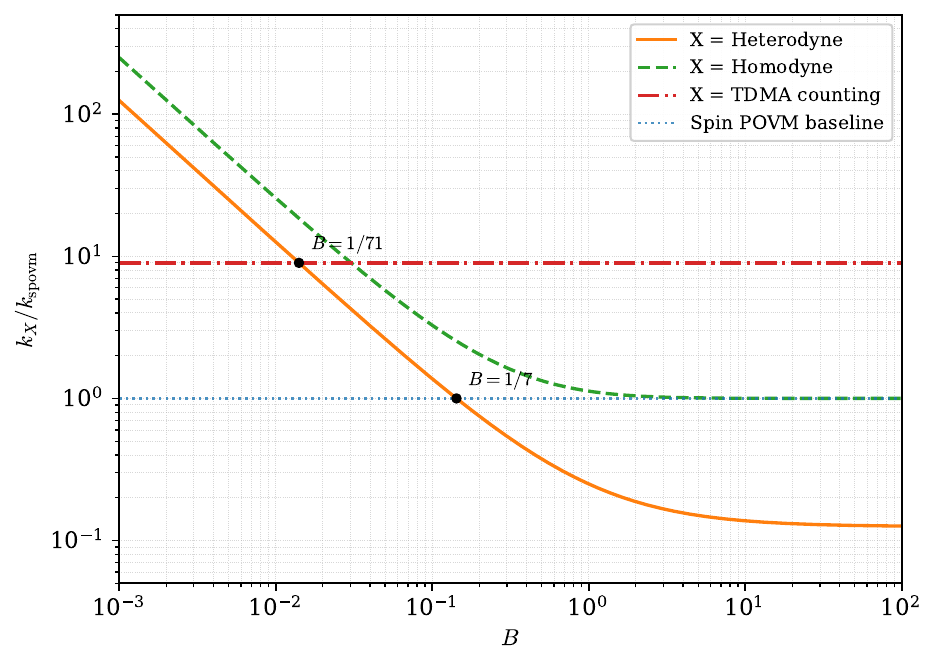}
    \caption{Unit-cell sample-complexity comparison. The ratio $k_X/k_{\rm spovm}$
    of copies required by method $X$ and by the spin POVM is shown against $B$. Other constants such as the gap $\Delta$, precision $\varepsilon$ and confidence
    $\alpha$ cancels in the ratio, so the curves are universal. Similar regions of advantage were already observed in \cite{noetzelMunarVallespir}.}
    \label{fig:unitCell_comparison}
\end{figure}


The first immediate observation is that, for small bounds $B\ll1$, the shot noise contributions in the denominators of $C_\mathrm{het}$ and $C_\mathrm{hom}$ prevent the respective exponent $C$ from scaling beyond a constant, while the exponents of the spin \gls{povm} and the \gls{tdma} photon counting grow as $\sim1/B$, thus leading to increased performance. 

The second observation we make is that, while the photon counting estimator is worse than the optimal one by almost an order of magnitude, it captures the essential scaling extremely well, with a denominator of $144B(B+1)^3$ versus the $16B(B+1)$ we derived for the fully coherent spin \gls{povm} \eqref{def:(theta,phi)-estimator}. For the small values of $B$ we are interested in, the added effort for realizing measurements coherently over many blocks appears debatable given these scalings.

\paragraph{Total-network comparison: }
To properly compare our approach to the state of the art and obtain a clear view of the actual region of advantage, we need one more comparison: the one with a system that performs heterodyne detection on all $K$ received modes, followed by analysis of the covariance matrix using conventional computing systems. Under the favorable assumption of perfect operation for such a device, its only error is due to statistical deviations. The statistical properties of Gaussian distributions are well known and therefore underpin the following analysis.

A cyclic sweep visits every pivot and each rotation refills the rows and columns it touches, so a total of $K(K-1)/2$ off-diagonal
entries. Aggregating a uniform per-entry bound in the sense of the off-diagonal error \eqref{eq:jac_offnorm} gives $\epsilon_{\mathrm{glob}}\leq\sqrt{K(K-1)}\,\epsilon_{\mathrm{loc}}$, so
\begin{equation}\label{eq:local-tolerance}
    \epsilon_{\mathrm{loc}} = \frac{\epsilon_{\mathrm{glob}}}{\sqrt{K(K-1)}}
\end{equation}
suffices to certify a desired target accuracy. Suppressing logarithmic factors, $\widetilde{O}(\cdot)$, each unit-cell based approach has, according to \eqref{eq:total_copy_complexity_instance_local}, a resource consumption of 
\begin{align}
    N_\mathrm{sens}(\epsilon_\mathrm{glob},\alpha) = \widetilde{O}\!\left(\frac{K^{3}B^{2}}
        {C_\mathrm{sens} \cdot \epsilon_{\mathrm{glob}}^{2}}
    \right),
\end{align}
where $\mathrm{sens}\in\{\mathrm{het},\mathrm{hom},\mathrm{spovm},\mathrm{tdmapc}\}$ is one of the four methods discussed in Theorem \ref{thm:unit_cell_performance} and we replaced the (unknown) spectral width $W$ with the (known) upper bound $B$. In comparison, the expected error using heterodyne detection directly on all $K$ modes satisfies 
\begin{equation}
    N_{\mathrm{g-het}} = \widetilde{O}\!\left(\frac{K^{2}(B+1)^{2}}{\epsilon_{\mathrm{glob}}^{2}}\right),
\end{equation}
 For the proof, see Eq. 14 of \cite{fanizza2026efficienthamiltonianstructuretrace} or Appendix E of \cite{Bittel_2025}. For $\mathrm{sens}\in\{\mathrm{hom},\mathrm{het}\}$ the ratio of the two costs is then given by
\begin{align}
    \frac{N_{\mathrm{het}}}{N_{\mathrm{g-het}}} =\widetilde{O}\!\left( K\Big(\frac{B}{\Delta}\Big)^2 \right)  \qquad \frac{N_{\mathrm{hom}}}{N_{\mathrm{g-het}}}  = \widetilde{O}\!\left( K\frac{B^2(B+1/2)^2}{(B+1)^2\Delta^2} \right)
\end{align}
For the case of $\mathrm{sens}\in\{\mathrm{spovm},\mathrm{tdmapc}\}$ we have 
\begin{equation}\label{eq:ratio}
    \frac{N_{\mathrm{spovm}}}{N_{\mathrm{g-het}}} = \widetilde{O}\!\left[  K\,\frac{B^{3}}{\Delta^{2}}(B+1)^{-1}
    \right],\qquad \frac{N_{\mathrm{tdmapc}}}{N_{\mathrm{g-het}}} = \widetilde{O}\!\left[  K\,\frac{B^{3}}{\Delta^{2}}(B+1)
    \right].
\end{equation}
Since $\Delta/B<1$ must hold, this shows that both the $\mathrm{spovm}$ and the $\mathrm{tdmapc}$ unit cell-based approach provides an advantage which scales as $\sim1/B$ when $B$ goes to zero. However, the price for this advantage is coming at an increase in cost which scales linear with the number $K$ of modes. As pointed out before, the main advantage of the $\mathrm{spovm}$- based method over the $\mathrm{tdmapc}$ one is a factor of $9$ reduction in the error exponent \eqref{eqn:standard-formula-for-error-exponent}, which is however invisible in above analysis.

\begin{figure}[t]
    \centering
    \includegraphics[width=\linewidth]{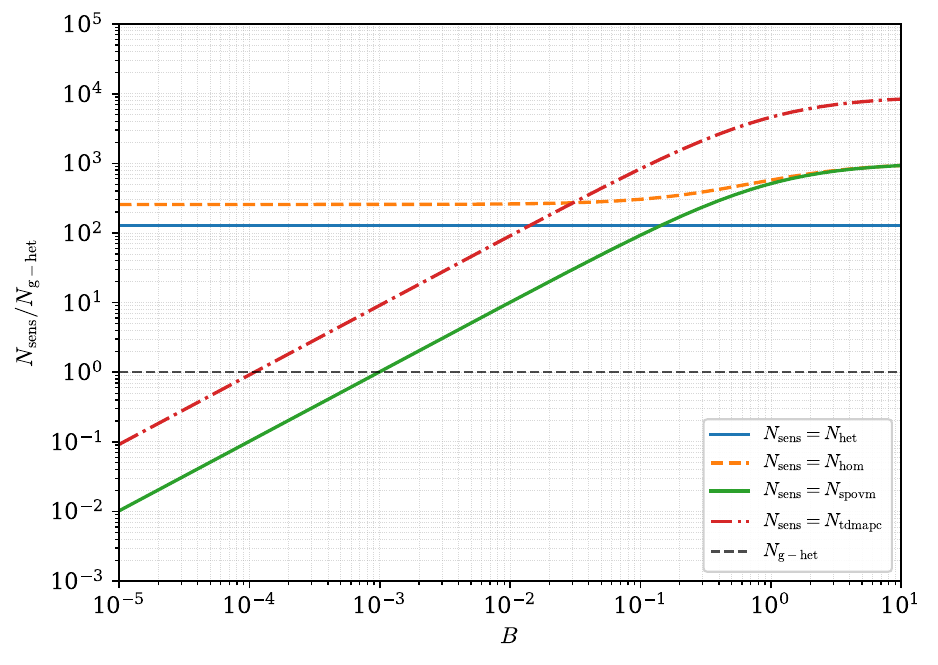}
    \caption{ The ratio
    $N_{\rm sens}/N_{\rm g-het}$ of the total number of copies required by the
    unit-cell-based Jacobi algorithm and by global heterodyne tomography on all
    $K$ modes is shown as a function of the brightness $B$, for $K=16$. The dashed line marks the break-even point
    $N_{\rm sens}=N_{\rm g-het}$, and curves below it indicate a region of potential quantum
    advantage. $N_{\rm tdmapc}$  and $N_{\rm spovm}$ reaches this region at low brightness.}
    \label{fig:network_comparison}
\end{figure}

Finally, in the weak-photon limit, \(B\ll 1\), we obtain
\begin{equation}
    N_{\mathrm{g\text{-}het}}
    =
    \widetilde{O}\!\left(
        \frac{K^{2}}
        {\epsilon_{\mathrm{glob}}^{2}}
    \right),
    \qquad
    N_{\mathrm{spovm}}
    =
    \widetilde{O}\!\left(
        \frac{K^{3}B^{3}}
        {\Delta^{2}\epsilon_{\mathrm{glob}}^{2}}
    \right).
\end{equation}
Indeed, using \(B/(B+1)\simeq B\), the ratio
\eqref{eq:ratio} reduces to $ B
    \lesssim
    \left(\frac{\Delta^{2}}{K}\right)^{1/3}.$ The heterodyne cost has saturated at its vacuum-noise floor, while the $\mathrm{spovm}$ and $\mathrm{tdmapc}$ cost still fall linearly in $B$, which is what opens the opportunity for quantum technology to provide advantage.

These different ratios are plotted in Fig. \ref{fig:network_comparison}. The dashed line marks the break-even point $N_{\rm sens}=N_{\rm g-het}$; curves below it indicate a window of quantum advantage. Only the spin POVM and TDMA photon counting reach it, and only at the weak photon limit.

\section{Proofs of the Main Statements}
  We start with the proof of Theorem \ref{thm:restricted-case}, during which we utilize the main technical ingredients, which we then extend to cover the proof of Theorem \ref{thm:main}. Afterward, in section \ref{sec:convergence_and_complexity_analysis}, we prove the convergence guarantees, which are the content of Theorem \ref{thm_convergence_analysis}. Finally, we prove the sample complexity analysis Theorem \ref{thm:unit_cell_performance} in section \ref{sec:quantum_advantage}. The proof of these theorems builds on several lemmas. For readability, the proof of these lemmas is shifted to Appendix \ref{appendix:technical_proofs}.
\subsection{Proof of Theorem \ref{thm:main} and Theorem \ref{thm:restricted-case} }

    \begin{proof}[Proof of Theorem \ref{thm:restricted-case}]
        We structure the proof into the three statements of the theorem, starting with upper bounds on the entries of the covariance matrix. As we show later, these bounds match at first order with the lower bounds provided by \cite{ragy2016compatibility}, thus deriving the upper bounds will be sufficient to proving the claimed structure of $\cov$.
            \paragraph{Covariance Matrix:} For the covariance bounds of Theorem \ref{thm:restricted-case}, we use \glspl{pvm} and therefore write the covariance of obtaining outcomes $(A,B))$, which is linear in both its entries, as
                \begin{align}
                    \cov(X,Y)=\Tr(\rho XY) - \Tr(\rho X)\Tr(\rho Y).
                \end{align}
                For the off-diagonal terms, we observe that by the Cauchy-Schwarz inequality we have
                \begin{align}
                    \cov(X,X)&\leq\Tr(\rho X^2)\label{ineq:var}\\
                    |\cov(X,Y)|&\leq \sqrt{\Tr(\rho X^2)\Tr(\rho Y^2)}\label{ineq:cov}.
                \end{align} 
                This allows us to prove (upon defining $D_z:=J-J_z$) bounds such as: 
                \begin{align}
                    \cov&(\hat M,\hat N)=\tfrac{1}{k^2}\cov(\tfrac{1}{2}N_{AB}+J,\tfrac{1}{2}N_{AB}-J)\\
                        &=\tfrac{1}{k^2}\cov(N_{A}+D_z,N_{B}-D_z)\\
                        &= \tfrac{1}{k^2}\left[\cov(N_{A},N_{B})-\cov(N_{A},D_z)+\cov(D_z,N_{B})-\cov(D_z,D_z)\right]\\
                        &\leq \tfrac{1}{k^2}\left[|\cov(N_{A},D_z)|+|\cov(D_z,N_{B})|+|\cov(D_z,D_z)|\right]\\
                        &\leq \tfrac{1}{k^2}\left[\sqrt{\svar(N_{A})\svar(D_z)}+\sqrt{\svar(D_z)\svar(N_{B})}+\svar(D_z)\right]\\
                        &\leq \tfrac{1}{k^2}\left[\sqrt{k(M^2+M)}+\sqrt{k(N^2+N)}+\sqrt{\Tr(D_z^2\rho_0)}\right]\sqrt{\Tr(D_z^2\rho_0)}\\
                        &=\mathcal O(k^{-3/2})
                \end{align}
                where the last line follows since both $N_{AB}$ and $J$ commute with $U(\theta)^{\otimes k}$ such that everything can be computed relative to $\rho_0=\rho(M,N,\theta)$, and in addition from the following Lemma:
                \begin{lemma}\label{lem:bound-on-Dz}
                    For the operator $D_z=J - J_z$ it holds for all $t\geq1$
                    \begin{align}
                        \Tr(D_z^tS_N\otimes S_M) &= \mathcal O(1).
                    \end{align}
                    in particular the constant is given by the polylogarithm $L_{-t}(1/q)$ \cite{polylogarithm} for $q=M(N+1)/N(M+1)$. Further, we have 
                    \begin{align}
                        \partial_M\Tr(D_z\rho_0)&=\mathcal O(\sqrt{k}),\qquad \partial_N\Tr(D_z\rho_0)=\mathcal O(\sqrt{k}).
                    \end{align}
                \end{lemma}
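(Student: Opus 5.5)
The plan is to exploit the fact that, in the frame $\theta=0$, the $k$-copy state is a function of the collective operators $N_{AB}$ and $J_z$ only. Then $D_z=J-J_z$ has, inside every spin block, a truncated geometric distribution whose moments do not depend on $k$.

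\emph{Step 1: write $\rho_0$ as a function of $N_{AB}$ and $J_z$.} As in the statement, I will work with $\rho_0=(S_M\otimes S_N)^{\otimes k}$, the $\theta=0$ state that the preceding argument reduces to. Put $q_A=M/(M+1)$ and $q_B=N/(N+1)$. Since each thermal state is $S_M=(1-q_A)q_A^{a^\dagger a}$, we get
\begin{align}
\rho_0=(1-q_A)^k(1-q_B)^k\,q_A^{N_A}q_B^{N_B}=c_k\,(q_Aq_B)^{N_{AB}/2}\,(q_A/q_B)^{J_z},
\end{align}
using $N_A=N_{AB}/2+J_z$ and $N_B=N_{AB}/2-J_z$. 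Note that $q_A/q_B=q=M(N+1)/(N(M+1))>1$, because $M>N$. Moreover, $q\geq 1+\Delta/(B'(B'+1))$ for any bound $B'\geq M$, so $1/q$ stays uniformly below $1$.

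\emph{Step 2: decompose into blocks.} By \eqref{eqn:commutators}, $N_{AB}$, $\mathcal J$ (hence $J$) and $J_z$ commute. Decompose $\mathcal H=\bigoplus_{t,j}\mathbb C^{\mu_{j,t}}\otimes\mathcal H_j$, where $J$ acts as $j$ and $J_z$ acts as $\eins\otimes J_z^{(j)}$ with eigenvalues $m=-j,\ldots,j$. On the $(t,j)$ block, $\rho_0$ is proportional to $\eins\otimes q^{J_z^{(j)}}$, and $D_z$ acts as $\eins\otimes(j-J_z^{(j)})$, with eigenvalues $d=j-m\in\{0,\ldots,2j\}$. Conditioned on the outcome $(t,j)$, the law of $d$ is therefore $P(d\mid t,j)=q^{-d}/\sum_{d'=0}^{2j}q^{-d'}$. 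The multiplicity $\mu_{j,t}$ cancels because it is independent of $m$.

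\emph{Step 3: bound the moments.} The normalising sum is at least $1$, so
\begin{align}
\mathbb E[d^t\mid t,j]\leq\sum_{d\geq0}d^t q^{-d}=L_{-t}(1/q),
\end{align}
uniformly in $j$, $t$ and $k$. Averaging over $(t,j)$ gives $\Tr(D_z^t\rho_0)\leq L_{-t}(1/q)=\mathcal O(1)$. The constant is uniform over the parameter region by the gap estimate on $q$ from Step 1.

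\emph{Step 4: the derivatives.} Write $\partial_M\Tr(D_z\rho_0)=\Tr(D_z\,\partial_M\rho_0)$. Since $\rho_0$ commutes with $D_z$ and is diagonal in the joint eigenbasis, this equals the classical covariance $\cov(D_z,\ell_M)$ with the score
\begin{align}
\ell_M=\partial_M\log\rho_0=\frac{N_A}{M}-\frac{N_A+k}{M+1}.
\end{align}
The score has mean zero and variance $k/(M(M+1))$. Cauchy--Schwarz gives $|\partial_M\Tr(D_z\rho_0)|\leq\sqrt{\svar(D_z)}\sqrt{k/(M(M+1))}=\mathcal O(\sqrt k)$, using Step 3 with $t=2$. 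The argument for $\partial_N$ is identical with $N_B$ in place of $N_A$.

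The main point requiring care is Step 2. One must check that the irreducible decomposition under the collective $SU(2)$ (Schwinger) action is compatible with the photon-number grading, and that $J_z$ acts on every irreducible copy with the standard spectrum $\{-j,\ldots,j\}$ and multiplicity independent of $m$. This is what makes $\rho_0$ restrict to a pure geometric weight on each block. The rest is elementary summation.
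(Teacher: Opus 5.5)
Your proposal is correct and is essentially the paper's proof. The paper likewise writes $\rho_0=c\,(xy)^{N_{AB}/2}q^{J_z}$, decomposes into $(n,j,m)$ blocks and substitutes $d=j-m$ to obtain the weight $x^{n/2+j}y^{n/2-j}q^{-d}\mu_{n,j}$. It then bounds $\sum_{d=0}^{2j}d^tq^{-d}$ by $L_{-t}(1/q)$ and uses that the remaining sum is at most $1$; your truncated-geometric conditional distribution is the same computation. For the derivatives it writes $\partial_M\Tr(D_z\rho_0)=\cov(D_z,N_A)/(M(M+1))$ and applies Cauchy--Schwarz with $\svar(N_A)=kM(M+1)$, which is exactly your score argument.

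Your observation that $q-1=(M-N)/(N(M+1))\geq\Delta/(B'(B'+1))$ makes the constant uniform over the parameter region is a small addition the paper does not state, but it is consistent with how the lemma is used.
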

                We note that Lemma \ref{lem:bound-on-Dz} already establishes consistency and local unbiasedeness of the estimators $\hat M$ and $\hat N$ \cite{DemkowiczGoreckiGuta2020} by the commutation rules \ref{eqn:commutators}. 
                The variance of $\hat M$ and $\hat N$ can be bounded by noting again that the measurement commutes with $U(\theta)^{\otimes k}$, so that all calculations can be carried out with respect to $\rho_0$. 
                \begin{align}
                    \mathrm{var}(\hat M)
                        &=\svar(\hat M)_{\rho_0}\\
                        &=\tfrac{1}{k^2}\mathrm{var}(N_A+D_z)\\
                        &=\tfrac{1}{k^2}[\mathrm{var}(N_A)+2\cdot\mathrm{cov}(N_A,D_z)+\svar(D_z)]\\
                        &\leq\tfrac{1}{k^2}[\mathrm{var}(N_A)+2\sqrt{\mathrm{var}(N_A)\mathrm{var}(D_z)}+\Tr(D_z^2\rho_0)]\\
                        &\leq\tfrac{1}{k^2}[k\cdot M(M+1) + \mathcal O(\sqrt{k})]
               \end{align}
               where we used the fact that $\svar(S_M)=M(M+1)$ as well as Lemma \ref{lem:bound-on-Dz} and the elementary properties of $\cov$ \eqref{ineq:cov} and $\svar$ \eqref{ineq:var}. By applying similar steps, we arrive at 
               \begin{align}
                    \svar(\hat M) &= \tfrac{1}{k}\cdot M(M+1) + o(1/k)\\
                   \mathrm{var}(\hat N)
                        &=\tfrac{1}{k}\cdot N(N+1) + o(1/k).
               \end{align}
               For the variance of the estimator of $\theta$ alone we start first with that for $\tau$. We employ the following Lemma \ref{lem:alternative-representations}
                \begin{lemma}[Alternative Representations]\label{lem:alternative-representations}
                    The measurement operators can be expressed in the following alternative forms:
                    \begin{align}
                        \hat M &= (N_{A}+D_z)/k\label{eqn:hatM}\\
                        \hat\tau &= \tau + U(\theta)^{\dagger\otimes k}\big(\tfrac{s}{2}J_xJ^{-1}-\tfrac{c}{2}D_zJ^{-1}\big)U(\theta)^{\otimes k}\label{eqn:hatTau}
                    \end{align}
                    where $J_x:=(J_++J_-)/2$ and $D_z:=J - J_z$.
                \end{lemma}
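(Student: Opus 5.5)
The proof of Lemma~\ref{lem:alternative-representations} is essentially algebraic. It rests on three facts: $N_{AB}$ and $J_z$ are linear combinations of $N_A$ and $N_B$; the operator $J$ is a function of the Casimir operator $\mathcal J$; and conjugation by $U(\theta)^{\otimes k}$ acts as an $SU(2)$ rotation on the collective Schwinger operators. I would handle the two identities separately.

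For \eqref{eqn:hatM}, I would start from the definition $\hat M=(\tfrac12 N_{AB}+J)/k$. Using $N_{AB}=N_A+N_B$ and $J_z=\tfrac12(N_A-N_B)$, we get $\tfrac12 N_{AB}=N_A-J_z$. Hence $\tfrac12 N_{AB}+J=N_A+(J-J_z)=N_A+D_z$, and this is the claim. The same manipulation gives $\hat N=(N_B-D_z)/k$, which is the form already used in the covariance computation above.

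For \eqref{eqn:hatTau}, I would first record the transformation law of the Schwinger operators. Since $U(\theta)=\exp(\theta(a^\dagger b-ab^\dagger))$ acts identically on every copy, $U(\theta)^{\otimes k}$ equals $\exp(\theta(J_+-J_-))$, which is $\exp(2\mathbbm{i}\theta J_y)$ up to the sign convention for $J_y$. It is therefore a rotation by angle $2\theta$ about the $y$-axis. The commutation relations \eqref{eqn:commutators}, extended to $[J_y,J_z]=\mathbbm{i}J_x$ and its cyclic versions, then give
\begin{align}
    U(\theta)^{\otimes k}\,J_z\,U(\theta)^{\dagger\otimes k}=c\,J_z+s\,J_x,\qquad c=\cos(2\theta),\ s=\sin(2\theta).
\end{align}
The signs here should be fixed by checking against the single-copy action on $a,b$. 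Because $[\mathcal J,U(\theta)^{\otimes k}]=0$, the operator $J$ and its inverse on $\{J>0\}$ commute with the rotation and with $J_x,J_z$, so expressions such as $J_xJ^{-1}$ are unambiguous Hermitian operators.

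Next I would use that the measurement $\hat\tau=\tfrac12+J_z/(2J)$ performed on $\rho=U\rho_0U^\dagger$ has the same statistics as the operator $U^{\dagger\otimes k}(\tfrac12+J_z/(2J))U^{\otimes k}$ evaluated in $\rho_0$. Inside the conjugation, I would insert the rotated form $cJ_z+sJ_x$ and then substitute $J_z=J-D_z$. This gives
\begin{align}
    \tfrac12+\tfrac{c}{2}+\tfrac{s}{2}J_xJ^{-1}-\tfrac{c}{2}D_zJ^{-1}.
\end{align}
The constant $\tfrac12(1+\cos 2\theta)=\cos^2\theta$ is identified with $\tau$, which matches the convention implicit in $\hat\theta$.

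I expect the main obstacle to be bookkeeping rather than substance, in two places. The first is getting the direction of conjugation and the signs of $s$ consistent with the paper's definition of $U(\theta)$ and of $\tau$. The second is the $j=0$ sector, where $J^{-1}$ is undefined. I would treat $J^{-1}$ as the pseudo-inverse, or equivalently use the clipping convention from \eqref{def:(theta,phi)-estimator}, and note that the $j=0$ event has probability exponentially small in $k$. This exceptional sector then does not affect the $o(1/k)$ statements in which the lemma is used.
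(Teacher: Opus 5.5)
Your route is the paper's. The $\hat M$ identity via $\tfrac12 N_{AB}=N_A-J_z$ is exactly what the paper does. For $\hat\tau$ the paper also uses only four ingredients: the rotation law for $J_z$ under $U(\theta)^{\otimes k}$, $[J,U(\theta)^{\otimes k}]=0$, the substitution $J_z=J-D_z$, and $\tau=(1+c)/2$.

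\textbf{The gap.} The step you deferred, the direction of conjugation and the sign of $s$, is where your write-up does not hold together, so it is more than bookkeeping. Take your own correct identification $U(\theta)^{\otimes k}=\exp(2\mathbbm{i}\theta J_y)$ together with $[J_y,J_z]=\mathbbm{i}J_x$. This gives $U^{\otimes k}J_zU^{\dagger\otimes k}=cJ_z-sJ_x$, equivalently $U^{\dagger\otimes k}J_zU^{\otimes k}=cJ_z+sJ_x$. A single-copy check confirms it: $U^\dagger aU=\cos\theta\,a+\sin\theta\,b$, which reproduces $\Tr(\rho\,a^\dagger a)=M\cos^2\theta+N\sin^2\theta$. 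So the law you display, which coincides with the one asserted in the paper's proof, has the wrong sign of $s$ for the paper's $U(\theta)$. Your next step substitutes $cJ_z+sJ_x$ for $U^{\dagger\otimes k}J_zU^{\otimes k}$ inside $U^{\dagger\otimes k}\hat\tau U^{\otimes k}$. That step is the correct one, but it is inconsistent with your displayed law whenever $s\neq 0$.

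What your computation actually establishes is
\[
U(\theta)^{\dagger\otimes k}\,\hat\tau\,U(\theta)^{\otimes k}=\tau+\big(\tfrac{s}{2}J_x-\tfrac{c}{2}D_z\big)J^{-1}\qquad\text{on }\{J>0\},
\]
i.e.\ $\hat\tau=\tau+U(\theta)^{\otimes k}(\cdots)U(\theta)^{\dagger\otimes k}$. The conjugation is reversed relative to the printed lemma.

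\textbf{Which form is right.} The reversed form is what the paper actually needs when it evaluates the moments of $\hat\tau$ in $\rho_0$ for $\rho=U\rho_0U^\dagger$. The printed form $\hat\tau=\tau+U^{\dagger\otimes k}(\cdots)U^{\otimes k}$ is equivalent to $U^{\otimes k}J_zU^{\dagger\otimes k}=cJ_z+sJ_x$. It therefore holds literally only under the opposite convention $\theta\to-\theta$. In that convention it follows from the paper's one-line argument: write $J_z=U^{\dagger\otimes k}(cJ_z+sJ_x)U^{\otimes k}$ directly inside $\hat\tau=\tfrac12+\tfrac12J_zJ^{-1}$. You should state which of the two identities you prove and fix the sign of $s$ to match.

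\textbf{Why the slip is harmless downstream.} In $\rho_0$ the operators $N_A$, $J$ and $D_z$ are diagonal in the $|j,m\rangle$ basis, while $J_x$ shifts $m$. So every term odd in $J_x$ vanishes, for example $\Tr(J_xJ^{-1}D_zJ^{-1}\rho_0)$ and the odd terms in the fourth-moment bound. Hence $s\to-s$ changes nothing in the variance, covariance or fourth-moment computations.

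Your treatment of the $j=0$ sector improves on the paper. The paper silently ignores that on $\ker J$ the two sides of the identity differ by the constant $c/2$.
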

                which again allows us to calculate the variance with respect to $\rho_0=\rho(M,N,0)$ instead of $\rho(M,N,\theta)$:
               \begin{align}
                   \svar(\hat\tau)&=\svar(\tau + U(\theta)^{\dagger\otimes k}\big(\tfrac{s}{2}J_xJ^{-1}-\tfrac{c}{2}D_zJ^{-1}\big)U(\theta)^{\otimes k})\\
                        &=\svar(\tau + \big(\tfrac{s}{2}J_xJ^{-1}-\tfrac{c}{2}D_zJ^{-1}\big))_{\rho_0}\\
                        &=\Tr(\big(\tfrac{s}{2}J_xJ^{-1}-\tfrac{c}{2}D_zJ^{-1}\big)^2\rho_0)\\
                        &=\frac{s^2}{4}\Tr(J_x^2J^{-2}\rho_0) + \frac{c^2}{4}\Tr(D_z^2J^{-2}\rho_0) - \frac{cs}{4}\Tr(J_xJ^{-1}D_zJ^{-1}\rho_0)\\
                        &\overset{(a)}{\leq} \frac{s^2}{4}\frac{2MN+M+N}{M-N}\frac{1}{2}\Tr(J^{-1}\rho_0) + o(1/k) - 0\\
                        &\overset{(b)}{\leq} \frac{s^2}{4}\frac{2MN+M+N}{k(M-N-\epsilon)^2} + o(1/k)\label{eqn:var(tau)-estimate}
               \end{align}
               where $(a)$ follows from Lemma \ref{lem:bound-onJx2J-2} and $(b)$ from Lemma \ref{lem:bound-for-Jinverse}. Since the statement holds for all $\epsilon>0$, the required upper bound is proven.
                \begin{lemma}\label{lem:bound-for-Jinverse}
                    For every $\epsilon>0$ satisfying in addition also $M-N-\epsilon>0$ there is a $K_0\geq0$ and a $C_{M,N,\epsilon}>0$ such that for the Hermitian operator $J^{-1}$ defined by $J^{-1}J|\psi\rangle=|\psi\rangle$ whenever $J|\psi\rangle\neq0$ and $J^{-1}|\psi\rangle=0$ whenever $J|\psi\rangle=0$ the inequality
                    \begin{align}
                        \Tr(J^{-t}\rho_0) &\leq \frac{2^t}{k^t(M-N-\epsilon)^t} + e^{-k\cdot C'_{M,N,\epsilon}}\label{eqn:JinverseBound}
                    \end{align}
                    holds for $t=1,\ldots,6$ and all $k\geq K_0$.
                \end{lemma}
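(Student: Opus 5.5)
The idea is to bound $J^{-1}$ from above by $|J_z|^{-1}$ on a high-probability event where $J_z$ is of order $k(M-N)/2$, and to control the complementary event with a Chernoff bound. This works because under $\rho_0=(S_M\otimes S_N)^{\otimes k}$ the variable $2J_z=N_A-N_B$ is a sum of $k$ i.i.d.\ light-tailed variables with mean $M-N>0$.

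\textbf{Step 1 (joint diagonalisation).} By \eqref{eqn:commutators}, $\mathcal J$ commutes with $J_z$, and hence so does $J=\sqrt{\mathcal J+1/4}-1/2$. Both $J$ and $J_z$ are functions of the collective $SU(2)$ structure and of the photon numbers, and the state $\rho_0$ is diagonal in the Fock basis. Therefore $\rho_0$ commutes with $N_A$, $N_B$ and with every spectral projection of $J_z$. Inside a spin-$j$ irreducible block the $J_z$-eigenvalues $m$ satisfy $|m|\le j$, so $J\ge |J_z|$ holds as an operator inequality on each joint eigenspace of $(J,J_z)$. Moreover, whenever $J\neq0$ the spectrum of $J$ is contained in $\{1/2,1,3/2,\dots\}$, and hence $J^{-t}\le 2^t$ for the pseudo-inverse defined in the statement of Lemma~\ref{lem:bound-for-Jinverse}.

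\textbf{Step 2 (splitting).} Let $P$ be the spectral projection of $J_z$ onto eigenvalues $\ge k(M-N-\epsilon)/2$. The operators $P$, $J^{-t}$ and $\rho_0$ pairwise commute. Splitting the trace gives
\begin{align}
  \Tr(J^{-t}\rho_0)=\Tr(J^{-t}P\rho_0)+\Tr(J^{-t}(\eins-P)\rho_0).
\end{align}
On the range of $P$ we have $J\ge J_z\ge k(M-N-\epsilon)/2>0$, so $J$ is invertible there and
\begin{align}
  J^{-t}P\le \Big(\tfrac{2}{k(M-N-\epsilon)}\Big)^t P.
\end{align}
Hence the first term is at most $2^t/(k^t(M-N-\epsilon)^t)$. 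For the second term we use $J^{-t}\le 2^t$ to obtain the bound $2^t\,\Tr((\eins-P)\rho_0)$. The commutation ensures these operator inequalities transfer to traces against $\rho_0$, for instance by writing $\Tr(XP\rho_0)=\Tr(\rho_0^{1/2}PXP\rho_0^{1/2})$.

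\textbf{Step 3 (Chernoff).} We have $\Tr((\eins-P)\rho_0)=\mathbb P\big(\sum_{i=1}^k (n_i-n_i')<k(M-N-\epsilon)\big)$, where the $n_i$ are i.i.d.\ geometric (thermal) with mean $M$ and the $n_i'$ are i.i.d.\ with mean $N$, all independent. The moment generating function $\mathbb E e^{-\lambda(n-n')}$ is finite for $0\le\lambda<\log((N+1)/N)$. Cramér's theorem therefore gives the bound $e^{-kI(\epsilon)}$ with a rate $I(\epsilon)>0$ depending only on $M$, $N$ and $\epsilon$. Choosing $C'_{M,N,\epsilon}<I(\epsilon)$ and $K_0$ large enough that $2^6e^{-kI}\le e^{-kC'}$ for all $k\ge K_0$ absorbs the factor $2^t$ uniformly in $t=1,\dots,6$. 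This yields \eqref{eqn:JinverseBound}.

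\textbf{Main obstacle.} The only delicate point is Step 1: justifying $J\ge|J_z|$ and the commutation of the relevant spectral projections with $\rho_0$. This requires the decomposition of each $\mathcal H_t$ into spin-$j$ irreducibles with multiplicities $\mu_{j,t}$, which was already used when defining \eqref{def:omega-povm}. By comparison, the large-deviation estimate in Step 3 is routine.
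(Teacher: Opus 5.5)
Your proof is correct and follows the same route as the paper. Both arguments use $J\ge|J_z|$ on the joint eigenspaces to bound $J^{-t}$ by $2^t|N_A-N_B|^{-t}$ on the typical set, use $J^{-t}\le 2^t$ elsewhere, and control the atypical set with a Chernoff bound on $N_A-N_B$ under the product geometric distribution. Your single spectral projection onto $J_z\ge k(M-N-\epsilon)/2$ merges the paper's two separate pieces ($\ker J_z$ and the Chernoff event $X$) into one. You also make explicit, via the choice of $K_0$, how the factor $2^t$ is absorbed into the exponent, which the paper leaves implicit.
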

                \begin{lemma}\label{lem:bound-onJx2J-2} Let $D_z:=J-J_z$ and $J_x:=\frac{1}{2}(J_++J_-)$. Then for $\rho_0=S_M{^\otimes k}\otimes S_N^{\otimes k}$ and $t=1,2,3,4$ we have
                    \begin{align}
                        \Tr(J_x^2J^{-2}\rho_0)&\leq\frac{2MN+M+N}{M-N}\frac{1}{2}\Tr(J^{-1}\rho_0) \label{eqn:bound-onJx2J-2}\\
                        \Tr(J_x^4J^{-4}\rho_0)&\leq\frac{3}{4}\left(\frac{2MN+M+N}{M-N}\right)^2\Tr(J^{-2}\rho_0) \label{eqn:bound-onJx4J-4}\\
                        \Tr(D_z^tJ^{-t}\rho_0)&\leq \mathcal O(k^{-t})\label{eqn:bound-onD_z2J-2}
                    \end{align}
                \end{lemma}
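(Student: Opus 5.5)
The plan is to reduce all three traces to sums over the decomposition of the $2k$-mode Fock space into simultaneous eigenspaces of $N_{AB}$ and $J$. Inside each such eigenspace the Schwinger operators act as a direct sum of $\mu_{j,t}$ copies of the spin-$j$ irreducible representation. The key observation is
\begin{align*}
\rho_0 = S_M^{\otimes k}\otimes S_N^{\otimes k}\propto q_A^{N_A}q_B^{N_B}=(q_Aq_B)^{N_{AB}/2}\,x^{J_z},
\end{align*}
with $q_A=M/(M+1)$, $q_B=N/(N+1)$ and $x=q_A/q_B>1$. Hence $\rho_0$ is a function of $N_{AB}$ and $J_z$ only. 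It is therefore block diagonal in $(t,j)$, and within each spin-$j$ copy it is diagonal in the $J_z$ basis $|j,m\rangle$ with weights $\propto x^{m}$. Writing $D=j-m\in\{0,\ldots,2j\}$, the conditional law of $D_z$ given $(t,j)$ is a geometric distribution with ratio $r=1/x<1$, truncated at $2j$. Its moments are therefore bounded uniformly in $j$ and $t$ by those of the untruncated geometric law:
\begin{align*}
\mathbb E[D]\le \tfrac{r}{1-r},\qquad \mathbb E[D^2]\le \tfrac{r(1+r)}{1-r^2}\cdot\tfrac{1+r}{1-r}.
\end{align*}
Since $J^{-1}$ is constant on each block, each trace becomes $\sum_{t,j}p(t,j)\,j^{-s}\,\langle\cdot\rangle_{j}$, where $\langle\cdot\rangle_j$ is the conditional expectation within the block.

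\textbf{The bound on $\Tr(J_x^2J^{-2}\rho_0)$.} Because the conditional state commutes with $J_z$, it is invariant under rotations about $z$. This gives
\begin{align*}
\langle J_x^2\rangle_j=\tfrac12\langle J_x^2+J_y^2\rangle_j=\tfrac12\langle j(j+1)-(j-D)^2\rangle_j=\tfrac12\langle j+2jD-D^2\rangle_j\le \tfrac{j}{2}\big(1+2\mathbb E[D]\big).
\end{align*}
Inserting $1+2\mathbb E[D]\le (1+r)/(1-r)=(x+1)/(x-1)$ and computing directly gives
\begin{align*}
\frac{x+1}{x-1}=\frac{M(N+1)+N(M+1)}{M(N+1)-N(M+1)}=\frac{2MN+M+N}{M-N}.
\end{align*}
One factor of $j$ cancels against $J^{-2}$, leaving exactly the right-hand side of \eqref{eqn:bound-onJx2J-2}. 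The vanishing blocks $j=0$ contribute nothing, by the convention on $J^{-1}$.

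\textbf{The bound on $\Tr(J_x^4J^{-4}\rho_0)$.} Again by $z$-rotation invariance, I average $J_x=\cos\varphi J_x+\sin\varphi J_y$ over $\varphi$. This expresses $\langle J_x^4\rangle_j$ as $\tfrac38\langle (J_x^2+J_y^2)^2\rangle_j$ plus symmetrization corrections coming from $[J_x,J_y]=\mathbbm{i}J_z$. These corrections are of lower order, namely $O(j\,\langle J_x^2+J_y^2\rangle_j)$ rather than $O(j^2)$. The leading term satisfies $(J_x^2+J_y^2)\le j(1+2D)$, together with
\begin{align*}
\mathbb E[(1+2D)^2]\le\frac{1+6r+r^2}{(1-r)^2}\le 2\Big(\frac{1+r}{1-r}\Big)^2 .
\end{align*}
Together these give $\tfrac38\cdot2\,j^2\big(\tfrac{2MN+M+N}{M-N}\big)^2=\tfrac34 j^2(\cdots)^2$, which yields \eqref{eqn:bound-onJx4J-4} after dividing by $j^4$. \emph{This is the step I expect to be the main obstacle.} The commutator corrections must be shown not to break the exact constant $3/4$. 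I would first try to absorb them into the slack $2(1+r)^2-(1+6r+r^2)=1+r^2-2r>0$ for $j$ large. For the small-$j$ blocks I would use the exponential tail $e^{-kC'}$ from Lemma \ref{lem:bound-for-Jinverse}, if necessary accepting an additive $o(k^{-2})$.

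\textbf{The bound on $\Tr(D_z^tJ^{-t}\rho_0)$.} Since $D_z$ commutes with $J$, we have $\Tr(D_z^tJ^{-t}\rho_0)=\sum_{t',j}p(t',j)\,j^{-t}\langle D^t\rangle_j$. The uniform geometric bound gives $\langle D^t\rangle_j\le L_{-t}(1/x)$, a constant independent of $k$; this is consistent with Lemma \ref{lem:bound-on-Dz}. Hence the trace is at most $L_{-t}(1/x)\Tr(J^{-t}\rho_0)$. This is $O(k^{-t})$ for $t\le4$, by Lemma \ref{lem:bound-for-Jinverse}.
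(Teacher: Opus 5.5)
Your bounds \eqref{eqn:bound-onJx2J-2} and \eqref{eqn:bound-onD_z2J-2} follow the paper's proof. You block-diagonalize in $(t,j)$ and note that $\rho_0$ restricted to a block is $\propto x^{J_z}$ with $m$-independent multiplicity $\mu_{j,t}$. You then dominate the moments of the truncated geometric variable $D=j-m$ by the untruncated ones, and reduce everything to $\Tr(J^{-s}\rho_0)$ and Lemma~\ref{lem:bound-for-Jinverse}. The paper obtains $\langle j,m|J_x^2|j,m\rangle=\tfrac12(j(j+1)-m^2)$ from the ladder operators rather than from $z$-rotation invariance, but that difference is immaterial.

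The gap is in \eqref{eqn:bound-onJx4J-4}. Your premise that the ordering corrections to $\tfrac38(J_x^2+J_y^2)^2$ are of lower order is false, and it fails exactly where $\rho_0$ concentrates. A direct ladder-operator computation gives
\[
\langle j,m|J_x^4|j,m\rangle=\tfrac38\bigl(j(j+1)-m^2\bigr)^2+\tfrac18\bigl(5m^2-2j(j+1)\bigr).
\]
For $m=j-D$ with $D=O(1)$, the correction is therefore $\approx\tfrac38 j^2$, the same order as your main term $\tfrac38 j^2(1+2D)^2$. At $m=j$ it equals the main term up to $O(j)$: $\langle J_x^4\rangle=\tfrac34 j^2-\tfrac14 j$, while $\tfrac38\langle (J_x^2+J_y^2)^2\rangle=\tfrac38 j^2$.

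Your large-$j$/small-$j$ split therefore cannot work as described. As $j\to\infty$ the correction becomes asymptotically equal to your slack, not negligible against it. Accepting an additive $o(k^{-2})$ would in any case prove a weaker statement than the lemma.

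What saves the constant $3/4$ is an exact inequality, not an order-of-magnitude argument. For $0\le D\le 2j$ we have $5m^2-2j(j+1)=3j^2-2j-5D(2j-D)\le 3j^2$. So the correction never exceeds $\tfrac38 j^2$, which is precisely your slack $\tfrac38 j^2\,[2(1+r)^2-(1+6r+r^2)]/(1-r)^2=\tfrac38 j^2$. Hence, in every block with $j\ge 1/2$,
\[
\langle J_x^4\rangle_j\le\tfrac38 j^2\,\mathbb E[(1+2D)^2+1]\le\tfrac34 j^2\bigl(\tfrac{1+r}{1-r}\bigr)^2,
\]
and no tail estimate is needed.

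This is the paper's route. It computes the matrix element above, discards the nonpositive terms of $\mathrm{pl}(d,j)$ to obtain $\mathrm{pl}(d,j)\le 3((2d+1)^2+1)$, and sums the arithmetico-geometric series. Without this exact computation, your proposal does not establish the stated bound.
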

                As a final step of our proof of Theorem \ref{thm:restricted-case} we derive a bound on $\cov(\hat M,\hat\tau)$. All other entries of the covariance matrix follow from similar calculations and are not stated.
                Since $\hat M = \tfrac{1}{2}N_{AB} + J$ commutes with $U(\theta)^{\otimes k}$, we have 
                \begin{align}
                    \cov(\hat M, &\hat\tau) = \tfrac{1}{k}\cov(N_A+D_z,\tfrac{s}{2}J_xJ^{-1}-\tfrac{c}{2}D_zJ^{-1})\\
                        &= \tfrac{s}{2k}\cov(N_A,J_xJ^{-1}) - \tfrac{c}{2k}\cov(N_A,D_zJ^{-1}) +\nonumber\\
                        &\qquad + \tfrac{s}{2k}\cov(D_z,J_xJ^{-1}) -\tfrac{c}{2k}\cov(D_z,D_zJ^{-1})\\
                        &= - \tfrac{c}{2k}\cov(N_A,D_zJ^{-1}) + \tfrac{s}{2k}\cov(D_z,J_xJ^{-1}) -\tfrac{c}{2k}\cov(D_z,D_zJ^{-1})
                \end{align}
                since $\langle j,j_z,J_x j,j_z\rangle=0$ for all $(j,j_z)$ and $N_A$, $J$ and $\rho_0$ are diagonal in the same basis. Further by the Cauchy-Schwartz inequality, 
                \begin{align}
                    \tfrac{s}{2k}\cov(D_z,J_xJ^{-1}) &\leq \tfrac{s}{2k}\sqrt{\Tr(D_z^2\rho_0)\Tr(J_x^2J^{-2}\rho_0)}\\
                        &=\mathcal O(k^{-3/2})\\
                        &= o(1/k)
                \end{align}
                according to Lemma \ref{lem:bound-on-Dz} and Lemma \ref{lem:bound-onJx2J-2}. Setting $t=4$ in Lemma \ref{lem:bound-on-Dz} and using Lemma \ref{lem:bound-for-Jinverse} allows us to state 
                \begin{align}
                    \tfrac{c}{2k}|\cov(N_A,D_zJ^{-1})|&\leq\tfrac{c}{2k}\sqrt{\svar(N_A)\svar(D_zJ^{-1})}\\
                        &\leq\tfrac{c}{2k}\sqrt{kM(M+1)}\sqrt{\Tr(D_z^2J^{-2}\rho_0)}\\
                        &\leq\tfrac{c}{2k}\sqrt{kM(M+1)}(\Tr(D_z^4\rho_0)\Tr(J^{-4}\rho_0))^{1/4}\\
                        &\leq\tfrac{c}{2k}\sqrt{kM(M+1)}o(1)(k^{-4})^{1/4}\\
                        &=\mathcal O(k^{-3/2})\\
                        &=o(1/k).
                \end{align}
                Finally for the term $\tfrac{c}{2k}\cov(D_z,D_zJ^{-1})$ we have
                \begin{align}
                    |\cov(D_z,D_zJ^{-1})|&\leq\sqrt{\svar(D_z)\svar(D_zJ^{-1})}\\
                        &\leq \sqrt{\Tr(D_z^2\rho_0)\Tr(D_z^2J^{-2}\rho_0)}\\
                        &\leq \mathcal O(1/k).
                \end{align}
                by \eqref{eqn:bound-onD_z2J-2} of Lemma \ref{lem:bound-onJx2J-2}. Conversion from $\hat\tau$ to $\hat\theta=\arccos(2\hat\tau-1)/2$ is based on the variance bound
                \begin{align}
                    \Tr((\hat\tau-\tau)^4\rho_0) &\leq \Tr(\big(\tfrac{s}{2}J_xJ^{-1}-\tfrac{c}{2}D_zJ^{-1}\big)^4\rho_0)\\
                        &=\Tr(\big(A-B\big)^4\rho_0)\\
                        &=\Tr((A^4 + A^2B^2 + B^4)\rho_0)+\nonumber\\
                        &\ \ + \Tr(ABAB+AB^2A+BA^2B+BABA+B^2A^2)\rho_0)
                \end{align}
                where $A=\tfrac{s}{2}J_xJ^{-1}$ and $B=\tfrac{c}{2}D_zJ^{-1}$ and terms with an odd number of $A$ terms vanished due to the action of $J_x$ and diagonality of $B$ in the photon number basis. We proceed by utilizing Lemma \ref{lem:bound-onJx2J-2} which implies $\Tr(B^t\rho_0)=\mathcal O(k^{-t})$ and $\Tr(A^2\rho_0)=\mathcal O(1/k)$ and $\Tr(A^4\rho_0)=\mathcal O(1/k^2)$ as well as the Cauchy-Schwarz inequality:
                \begin{align}
                    \Tr((\hat\tau-\tau)^4\rho_0) &\leq \mathcal O(k^{-2}) + 2\sqrt{\Tr(A^4\rho_0)\Tr(B^4\rho_0)} + \mathcal O(k^{-4}) +  \nonumber\\&\qquad +\Tr((ABAB+AB^2A+BA^2B+BABA)\rho_0).
                \end{align}
                It remains to bound the terms $ABAB$ and $ABBA$, $BAAB$, $BABA$. For this we use 
                \begin{lemma}[ABBA Lemma]\label{lem:abba}
                    Let $A=\tfrac{s}{2}J_xJ^{-1}$ and $B=\tfrac{c}{2}D_zJ^{-1}$.\\
                    Then for $X\in\{ABAB, BABA, ABBA,BAAB\}$ we have 
                    \begin{align}
                        \Tr(X\rho_0)=\mathcal O(k^{-3}).
                    \end{align}
                \end{lemma}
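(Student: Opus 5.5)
The plan is to diagonalize everything in the joint eigenbasis of $J$, $J_z$, $N_{AB}$ and $\rho_0$. In the coupled basis $|t,j,m,\mu\rangle$ (total photon number $t$, spin $j$, magnetic number $m$, multiplicity label $\mu$), the operators $\rho_0$, $J$, $J^{-1}$ and $D_z=J-J_z$ are all diagonal, since $\rho_0$ commutes with $N_A$ and $N_B$ separately and with the Casimir. Hence $B=\tfrac{c}{2}D_zJ^{-1}$ is diagonal with entries $\tfrac{c}{2}(j-m)/j$.

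The operator $J_x$ only connects $m$ to $m\pm1$ inside a fixed $(t,j,\mu)$ block, and $A=\tfrac{s}{2}J_xJ^{-1}$ inherits this property. So for each word $X\in\{ABAB,BABA,ABBA,BAAB\}$ the diagonal matrix element $\langle t,j,m,\mu|X|t,j,m,\mu\rangle$ is a sum over paths $m\to m\pm1\to m$. These elements can be written explicitly as
\[
\tfrac{s^2c^2}{16}\,j^{-4}\,|\langle m\pm1|J_x|m\rangle|^2\,\beta(m)\beta'(m\pm1),
\]
where $\beta,\beta'\in\{(j-m),(j-m\mp1)\}$ according to the word. Using $|\langle m\pm1|J_x|m\rangle|^2\le\tfrac14(j(j+1)-m(m\pm1))\le\tfrac12 j\,(j-m+1)$, each contribution is bounded by a constant times $j^{-3}(j-m+1)^3$.

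Taking the trace against $\rho_0$ then gives
\[
|\Tr(X\rho_0)|\lesssim \Tr\big(J^{-3}(D_z+1)^3\rho_0\big)
\le \big(\Tr(J^{-6}\rho_0)\big)^{1/2}\big(\Tr((D_z+1)^6\rho_0)\big)^{1/2},
\]
where the last step is Cauchy--Schwarz for commuting positive operators. The first factor is $\mathcal O(k^{-3})$ by Lemma~\ref{lem:bound-for-Jinverse} with $t=6$; the exponentially small remainder is harmless. The second factor is $\mathcal O(1)$ by Lemma~\ref{lem:bound-on-Dz}, applied with $t\le6$ and expanding the binomial. This yields $\mathcal O(k^{-3})$. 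The $j=0$ sector contributes nothing because $J^{-1}$ vanishes there.

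The main obstacle is the careful bookkeeping of the matrix element $|\langle m\pm1|J_x|m\rangle|^2$. A naive bound of order $j^2$ would only give $\mathcal O(k^{-2})$. One has to exploit that $\rho_0$ concentrates near the highest weight $m\approx j$, where the raising element vanishes and the lowering element is only of order $j(j-m+1)$; the factor $(j-m+1)$ is then absorbed by the $D_z$ moments. For paths going to $m+1>j$ the amplitude is zero, so only lowering paths matter.
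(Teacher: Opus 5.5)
Your proposal is correct and follows the paper's proof essentially step for step: you work sector-wise in the $|j,m\rangle$ basis where $\rho_0$, $J$ and $D_z$ are diagonal, bound each of the four diagonal matrix elements by $C(1+d)^3/j^3$ with $d=j-m$, and then use Cauchy--Schwarz to reduce to $\Tr(J^{-6}\rho_0)=\mathcal O(k^{-6})$ and $\Tr((\eins+D_z)^6\rho_0)=\mathcal O(1)$ via Lemmas~\ref{lem:bound-for-Jinverse} and~\ref{lem:bound-on-Dz}. The only difference is cosmetic: you get the uniform bound for all four words at once from $|\langle m\pm1|J_x|m\rangle|^2\le\tfrac12 j(j-m+1)$, whereas the paper computes the $ABAB$ element explicitly and states that the other three follow in the same way.
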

                Thus, we finally arrive at the conclusion $\Tr((\hat\tau-\tau)^4\rho_0)=\mathcal O(k^{-2})$. We now set $g(\tau):=\tfrac{1}{2}\arccos(2\tau-1)$. We know that $\svar(\tau_k)=\frac{\sin^2(2\theta)}{4}\frac{2MN+M+N}{k(M-N-\epsilon)^2} + o(1/k)$ from \eqref{eqn:var(tau)-estimate} and $\mathbb E|\hat\tau_k-\tau|^4=\mathcal O(1/k^2)$. By Taylor expansion, 
                \begin{align}
                    g(\hat\tau_k)-g(\tau) = -\frac{1}{2\sqrt{\tau(1-\tau)}}(\hat\tau_k-\tau) + R_k
                \end{align}
                where the remainder term is bounded by using $2\tfrac{d^2}{dx^2}\arccos(2x-1)=(1-2x)(x-x^2)^{-3/2}$. We clip the estimator so that $\hat\tau_k\in[k^{-1/4},1-k^{-1/4}]$, leading to
                \begin{align}
                    |R_k| &\leq \frac{\|g''\|_{\infty}}{2}|\hat\tau_k-\tau|^2\\
                    &\leq \frac{2k^{3/8}}{2}|\hat\tau_k-\tau|^2
                \end{align}
                for large enough $k$. It follows, whenever $0<\tau<1$, 
                \begin{align}
                    \mathbb E(R_k^2) =\mathcal O(k^{3/4}\mathbb E(|\hat\tau_k-\tau|^4)) = \mathcal O(k^{-5/4}).
                \end{align}
                Thus we obtain 
                \begin{align}
                    \svar(\hat\theta_k-\theta) &= \tfrac{1}{4\tau(1-\tau)}\svar(\hat\tau_k-\tau) + \svar(R_k)-\tfrac{1}{\sqrt{\tau(1-\tau)}}\cov(R_k,\hat\tau_k-\tau)\\
                        &=\tfrac{1}{4\tau(1-\tau)}\svar(\hat\tau_k-\tau)+o(1/k)+o(1/k),
                \end{align}
                where we used $\cov(\hat\tau_k-\tau,R_k)\leq\sqrt{\svar(\hat\tau_k-\tau)\svar(R_k)}$. Consistency and consistency and local unbiasedness then follow from the preceding- as well as the covariance bounds on $\cov(N_A,D_zJ^{-1})$ (and $\cov(N_A,D_zJ^{-1})$, respectively).
            \paragraph{Holevo bound for the restricted model ($\phi=0$):}
                We utilize the framework derived in \cite{ragy2016compatibility}. It holds $\partial_MS_M=\tfrac{1}{M(M+1)}(N_A-M\eins)S_M$ and $\partial_NS_N=\tfrac{1}{N(N+1)}(N_B-N\eins)S_N$, as well as $\partial_MU(\theta)=\partial_NU(\theta)=0$. Thus, since $[\rho,\partial_M\rho]=[\rho,\partial_N\rho]=0$ the operators $L_M:=\rho^{-1}\partial_M\rho$ and $L_N:=\rho^{-1}\partial_N\rho$ fulfill the equations
                \begin{align}
                    \tfrac{1}{2}(L_M\rho + \rho L_M) = \partial_M\rho,\qquad \tfrac{1}{2}(L_N\rho + \rho L_N) = \partial_N\rho,
                \end{align}
                thus qualifying them for the role of \gls{sld}. We further have $\partial_\theta\rho=[a^\dagger b-ab^\dagger,\rho]$, from where we can conclude that $L_\theta := \tfrac{2(M-N)}{2MN+M+N}(a^\dagger b + ab^\dagger)$ is the \gls{sld} for the parameter $\theta$. The multi-parameter bound on the \gls{sld} Fisher matrix thus reads as \cite[Eq. (4)]{ragy2016compatibility}
                \begin{align}\label{eqn:holevo-bound-1}
                    (\tfrac{1}{2}\Tr(\{L_i,L_j\}\rho))_{i,j\in\{M,N,\theta\}}.
                \end{align}
                A direct calculation shows that it exactly equals the inverse of the $\mathrm{cov}$ matrix of Theorem \ref{thm:restricted-case}. Note that \eqref{eqn:holevo-bound-1} alone does not imply achievability. Rather, this follows from the fact that the \glspl{sld} commute with respect to $\rho$: 
                \begin{align}
                    \Tr([L_i,L_j]\rho))=0,\qquad i\neq j\in\{M,N,\theta\}.
                \end{align}
        \end{proof}
        \begin{proof}[Proof of Theorem \ref{thm:main}]
            Since the measurements for the estimates $\hat M$ and $\hat N$ stay the same, it remains to derive the covariance entries for $\hat\theta$ and $\hat\phi$ for the joint measurement. For this, we use 
            \begin{lemma}\label{lem:first-and-second-order-spin-momemnts}
                Let $E_j$ be the spectral projection to eigenvalue $j$ of $J$ and $E_t$ the one for eigenvalue $t$ of $N_{AB}$. Let $p(j,t):=\Tr(E_jE_t\rho)$. Then 
                \begin{align}
                    \mathbb E_{j,t}\omega_i&=p(j,t)^{-1}\frac{1}{j+1}\Tr J_iE_jE_t\rho\label{eqn:first-order-spin-moment}\\
                    \int\omega_i\omega_lM_{j,t}(d\omega)&= E_jE_t\left[\frac{\{J_i,J_l\}}{(j+1)(2j+3)}+ \delta_{i,l}\frac{1}{2j+3}\eins\right]\label{eqn:second-order-spin-moment}
                \end{align}
            \end{lemma}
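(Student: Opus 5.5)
The plan is to reduce both identities to a single spin-$j$ irreducible representation and then apply $SU(2)$ covariance (Wigner--Eckart) to the spin-coherent-state POVM.

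\emph{Reduction to one irreducible.} The collective Schwinger operators $J_x,J_y,J_z$ commute with $N_{AB}$, and $J$ is a function of the Casimir $\mathcal J$. Hence $E_j$ and $E_t$ commute, and the joint eigenspace decomposes as $E_jE_t\mathcal H\cong\mathbb C^{\mu_{j,t}}\otimes\mathbb C^{2j+1}$. On this space $J_i$ acts as $\eins_{\mu_{j,t}}\otimes J_i^{(j)}$, with $J_i^{(j)}$ the spin-$j$ generators. By construction $P_{j,\omega}\cong\eins_{t,j}\otimes|j,\omega\rangle\langle j,\omega|$. Every operator we integrate therefore factorizes as $\eins\otimes(\cdot)$, so it suffices to evaluate the spin-$j$ integrals
\[
  T_i:=\tfrac{2j+1}{4\pi}\int_{S^2}\omega_i\,|j,\omega\rangle\langle j,\omega|\,d\omega,
  \qquad
  T_{il}:=\tfrac{2j+1}{4\pi}\int_{S^2}\omega_i\omega_l\,|j,\omega\rangle\langle j,\omega|\,d\omega
\]
and then multiply by $E_jE_t$.

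\emph{First moment.} Under a rotation $R$ with spin-$j$ representative $D(R)$ we have $D(R)|j,\omega\rangle\langle j,\omega|D(R)^\dagger=|j,R\omega\rangle\langle j,R\omega|$, and $d\omega$ is rotation invariant. Hence $(T_x,T_y,T_z)$ is a vector operator on an irreducible space. By Wigner--Eckart, $T_i=c_jJ_i^{(j)}$ for a scalar $c_j$. To fix $c_j$, take the trace against $J_z$. Using $\langle j,\omega|J_z|j,\omega\rangle=j\omega_z$ and $\int\omega_z^2\,d\omega=4\pi/3$, the left side gives $\tfrac{2j+1}{3}j$. The right side gives $c_j\Tr(J_z^2)=c_j\,j(j+1)(2j+1)/3$. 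Thus $c_j=1/(j+1)$.

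Tensoring back with $\eins_{\mu_{j,t}}$ yields $\int\omega_iM_{j,t}(d\omega)=\tfrac{1}{j+1}J_iE_jE_t$. Taking $\Tr(\cdot\,\rho)$ and dividing by the outcome probability $p(j,t)$ gives \eqref{eqn:first-order-spin-moment}.

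\emph{Second moment.} The map $(i,l)\mapsto T_{il}$ is a symmetric rank-2 Cartesian tensor operator. It splits into a scalar part and a rank-2 part. By Wigner--Eckart the rank-2 part is proportional to $\{J_i,J_l\}-\tfrac23\delta_{il}\mathcal J$, and $\mathcal J=j(j+1)$ on the irreducible. So I will write
\[
  T_{il}=\alpha_j\{J_i,J_l\}+\beta_j\delta_{il}\eins
\]
and determine the two constants from two conditions.

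The first condition is the trace over $i=l$. Since $|\omega|^2=1$, the resolution of identity gives $\sum_iT_{ii}=\eins$, so $2\alpha_jj(j+1)+3\beta_j=1$.

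The second condition is the highest-weight matrix element $\langle j,j|T_{zz}|j,j\rangle$. With $|\langle j,j|j,\omega\rangle|^2=((1+x)/2)^{2j}$ and $x=\cos\vartheta$, this becomes the Beta-type integral $\tfrac{2j+1}{2}\int_{-1}^1x^2((1+x)/2)^{2j}\,dx$. It must equal $2\alpha_jj^2+\beta_j$.

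Solving the two linear equations should give $\alpha_j=1/((j+1)(2j+3))$ and $\beta_j=1/(2j+3)$. These values satisfy the trace condition, which is a consistency check. Tensoring with $\eins_{\mu_{j,t}}$ and sandwiching by $E_jE_t$ then yields \eqref{eqn:second-order-spin-moment}.

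\emph{Expected obstacles.} The main obstacle is the second-moment step. It requires justifying the two-term ansatz: one must check that the antisymmetric (rank-1) part vanishes because $T_{il}$ is symmetric, and that no other invariant structure appears. It also requires carrying out the polynomial-times-power integral cleanly. I would evaluate it by substituting $u=(1+x)/2$ and expanding $x^2=(2u-1)^2$ into elementary $\int_0^1u^{2j+m}\,du$ integrals. The degenerate case $j=0$ must be checked separately. There $J_i=0$, the first formula gives zero, and the second reduces to $\tfrac13\delta_{il}$, consistent with $\beta_0=1/3$.
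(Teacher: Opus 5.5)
Your proof is correct. It shares the paper's first step: factorizing $E_jE_t\mathcal H$ into a multiplicity space times a spin-$j$ irreducible, on which $P_{j,\omega}$ acts as $\eins\otimes|j,\omega\rangle\langle j,\omega|$. It then obtains the single-irrep identities by a genuinely different route.

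\textbf{The paper's route.} It does not derive these identities. It imports the resolution of identity from Giraud et al.\ and the coherent-state integral representations of $\mathbf J$ and $J_z^2$ from Manai et al.\ (their Eqs.\ (A5)--(A7)). It rewrites the latter as $\int\omega_z^2M_j(d\omega)=\tfrac{2J_z^2}{(j+1)(2j+3)}+\tfrac{1}{2j+3}$, extends this to arbitrary unit vectors $\mathbf n$ by rotational invariance, and recovers the cross terms $i\neq l$ by polarization of $Q(\mathbf n)=\int\langle\mathbf n,\omega\rangle^2M_j(d\omega)$.

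\textbf{Your route.} You use covariance of the coherent-state POVM together with Wigner--Eckart to fix the operator form, $T_i=c_jJ_i$ and $T_{il}=\alpha_j\{J_i,J_l\}+\beta_j\delta_{il}\eins$. You then determine the constants from three inputs: the trace against $J_z$, the resolution of identity, and the highest-weight diagonal element. Your Beta integral does evaluate to $\tfrac{2j^2+j+1}{(j+1)(2j+3)}$. This matches $2\alpha_jj^2+\beta_j$ for the claimed values, and the trace condition holds as well.

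\textbf{Comparison.} Your argument is self-contained and yields all components $(i,l)$ at once, with no polarization step. The paper's argument is shorter but rests entirely on the cited formulas and on matching their normalization conventions.

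\textbf{A small point at $j=1/2$.} Your two linear equations coincide there. Eliminating $\beta_j$ gives $2\alpha_j\,j(2j-1)=\tfrac{2j(2j-1)}{(j+1)(2j+3)}$, and on spin $1/2$ one has $\{J_i,J_l\}=\tfrac12\delta_{il}\eins$. So $\alpha_j$ and $\beta_j$ are not individually determined. This is harmless: only the combination $\tfrac{\alpha_j}{2}+\beta_j=\tfrac13$ enters the operator, and the claimed values satisfy it. Still, it deserves the same explicit remark you already gave for $j=0$.
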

            Let $(t,j)$ be given. The conditional variance of the $\omega$ measurement is then, per component $i=x,y,z$, calculated based on the moment operators
        \begin{align}
            B_i^{(j,t)} &= E_t\int\omega_i M_{j,t}(d\omega)E_t = \frac{1}{j+1}E_tE_jJ_i\\
            B_{ii}^{(j,t)} &= E_t\int\omega_i^2 M_{j,t}(d\omega)E_t = \frac{2E_tE_jJ_i^2}{(j+1)(2j+3)}+ \frac{1}{2j+3}E_tE_j
        \end{align}
        which we calculate using Lemma \ref{lem:first-and-second-order-spin-momemnts}. Using these, the variance takes the form
        \begin{align}
            \svar(\omega_i|j,t) &= \frac{1}{p(j,t)}\Tr(B_{ii}^{(j,t)}\rho) -\Tr(B_i^{(j,t)}\frac{1}{p(j,t)}\rho)^2\\
                &= \frac{1}{p(j,t)}\Big[\Tr(\frac{2E_tE_jJ_i^2}{(j+1)(2j+3)}\rho) + \Tr(\frac{1}{2j+3}E_tE_j\rho)-\nonumber\\
                &\qquad\qquad\qquad\qquad\qquad\qquad -\frac{1}{p(j,t)}\frac{1}{(j+1)^2}\Tr(E_tE_jJ_i\rho)^2\Big]\\
                &= \frac{1}{p(j,t)}\Big[\Tr(\frac{2E_tE_jJ_i^2}{(j+1)(2j+3)}\rho) + \Tr(\frac{1}{2j+3}E_tE_j\rho)-\nonumber\\
                &\qquad\qquad\qquad\qquad\qquad\qquad -\frac{1}{p(j,t)}\frac{1}{(j+1)^2}\Tr(E_tE_jJ_i\rho)^2\Big].
        \end{align}
        Thus by the law of conditional variance, we get
        \begin{align}
            \svar&(\omega_i) = \sum_{t,j}\left[\Tr(\Big(\tfrac{2E_tE_jJ_i^2}{(j+1)(2j+3)} + \tfrac{E_tE_j}{2j+3}\Big)\rho)\right]-\Big[\sum_{t,j}\frac{1}{j+1}\Tr(E_tE_jJ_i\rho)\Big]^2\\
            &= \Tr(\big(2J_i^2(J+1)^{-1}(2J+3)^{-1} + (2J+3)^{-1}\big)\rho) - (\Tr(\tfrac{J_i}{J+1}\rho))^2.
        \end{align}
        In a similar way, we obtain for $i\neq l$
        \begin{align}
            \cov(\omega_i,\omega_l) &= \Tr(\big(\tfrac{J_iJ_l+J_lJ_i}{(J+1)(2J+3)}\big)\rho) - \Tr(\tfrac{J_i}{J+1}\rho)\Tr(\tfrac{J_l}{J+1}\rho).
        \end{align}
        The state $\rho$ can be written as $\rho^{\otimes k}=c(xy)^{N_{AB}/2}q^{J_\mathbf{n}}$ with $x=M/(M+1)$, $y=N/(N+1)$, $q=x/y$ $c=(1-x)^k(1-y)^k$ where 
        \begin{align}
            \mathbf n=(\sin(2\theta)\cos(\phi),\sin(2\theta)\sin(\phi),\cos(2\theta)),
        \end{align}
        The operators $W(u):=\exp(\mathbbm i uJ_\mathbf{n})$ commute with $\rho$ and satisfy for every $T$ that commutes with $\rho$ and $J_x,J_y,J_z$ and for $\mathbf n^\perp$ satisfying $\langle\mathbf n,\mathbf n^\perp\rangle=0$ the equality
        \begin{align}
            \Tr(T\rho J_\mathbf{n^\perp} ) &=\frac{1}{2\pi}\int_0^{2\pi}\Tr(T\rho W(u)J_\mathbf{n}^\perp W(u)^\dagger )du= 0.
        \end{align}
        Thus for every $i,l$ we have
        \begin{align}
            \mathbb E(\omega_i) &= \Tr(\tfrac{J_i}{J+1}\rho) = n_i\Tr(\tfrac{J_\mathbf{n}}{J+1}\rho)\\
            \cov(\omega_i\omega_l) &= a\cdot n_in_l + b\cdot (\delta_{i,l}-n_in_l)
        \end{align}
        where
        \begin{align}
            a &= \Tr(\big(\tfrac{2J_\mathbf{n}^2}{(J+1)(2J+3)} + \tfrac{1}{2J+3}\big)\rho)-(\Tr(\tfrac{J_\mathbf{n}}{J+1}\rho))^2\\
            b &= \Tr(\big(\tfrac{J(J+1)-J_{\mathbf n}^{2}}{(J+1)(2J+3)} + \tfrac{1}{2J+3}\big)\rho).
        \end{align}
        where we used a decomposition of the spin operators $J_i$ as 
        \begin{align}
            J_i &= n_iJ_\mathbf{n} + J_{\mathbf{n}_i^\perp}.
        \end{align}
        Using the bounds of Lemmas \ref{lem:bound-on-Dz} to Lemma \ref{lem:bound-onJx2J-2}, this yields 
        \begin{align}\label{eqn:cov-omega}
            \cov(\hat\omega) = \frac{1}{k}\frac{2M(N+1)}{(M-N)^2}(\eins-|\mathbf{n}\rangle\langle\mathbf n|) + o(1/k)
        \end{align}

        \paragraph{Covariance of $(\hat\theta,\hat\phi)$}
        As before, we use the convention $\Delta<M-N$, $S:=2MN+M+N$. 

        The spin-coherent outcome is \(\omega\in S^2\), and the angular estimators are
        \begin{align}
            \hat\theta=\frac12\arccos(\tfrac{j+1}{j}\hat\omega_z), \qquad \hat\phi=\operatorname{atan2}(\hat\omega_y,\hat\omega_x).
        \end{align}
        From \eqref{eqn:cov-omega} we already have 
        \begin{align}
            \cov(\hat\omega) = \frac{1}{k}\frac{2M(N+1)}{(M-N)^2}(\eins-|\mathbf{n}\rangle\langle\mathbf n|) + o(1/k)
        \end{align}
        We define 
        \begin{align}
            J_\mathbf{n}:=\mathbf n\cdot\mathbf J, \qquad D:=J-J_\mathbf{n}.
        \end{align}
        We use the equality
        \begin{align}
            \|\omega-\mathbf{n}\|^2&=\langle\omega-\mathbf{n},\omega-\mathbf n\rangle\\
                &= \langle\omega,\omega\rangle-2\langle\omega,\mathbf n\rangle+1\\
                &= 2(1-\langle\omega,\mathbf n\rangle)
        \end{align}
        which holds since $\omega\in S^2$. Accordingly, we can write 
        \begin{align}
            \|\omega-\mathbf n\|^4=4(1-\langle \omega,\mathbf n\rangle)^2.
        \end{align}
        For any fixed value of $j$, we have by using \eqref{eqn:second-order-spin-moment} of Lemma \ref{lem:first-and-second-order-spin-momemnts}
        \begin{align}
            \int&(1-\langle\omega,\mathbf n\rangle)^2M_{j,t}(d\omega)
                = E_jE_t  + \int\sum_i\omega_in_i\left[- 2 + \omega_in_i + 2\sum_{l\neq i}\omega_ln_l\right]M_{j,t}(d\omega)\\
                &= E_jE_t  - 2\sum_in_i\frac{E_jE_tJ_i}{j+1} + \sum_{i\neq l}n_in_l\frac{E_jE_t\{J_iJ_l\}}{(2j+3)(j+1)}
                + \sum_in_i^2\left[\frac{2E_jE_tJ_i^2}{(j+1)(2j+3)}+\frac{E_jE_t}{2j+3}\right]\\
                &= \frac{2E_jE_t(D_\mathbf{n}+\eins)(D_\mathbf{n}+2\eins)}{(j+1)(2j+3)}
        \end{align}
        where we used $D_\mathbf{n}:=J-J_\mathbf{n}$. Averaging over $(j,t)$, noting that $J+c\geq J$ for $c>0$ and utilizing the bounds from Lemma \ref{lem:bound-onJx2J-2}, we arrive at 
        \begin{align}
            \mathbb E(\|\omega-\mathbf n\|^4) &=\mathcal O(k^{-2}).
        \end{align}
        Thus, Taylor expansion yields the desired 
        \begin{align}
            \cov(\hat\theta,\hat\phi) = \frac{2M(N+1)}{k(M-N)^2}\left(\begin{array}{ll}1/4&0\\0&\sin(2\theta)^{-2}\end{array}\right) + o(1/k).
        \end{align}
        The covariance between $\hat\omega$ and $\hat M$ as well as $\hat N$ vanishes at leading order, as can be seen by utilizing the tools derived for the proof of Theorem \ref{thm:restricted-case}.

        \paragraph{Holevo bound for the general model:}
                Once the parameter $\phi$ is treated as well, the \glspl{sld} for $M,N,\theta$ are updated to 
                \begin{align}
                    L_i\to V_A(\phi)L_iV_A(\phi)^\dagger.
                \end{align}
                The \gls{sld} for $\phi$ can be calculated to equal 
                \begin{align}
                    L_\phi = -\mathbbm{i}\frac{(N-M)\sin(2\theta)}{2MN+M+N}(e^{\mathbbm{i}\phi}a^\dagger b - e^{-\mathbbm{i}\phi}ab^ \dagger).
                \end{align}
                Following \cite[Eq. (6)]{ragy2016compatibility} we can compute the Holevo bound for the weight matrix $G=\mathrm{diag}(1,1,1,\sin^2(2\theta)/4)$ for this model as an optimization problem over Hermitian matrices $X_M, X_N, X_\theta, X_\phi$ as
                \begin{align}
                    H_G:=\min_{\mathbf X}\Tr(\Re\{G\cdot V(\mathbf X)\})+\|\sqrt{G}\cdot \Im\{V(\mathbf X)\}\cdot\sqrt{G}\|_1
                \end{align}
                where the minimization is over $\mathbf X = (X_M,X_N,X_\theta,X_\phi)$ that fulfill $\Tr(\{X_i,L_j\}\rho)=2\delta_{i,j}$ and $V(\mathbf X)_{ij}=\Tr(X_iX_j\rho)$. For the so-called $D$-invariant models, this bound coincides with the RLD CR bound 
                \begin{align}
                    C_G:=\Tr(G\cdot\Re\{\mathbf{L}^{-1}\}) + \|\sqrt{G}\cdot\Im\{\mathbf{L}^{-1}\}\cdot \sqrt{G}\|_1
                \end{align}
                (see \cite{holevo2011probabilistic}) where $\mathbf L_{ij}=\Tr(\hat L_j\hat L_i^\dagger\rho)$ for the RLD operators $\hat L_i$, which are defined as $\partial_i\rho=\rho\hat L_i$. A model is called $D$-invariant, if the super-operator $\mathcal D_\rho$ defined by $\rho\mathcal D_\rho(X)+\mathcal D_\rho(X)\rho=2\mathbbm{i}[X,\rho]$ \cite{holevo2011probabilistic} satisfies $\mathcal D_\rho(T_\rho)\subset T_\rho$ for the real vector space $T_\rho$ spanned by the \glspl{sld} $L_M,L_N,L_\theta,L_\phi$ (and for every choice of model parameters). The latter is true in the current model, where explicit calculation shows that 
                \begin{align}
                    \mathcal D_\rho(L_M)=\mathcal D_\rho(L_N)=0,\qquad \mathcal D_\rho(L_\theta)\propto L_\phi,\qquad \mathcal D_\rho(L_\phi)\propto L_\theta.
                \end{align}
                For the case at hand, we can therefore use $C_G$ instead of $H_G$. The operator $\mathbf L$ is given by 
                \begin{align}
                    \mathbf L^{-1}=\left(\begin{array}{cccc}
                        M(M+1)&0&0&0\\
                        0&N(N+1)&0&0\\
                        0&0&\tfrac{2MN+M+N}{4(M-N)^2}&-\tfrac{\mathbbm{i}}{2(M-N)\sin(2\theta)}\\
                        0&0&\tfrac{\mathbbm{i}}{2(M-N)\sin(2\theta)}&\tfrac{2MN+M+N}{(M-N)^2\sin(2\theta)^2}
                    \end{array}\right),
                \end{align}
                and directly gives the statement of Theorem \ref{thm:main}.
        \end{proof}

\subsection{Convergence analysis}
\label{sec:convergence_and_complexity_analysis}
\begin{proof}[Proof of Theorem \ref{thm_convergence_analysis}]
The proof has several steps. To simplify the terms, we start by defining the following notations:
\begin{equation}
   g_\ell:=\Delta-2s_\ell, \quad   x_\ell
    :=
    \frac{K^2}{\Delta}s_\ell,
    \qquad
    \beta
    :=
    \frac{K^3}{\Delta}\epsilon_{\mathrm{loc}}
\end{equation}

\paragraph{Monotonicity.}
First, we prove that the off-diagonal norm does not increase, so the iterate remains within the initial local basin. 
A Jacobi rotation preserves the Frobenius norm and unitarily mixes the
nonpivot off-diagonal entries. Hence
\begin{equation}
    s_{\ell,t+1}^2
    =
    s_{\ell,t}^2
    -
    2|z_{\ell,t}|^2
    +
    2|r_{\ell,t}|^2.
    \label{eq:pivot_balance}
\end{equation}
A skipped pivot leaves $s_{\ell,t}$ unchanged, whereas an executed pivot
satisfies $|r_{\ell,t}|\le\epsilon_{\mathrm{loc}} < |z_{\ell,t}|$. Thus
$s_{\ell,t}$ is nonincreasing, and
\begin{equation}
    s_{\ell+1} \le s_\ell \le s_0 < \frac{\Delta}{K^2}.
    \label{eq:pivot_bound_2}
\end{equation}
Therefore, $x_\ell < 1$, proving monotonicity, and under the assumption \textup{A2} gives an important inequality
\begin{equation}
    g_\ell > \Delta\left(1-\frac{2}{K^2}\right) = \frac{\Delta(K^2-2)}{K^2} > 0.
    \label{eq:gap_bound}
\end{equation}

\paragraph{Diagonal separation.} This is a technical step needed later to convert a small residual into a small angular error.
Let $D_{\ell,t}:=\operatorname{diag}(C_{\ell,t})$. Since $C_{\ell,t}$ is
unitarily similar to $C^{(0)}$, the Hoffman--Wielandt theorem \cite{HoffmanWielandt1953} gives a
permutation $\pi_t$ such that
\begin{equation}
    \sum_{i=1}^K
    \left|
        (C_{\ell,t})_{ii}
        -
        \lambda_{\pi_t(i)}
    \right|^2
    \le
    \|C_{\ell,t}-D_{\ell,t}\|_{\mathrm F}^2
    =
    s_{\ell,t}^2.
\end{equation}
Consequently,
$|(C_{\ell,t})_{ii}-\lambda_{\pi_t(i)}|\le s_\ell$, and hence
\begin{equation}
    |(C_{\ell,t})_{ii}-(C_{\ell,t})_{jj}|
    \ge
    \Delta-2s_\ell
    =
    g_\ell,
    \qquad i\neq j.
    \label{eq:diagonal_separation}
\end{equation}
For a phase-aligned pivot block
$\left(\begin{smallmatrix}a&z\\z&d\end{smallmatrix}\right)$, its two
eigenvalues are separated by
\begin{equation}
    \rho_{\ell,t}
    :=
    \sqrt{(d-a)^2+4|z|^2}
    \ge
    |d-a|
    \ge
    g_\ell.
    \label{eq:pivot_gap}
\end{equation}

\paragraph{Angular error.} Now, using the diagonal separation argument, we show how residual threshold controls the rotation error.
This angle bound is what later makes the regeneration error quadratic.

For an executed pivot, let $\theta^*$ be the exact Jacobi angle and write
$\widehat\theta_{\ell,t}=\theta^*+\delta\theta$. Since
$|z_{\ell,t}|\le s_\ell/\sqrt2$, one has
\begin{equation}
    |\theta^*| \le \frac{|z_{\ell,t}|}{|d-a|} \le \frac{s_\ell}{\sqrt2\,g_\ell} < \frac{1}{\sqrt2(K^2-2)} < \frac{\pi}{12}.
\end{equation}
where we used Eq. \ref{eq:pivot_bound_2} and \ref{eq:gap_bound} in third inequality.
Because $|\widehat\theta_{\ell,t}|\le\pi/4$, it follows that
$|\delta\theta| < \pi/3$. The pivot residual satisfies
\begin{equation}
    |r_{\ell,t}|
    =
    \frac{\rho_{\ell,t}}{2}
    |\sin(2\delta\theta)|.
    \label{eq:residual_angle}
\end{equation}
If $|\delta\theta| > \pi/4$, then $\pi/2 < 2|\delta\theta| < 2\pi/3$, so
$|\sin(2\delta\theta)|\ge\sqrt3/2$. On the other hand,
\begin{equation}
    |\sin(2\delta\theta)| = \frac{2|r_{\ell,t}|}{\rho_{\ell,t}} \le \frac{2\epsilon_{\mathrm{loc}}}{g_\ell} < \frac{1}{3K(K^2-2)} < \frac{\sqrt3}{2},
\end{equation}
which is a contradiction. Hence $|\delta\theta|\le\pi/4$. Since
$|\sin(2u)|\ge(4/\pi)|u|$ on this interval,
\begin{equation}
    |\delta\theta|
    \le
    \frac{\pi|r_{\ell,t}|}{2\rho_{\ell,t}}
    \le
    \frac{\pi\epsilon_{\mathrm{loc}}}{2g_\ell}
    =:
    \eta_\ell.
    \label{eq:angle_error}
\end{equation}

\paragraph{One-sweep estimate.} This step converts local pivot control into a global one-sweep recurrence. This is crucial as later rotations can regenerate the pivot even though it has been handled during its Jacobi rotation. Under the assumptions \textup{A1-A2}, we now show that quadratic convergence recurrence is possible.

Consider a later rotation on $(i,k)$. Its implemented angle satisfies
\begin{equation}
    |\sin\widehat\theta|
    \le
    \frac{|c_{ik}|}{g_\ell}
    +
    \eta_\ell,
\end{equation}
where $c_{pq}:=(C_{\ell,t})_{pq}$. If $(i,j)$ has already been processed,
then $|c_{ij}|\le\epsilon_{\mathrm{loc}}$, and the later rotation gives
\begin{equation}
    c_{ij}^{+}
    =
    \cos\widehat\theta\,c_{ij}
    +
    \omega\sin\widehat\theta\,c_{kj},
    \qquad
    |\omega|=1.
\end{equation}
Therefore,
\begin{equation}
    |c_{ij}^{+}|-|c_{ij}|
    \le
    \frac{|c_{ik}||c_{kj}|}{g_\ell}
    +
    \eta_\ell|c_{kj}|.
\end{equation}
Since $2|c_{ik}|^2+2|c_{kj}|^2\le s_{\ell,t}^2\le s_\ell^2$,
\begin{equation}
    |c_{ij}^{+}|-|c_{ij}|
    \le
    \frac{s_\ell^2}{4g_\ell}
    +
    \frac{\eta_\ell s_\ell}{\sqrt2}.
\end{equation}
At most $2(K-2)$ later rotations can affect $(i,j)$. Thus
\begin{equation}
    |C_{ij}^{(\ell+1)}|
    \le
    \epsilon_{\mathrm{loc}}
    +
    \frac{(K-2)s_\ell^2}{2g_\ell}
    +
    \sqrt2(K-2)\eta_\ell s_\ell.
\end{equation}
Taking the off-diagonal Frobenius norm and using
$\sqrt{K(K-1)}\le K$ yields
\begin{equation}
    s_{\ell+1}
    \le
    K\epsilon_{\mathrm{loc}}
    +
    \frac{K(K-2)}{2g_\ell}s_\ell^2
    +
    \sqrt2K(K-2)\eta_\ell s_\ell.
    \label{eq:sweep_bound}
\end{equation}
Multiplying by $K^2/\Delta$ gives
\begin{equation}
    x_{\ell+1}
    \le
    \beta
    +
    q_\ell x_\ell^2
    +
    \gamma_\ell x_\ell,
\end{equation}
where
\begin{equation}
    q_\ell
    :=
    \frac{(K-2)\Delta}{2Kg_\ell},
    \qquad
    \gamma_\ell
    :=
    \sqrt2K(K-2)\eta_\ell.
\end{equation}
By \eqref{eq:gap_bound},
\begin{equation}
    q_\ell < \frac{K(K-2)}{2(K^2-2)} \le \frac12.
\end{equation}
Moreover, by \eqref{eq:gap_bound} and \eqref{eq:angle_error},
\begin{equation}
    \frac{\gamma_\ell}{\beta/2} < \frac{\sqrt2\pi(K-2)}{K^2-2} \le \frac{\sqrt2\pi}{7} < 1,
\end{equation}
where $(K-2)/(K^2-2)\le1/7$ for every integer $K\ge3$. Hence
$\gamma_\ell\le\beta/2$. Since $\gamma_\ell\le\beta/2$ and $x_\ell<1$, one has
$\gamma_\ell x_\ell\le\beta/2$. Therefore,
\begin{equation}
    x_{\ell+1}
    \le
    \frac{3}{2}\beta+\frac{1}{2}x_\ell^2.
    \label{eq:normalized_recurrence}
\end{equation}
Whenever $x_\ell\ge\sqrt{3\beta}$,
\[
    \frac{3}{2}\beta
    \le
    \frac{1}{2}x_\ell^2,
\]
and hence $x_{\ell+1}\le x_\ell^2$. Once
$x_\ell\le\sqrt{3\beta}$, one additional sweep gives
$x_{\ell+1}\le3\beta$.

\paragraph{Sweep count and global stopping.} Finally, we compute the run-time of the algorithm.
Since $\beta\le1/6$ and $x_0<1$,
\begin{equation}
    x_1
    \le
    \frac{3}{2}\beta+\frac{1}{2}x_0^2
    <
    \frac{3}{4}.
\end{equation}
During the quadratic-contraction regime,
\begin{equation}
    x_{1+r}
    \le
    \left(\frac{3}{4}\right)^{2^r}.
\end{equation}
Thus $x_\ell\le\sqrt{3\beta}$ once
\begin{equation}
    2^r
    \ge
    \frac{
        \log\!\bigl(1/\sqrt{3\beta}\bigr)
    }{
        \log(4/3)
    }.
\end{equation}
Consequently,
\begin{equation}
    L
    =
    O\!\left(
        \log\log\frac{1}{\beta}
    \right)
    =
    O\!\left(
        \log\log
        \frac{\Delta}{K^3\epsilon_{\mathrm{loc}}}
    \right).
\end{equation}

\end{proof}

\begin{proof}[Proof of Theorem \ref{thm:sample_complexity}]
Assume that the failure probability of a single unit-cell invocation
using $k$ copies satisfies
\begin{equation}
    P_{\mathrm{cell}}(k,\epsilon_{\mathrm{loc}})
    \leq
    \exp\!\left[
        -kC_{\star}\epsilon_{\mathrm{loc}}^{2}
        +r(k)
    \right],
    \qquad
    r(k)=O(\log k).
\end{equation}
By the definition of $O(\log k)$, there exist constants
$\beta\geq0$ and $k_{0}\geq1$ such that $ |r(k)|
    \leq
    \beta\log k,$ for every $ k\geq k_{0}.$ Therefore,
\begin{align}
    P_{\mathrm{cell}}(k,\epsilon_{\mathrm{loc}})
    &\leq
    \exp\!\left[
        -kC_{\star}\epsilon_{\mathrm{loc}}^{2}
        +\beta\log k
    \right]
    \nonumber\\
    &=
    \exp\!\left(
        -kC_{\star}\epsilon_{\mathrm{loc}}^{2}
    \right)
    \exp(\beta\log k)
    \nonumber\\
    &=
    k^{\beta}
    \exp\!\left(
        -kC_{\star}\epsilon_{\mathrm{loc}}^{2}
    \right),
    \qquad
    k\geq k_{0}.
\end{align}
The condition
$P_{\mathrm{cell}}(k,\epsilon_{\mathrm{loc}})
\leq\alpha_{\mathrm{loc}}$
is guaranteed whenever
\begin{equation}
    kC_{\star}\epsilon_{\mathrm{loc}}^{2}
    -
    \beta\log k
    \geq
    \log\!\left(
        \frac{1}{\alpha_{\mathrm{loc}}}
    \right).
\end{equation}
A sufficient asymptotic choice is thus
\begin{equation}
    k
    =
    O\!\left[
        \frac{1}
             {C_{\star}\epsilon_{\mathrm{loc}}^{2}}
        \left\{
            \log\!\left(
                \frac{1}{\alpha_{\mathrm{loc}}}
            \right)
            + D
        \right\}
    \right], \quad D = \beta
            \log\!\left(
                1+
                \frac{1}
                     {C_{\star}\epsilon_{\mathrm{loc}}^{2}}
                \log\!\left(
                    \frac{1}{\alpha_{\mathrm{loc}}}
                \right)
            \right)
\end{equation}

Since the network contains
$O\!\left(KL(\epsilon_{\mathrm{loc}})\right)$
unit-cell invocations, the total number of copies satisfies $  N(\epsilon_{\mathrm{loc}},\alpha_{\mathrm{loc}})
    =
    O\!\left[
        K L(\epsilon_{\mathrm{loc}})\,k
    \right].$
Using $L(\epsilon_{\mathrm{loc}})
    =
    O\!\left(
        \log\log
        \frac{1}{\epsilon_{\mathrm{loc}}}
    \right),$
we obtain
\begin{align}
    N(\epsilon_{\mathrm{loc}},\alpha_{\mathrm{loc}})
    =
    O\!\Bigg[
        \frac{K}
             {C_{\star}\epsilon_{\mathrm{loc}}^{2}}
        \log\log\!\left(
            \frac{1}{\epsilon_{\mathrm{loc}}}
        \right)
        \Bigg\{
            &
            \log\!\left(
                \frac{1}{\alpha_{\mathrm{loc}}}
            \right) + D
        \Bigg\}
    \Bigg].
\end{align}
The first term gives the leading exponential-order scaling, while the second is the subleading logarithmic correction induced by the
$O(\log k)$ term.

It remains to control the failure probability of the complete network. Let
$\mathcal{F}_r$ denote the failure event of the $r$-th unit-cell
invocation. Since the total number of unit-cell invocations is at most $R
    =
    \frac{K L(\epsilon_{\mathrm{loc}})}{2},$
and $\Pr(\mathcal{F}_r)\leq\alpha_{\mathrm{loc}}$ for every $r$, the union
bound yields
\begin{align}
    \Pr\!\left(
        \bigcup_{r=1}^{R}\mathcal{F}_r
    \right)
    &\leq
    \sum_{r=1}^{R}\Pr(\mathcal{F}_r)
    \nonumber\\
    &\leq
    \frac{K L(\epsilon_{\mathrm{loc}})}{2}
    \alpha_{\mathrm{loc}}.
\end{align}
Consequently, the complete network succeeds with probability at least $1-
    \frac{K L(\epsilon_{\mathrm{loc}})}{2}
    \alpha_{\mathrm{loc}}.$

\end{proof}

\subsection{Quantum advantage}
\label{sec:quantum_advantage}
We now compare the copy complexity of the sensing-based approach with that of a conventional heterodyne tomography strategy. Similar analysis has been performed previously in the context of weak-Thermal-Light Interferometry and Brillouin sensing \cite{PhysRevLett.117.190801, PhysRevLett.107.270402,adhikari2026fundamentallimitseavesdropperdetection}. 

\paragraph{Fisher-information comparison}

As our output state, $\rho$ is a Gaussian state, it is useful to write it in the covariance-matrix representation \cite{bookserfini,HolevoBook}. We order the quadratures as $ \mathbf r=(x_a,x_b,p_a,p_b)^T $ where, $x_a=a+a^\dagger$, $p_a=(a-a^\dagger)/i,  x_b=b+b^\dagger, p_b=(b-b^\dagger)/i$ such that $[x_j,p_k]=2i\delta_{jk}.$ The Gaussian state of interest to us has zero mean with covariance matrix $\Sigma_{ij}
    =
    \frac{1}{2}
    \langle r_i r_j+r_j r_i\rangle .$ Therefore, the input covariance matrix of $S_M\otimes S_N$ is
\begin{equation}
    \Sigma_{\rm in}
    =
    \begin{pmatrix}
        2M+1 & 0 & 0 & 0\\
        0 & 2N+1 & 0 & 0\\
        0 & 0 & 2M+1 & 0\\
        0 & 0 & 0 & 2N+1
    \end{pmatrix}.
\end{equation}
The output covariance matrix has the block form $ \Sigma(M,N,\theta)
    =
    \Sigma_x(M,N,\theta)\oplus \Sigma_p(M,N,\theta),$

\begin{equation}
    \Sigma_x(M,N,\theta)
    =
    \Sigma_p(M,N,\theta)
    =
    R(\theta)
    \begin{pmatrix}
        2M+1 & 0\\
        0 & 2N+1
    \end{pmatrix} R(\theta)^T
\end{equation}
where, $ R(\theta)
    =
    \begin{pmatrix}
        \cos\theta & \sin\theta\\
        -\sin\theta & \cos\theta
    \end{pmatrix}$ is supposed to represent the beamsplitter. Explicitly, 
\begin{equation}
    \Sigma(M,N,\theta)
    =
    \begin{pmatrix}
        \tilde A & \tilde C & 0 & 0\\
        \tilde C & \tilde B & 0 & 0\\
        0 & 0 & \tilde A & \tilde C\\
        0 & 0 &\tilde C &\tilde B
    \end{pmatrix},
    \label{eq:covariance_matrix_output}
\end{equation}
with $ \tilde A = 2(M\cos^2\theta + N \sin^2 \theta)+1
     = 2\langle a^\dagger a\rangle+1,$ $ \tilde B = 2(M\sin^2\theta + N\cos^2\theta)+1 = 2\langle b^\dagger b\rangle+1,$ and $ \tilde C
    =
    (N-M)\sin2\theta .$

The ultimate limit for the Gaussian covariance matrix is given by the quantum Fisher information (QFI) \cite{monras2013phase,_afr_nek_2018}:
\begin{equation}
    F^{\rm Q}_{\mu\nu}
    =
    \tfrac12\,
    \mathrm{vec}[\partial_\mu\Sigma]^{T}
    \big(\Sigma\otimes\Sigma-\Omega\otimes\Omega\big)^{-1}
    \mathrm{vec}[\partial_\nu\Sigma],
    \qquad
    \Omega=\begin{pmatrix}0 & \eins_2\\ -\eins_2 & 0\end{pmatrix},
    \label{eq:gaussian_qfi}
\end{equation}
Evaluating \eqref{eq:gaussian_qfi} gives a diagonal,
$\theta$-independent matrix,
\begin{equation}
    F^{\rm Q}
    =
    \mathrm{diag}\!\left(
        \frac{1}{M(M+1)},\;
        \frac{1}{N(N+1)},\;
        \frac{4(M-N)^2}{2MN+M+N}
    \right),
    \label{eq:qfi_matrix}
\end{equation}
with corresponding quantum Cram\'er--Rao bounds
\begin{equation}
    \mathrm{Var}(\hat M)\ge\frac{M(M+1)}{k},
    \quad
    \mathrm{Var}(\hat N)\ge\frac{N(N+1)}{k},
    \quad
    \mathrm{Var}(\hat\theta)\ge\frac{2MN+M+N}{4k\,(M-N)^2}.
    \label{eq:qcrb}
\end{equation}
It is possible that this bound cannot be achieved by any measurements, and one has to consider the Holevo bound. Still, Eq.\ref{eq:qcrb} provides us with an upper baseline. 

Now, let us compare the Fisher information obtained from a state-of-the-art heterodyne measurement with the quantum Cramer-Rao bound. This gives us a baseline for the comparison with our proposed protocol.  Heterodyning the output state $\rho(M,N,\theta)$ gives an outcome distribution with covariance matrix $\Sigma_{\rm het}$ as
\begin{equation}
    \Sigma_{\rm het}
    =
    \Sigma(M,N,\theta)+ \eins_4
    =
    \begin{pmatrix}
        \tilde A+1 & \tilde C & 0 & 0\\
        \tilde C & \tilde B+1 & 0 & 0\\
        0 & 0 & \tilde A+1 & \tilde C\\
        0 & 0 & \tilde C & \tilde B+1
    \end{pmatrix}
    \;\equiv\;
    W\oplus W,
    \label{eq:het_covariance}
\end{equation}
where, $ W=\begin{pmatrix}\tilde A+1 & \tilde C\\[2pt] \tilde C & \tilde B+1\end{pmatrix}$. The heterodyne Fisher information matrix is
\begin{equation}
    F^{\rm het}_{\mu\nu}
    =
    \tfrac12\,
    \mathrm{Tr}\!\left[
        \Sigma_{\rm het}^{-1}(\partial_\mu\Sigma_{\rm het})\,
        \Sigma_{\rm het}^{-1}(\partial_\nu\Sigma_{\rm het})
    \right] =  \mathrm{Tr}\!\left[
        W^{-1}(\partial_\mu W)\,W^{-1}(\partial_\nu W)
    \right].
    \label{eq:gaussian_cfi}
\end{equation}
Explicitly, the heterodyne Fisher information is diagonal and
independent of $\theta$,
\begin{equation}
    F^{\rm het}
    =
    \mathrm{diag}\!\left(
        \frac{1}{(M+1)^2},\;
        \frac{1}{(N+1)^2},\;
        \frac{2(M-N)^2}{(M+1)(N+1)}
    \right).
    \label{eq:het_fisher}
\end{equation}
For $k$ independent copies, the heterodyne variance scales as:
\begin{equation}
    \mathrm{Var}(\hat M)_{\rm het}\ge\frac{(M+1)^2}{k},
    \quad
    \mathrm{Var}(\hat N)_{\rm het}\ge\frac{(N+1)^2}{k},
    \quad
    \mathrm{Var}(\hat\theta)_{\rm het}\ge\frac{(M+1)(N+1)}{2k\,(M-N)^2}.
    \label{eq:het_crb}
\end{equation}

Comparing  Eq. \eqref{eq:het_fisher} with quantum limit Eq. \eqref{eq:qfi_matrix}, heterodyne is
strictly suboptimal for every parameter,
\begin{equation}
    \frac{F^{\rm het}_{MM}}{F^{\rm Q}_{MM}}=\frac{M}{M+1},
    \qquad
    \frac{F^{\rm het}_{NN}}{F^{\rm Q}_{NN}}=\frac{N}{N+1},
    \qquad
    \frac{F^{\rm het}_{\theta\theta}}{F^{\rm Q}_{\theta\theta}}
    =
    \frac{2MN+M+N}{2(M+1)(N+1)}<1 .
    \label{eq:ratios}
\end{equation}
Figure~\ref{fig:fisher_ratio_comparison} illustrates $F^{\mathrm{het}}_{\theta\theta}/F^{Q}_{\theta\theta}$ part of  Eq.~\eqref{eq:ratios} over the full range of thermal occupations. The ratio is strictly below unity everywhere on the $(M,N)$ plane, confirming that heterodyne detection is never optimal for estimating the beam-splitter angle
at any finite occupation. Heterodyne detection approaches the quantum
    limit (ratio $\to 1$) only when both modes are bright, while the quantum advantage (ratio $\to 0$) is largest when either occupation is small.

\begin{figure}
        \centering
        \includegraphics[ width=\linewidth]{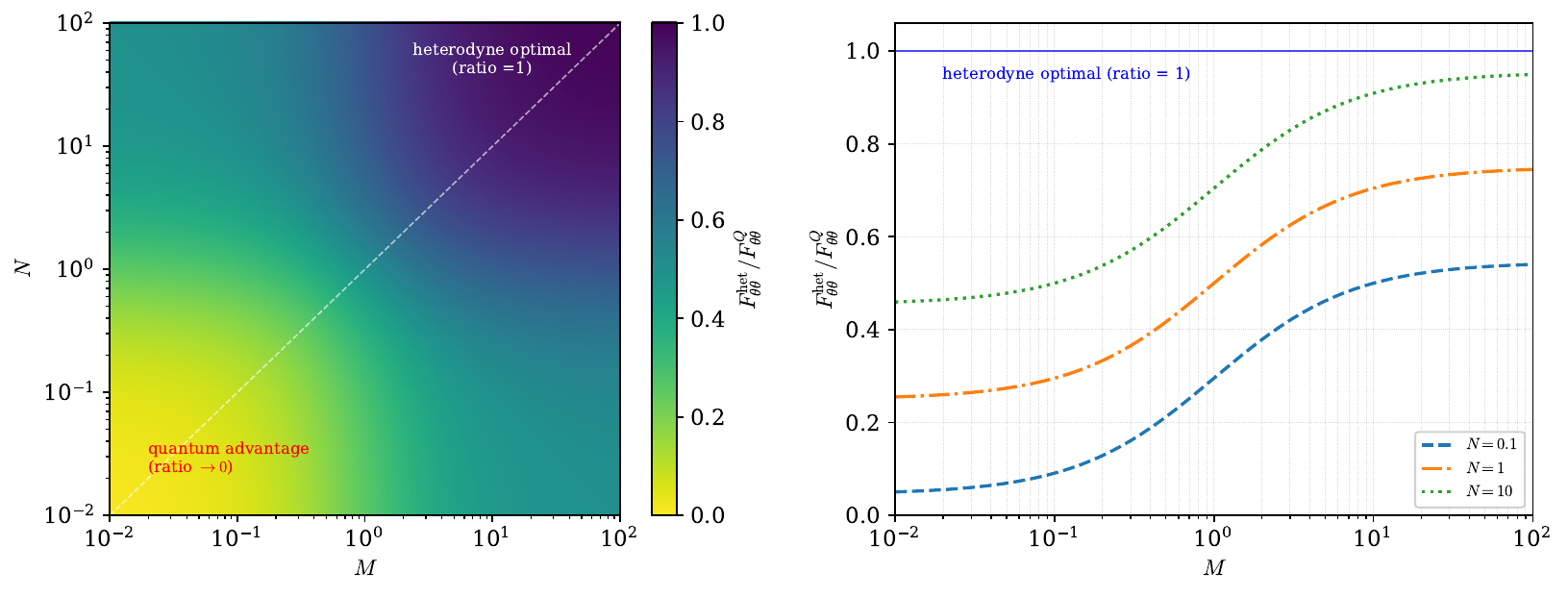}
         \caption{Comparison of the heterodyne and quantum-optimal Fisher information
    for the beam-splitter angle $\theta$, given by the ratio
    $F^{\mathrm{het}}_{\theta\theta}/F^{Q}_{\theta\theta}$ in Eq.~\eqref{eq:ratios}. 
    The left panel shows that the ratio is less than unity for all finite thermal occupations $M,N$. The right panel includes slices at some fixed $N$. }
        \label{fig:fisher_ratio_comparison}
\end{figure}

\paragraph{Unit-cell comparison}

\begin{proof}[Proof of Theorem \ref{thm:unit_cell_performance}] We now treat the different unit cell mechanisms:
\paragraph{Spin \gls{povm}}
In this part of our proof, we make use of representation-theoretic tools. Their use in quantum information theory is documented as early as 1982 \cite{Holevo1982Probabilistic}, later work studied spectrum estimation and entropic bounds \cite{KeylWerner2001Spectrum,Hayashi2001RelativeEntropy}, followed by applications to problems concerning the structure of multipartite states, spectral compatibility, entanglement, quantum method of types, and, more recently, symmetry-reduced quantum information processing and mixed Schur--Weyl duality \cite{ChristandlMitchison2006Spectra,ChristandlHarrowMitchison2007Kronecker,WalterDoranGrossChristandl2013EntanglementPolytopes,Noetzel2014HypothesisTesting,GrossNezamiWalter2021CliffordSchurWeyl,GrinkoOzols2024LinearProgramming,Nguyen2023MixedSchur}. In the present setting, we apply representation theory to estimate the multiplicity $\mu_{j,t}$ of the representation of the estimator $J_\omega$ inside the Hilbert space $\mathrm{range}(E_jE_t)\subset\mathcal F^{\otimes k}$. 

Consider the measurement of $(M,N,\theta,\phi)$ where $(\theta,\phi)$ are estimated as described in \eqref{def:(theta,phi)-estimator}. 
The state $\rho$ can be written as 
\begin{align}
    \rho^{\otimes k}&=U(\theta,\phi)^{\otimes k}c(xy)^{N_{AB}/2}q^{J_\mathbf{z}}U(\theta,\phi)^{\dagger\otimes k}\\
        &=c(xy)^{N_{AB}/2}q^{J_\mathbf{n}}
\end{align} 
with $x=M/(M+1)$, $y=N/(N+1)$, $q=x/y$, (implying $q>1$) and $c=(1-x)^k(1-y)^k$, where 
\begin{align}
    \mathbf n=(\sin(2\theta)\cos(\phi),\sin(2\theta)\sin(\phi),\cos(2\theta)).
\end{align}
We emphasize the conditions $[N_{AB},U(\theta,\phi)^{\otimes k} ]=[J,U(\theta,\phi)^{\otimes k}]=0$. After obtaining measurement result $t$ of the total photon number $N_{AB}$, and $j$ of the spin operator $J$, the state is thus given by 
\begin{align}
    p(j,t)^{-1}\cdot E_jE_t\rho^{\otimes k} E_jE_t
        &=p(j,t)^{-1}\cdot c\cdot (xy)^{t/2}\cdot E_jE_tq^{J_\mathbf{n}}E_jE_t,
\end{align}
where $p(j,t)$ is the (nonzero) probability of obtaining result $(j,t)$. Thus, a follow-up measurement of $\omega$ has probability density
\begin{align}
    p(\omega|j,t) 
        &= \tfrac{c\cdot (xy)^{t/2}}{p(j,t)}\cdot \tfrac{2j+1}{4\pi}\cdot\Tr(M_j(\omega)E_t q^{J_\mathbf{n}})\\
        &= \tfrac{c\cdot (xy)^{t/2}}{p(j,t)}\cdot \mu_{j,t}\cdot \tfrac{2j+1}{4\pi}\cdot \langle j,\omega|q^{J_\mathbf{n}}|j,\omega\rangle\\
        &= \tfrac{c\cdot (xy)^{t/2}}{p(j,t)}\cdot \mu_{j,t}\cdot \tfrac{2j+1}{4\pi}\cdot \langle j,\omega|e^{\ln(q)J_\mathbf{n}}|j,\omega\rangle\\
        &= \tfrac{c\cdot (xy)^{t/2}}{p(j,t)}\cdot \mu_{j,t}\cdot \tfrac{2j+1}{4\pi}\sum_{m=0}^{2j}q^{j-m}\langle j,\omega|j,\mathbf n,j-m\rangle\langle j,\mathbf n,j-m|j,\omega\rangle\\
        &= \tfrac{c\cdot (xy)^{t/2}}{p(j,t)}\cdot \mu_{j,t}\cdot \tfrac{2j+1}{4\pi}q^j\sum_{m=0}^{2j}\binom{2j}{m}\big(\cos(\nu/2)^2\big)^{2j-m}\big(\sin(\nu/2)^{2}/q\big)^m\\
        &= \tfrac{c\cdot (xy)^{t/2}}{p(j,t)}\cdot \mu_{j,t}\cdot \tfrac{2j+1}{4\pi}q^j(\cos(\nu/2)^2+\sin(\nu/2)^{2}q^{-1})^{2j}
\end{align}
where we used \cite[Eqn. (A.1)]{Manai_2023}: Letting $\nu$ be the angle between $\omega$ and $\mathbf n$, it holds
\begin{align}
    |\langle j,\omega,j-m|j,\mathbf n,j\rangle|^2
        &=\binom{2j}{m}\cos(\nu/2)^{4j-2m}\sin(\nu/2)^{2m}.
\end{align}
The probability that $|\nu|\geq\epsilon$ can be derived via the following integral:
\begin{align}
    \int_{\epsilon}^{\pi}q^j(\cos(\nu/2)^2+\sin(\nu/2)^{2}q^{-1})^{2j}&\sin(\nu)d\nu
        =     q^{-j}\int_{\epsilon}^{\pi}\big(q - (q-1)\sin(\nu/2)^{2}\big)^{2j}\sin(\nu)d\nu\\
        &= -q^{-j}\tfrac{2}{(2j+1)(q-1)}\Big[\big[q - (q-1)\sin(\nu/2)^{2}\big]^{2j+1}\Big]^{\pi}_\epsilon\\
        &= \frac{2q^{-j}}{(2j+1)(q-1)}\Big[\big((q- (q-1) \sin^2(\epsilon/2)\big)^{2j+1}-1\Big].
\end{align}
With this integral, we obtain 
\begin{align}
    \mathbb P(|\nu|\geq\epsilon|j,t) 
        &= \tfrac{c\cdot (xy)^{t/2}}{p(j,t)}\cdot \mu_{j,t}\cdot \tfrac{2j+1}{2}\frac{2q^{-j}}{(2j+1)(q-1)}\Big[\big((q- (q-1) \sin^2(\epsilon/2)\big)^{2j+1}-1\Big].\label{eqn:p(nu>eps)}
\end{align}
Since $\mathbb P(|\nu|\geq0|j,t)=1$ must hold and since $\nu\geq0$, it follows 
\begin{align}
    \mathbb P(\nu\geq\epsilon|j,t) 
        &= \frac{1}{q^{2j+1}-1}\Big[\big((q- (q-1) \sin^2(\epsilon/2)\big)^{2j+1}-1\Big]\\
        &\leq \big((1- (1-1/q) \sin^2(\epsilon/2)\big)^{2j+1}\\
        &=\exp((2j+1)\log[1-(1-1/q)\sin^2(\epsilon/2)]).\label{eqn:nu_geq_eps|j,t}
\end{align}
By the relation $\|\omega - \mathbf n\|^2=2-2\langle\omega,\mathbf n\rangle = 4\sin^2(\nu/2)$ we thus obtained a bound on our precision in the estimate $\omega$ of $\mathbf n$. Before we can progress, we need to first compute $\mu_{j,t}$ and establish asymptotic bounds on it. 

The symmetrized $2k$-mode Hilbert space $\mathcal H_1\subset (\mathcal F\otimes\mathcal F)^{\otimes k}$ that contains exactly $1$ photon is isomorphic to $\mathbb C^2\otimes\mathbb C^k$, where the first factor $\mathbb C^2=\mathrm{span}(\{e_A,e_B\})$ identifies the occupancy in the $A$ and $B$ system and the second factor the copy of the system which contains the photon. Correspondingly, the Hilbert space $\mathcal H_t$ is isomorphic to $\mathrm{Sym}^t(\mathbb C^2\otimes\mathbb C^k)$ with an explicit mapping given by
\begin{align}
    |n_A^k,n_B^k\rangle \to \sqrt{\binom{t}{n_{A,1},\ldots,n_{A,k},n_{B,1},\ldots,n_{B,k}}}\mathbf{S}_t\bigotimes_{i=1}^k(e_A\otimes e_i)^{\otimes n_{A,i}}\otimes (e_B\otimes e_i)^{\otimes n_{B,i}}
\end{align}
where $n_A=(n_{A,1},\ldots,n_{A,k})$ and $n_B=(n_{B,1},\ldots,n_{B,k})$ are the photon numbers in the different modes of the $A$- and  the $B$ systems and the symmetrizer $\mathbf S_t=\tfrac{1}{t!}\sum_{\Pi\in \mathcal S_t}\Pi$ acts on the $t=\sum_in_{A,i}+n_{B,i}$ copies of $(\mathbb C^2\otimes\mathbb C^k)$ with $\mathcal S_t$ being the symmetric group. 
According to \cite[Exercise 6.11]{FultonHarris2004}, we can write 
\begin{align}
    \mathrm{Sym}^t(\mathbb C^2\otimes\mathbb C^k)&\simeq\bigoplus_\lambda S_\lambda(\mathbb C^2)\otimes S_\lambda(\mathbb C^k)\\
        &=\bigoplus_{s=0}^{t/2} S_{(t-s,s)}(\mathbb C^2)\otimes S_{(t-s,s)}(\mathbb C^k)
\end{align}
where the Young tableaux $\lambda=(\lambda_1,\lambda_2)$ are restricted to have a total of $t$ entries, with at most $k$ columns and $2$ rows of lengths $\lambda_1,\lambda_2$. Here, $S_\lambda(V)$ is the corresponding irreducible representation of the general linear group on $V$. Every such diagram can then be identified with a pair $(t-s,s)$ of natural numbers where $s\in\{0,\ldots,t/2\}$. 

The spin operators $J_x,J_y,J_z$ can, on $\mathrm{Sym}^t(\mathbb C^2\otimes\mathbb C^k)$, be represented as
\begin{align}
    \tilde J_x = \frac{1}{2}\sum_{i=1}^t\eins\otimes\ldots\otimes \tilde J_x\otimes\ldots\eins
\end{align}
where $\tilde J_x = \sigma_x\otimes\eins_k$. On $\bigoplus_{s=0}^{t/2} S_{(t-s,s)}(\mathbb C^2)\otimes S_{(t-s,s)}(\mathbb C^k)$, their action is restricted to $S_{(t-s,s)}(\mathbb C^2)$, and the same applies to $\mathcal J:=\sum_iJ_i^2$ \cite[Eq. (16)]{Yadin_2023}.
According to \cite[Theorem 6.4]{FultonHarris2004} we have $\mathrm{dim}(S_\lambda(\mathbb C^k)) = \prod_{1\leq i < j\leq k}\frac{\lambda_i-\lambda_j+j-i}{j-i}$. For any given diagram of shape $\lambda_1=t-s$, $\lambda_2=s$ we therefore have
\begin{align}
    \mathrm{dim}(S_\lambda(\mathbb C^k)) &= \frac{t-2s+1}{t-s+1}\binom{t-s+k-1}{k-1}\binom{s+k-2}{k-2}\label{eqn:combinatorial-dim(S_lambda)}\\
        &=\frac{2j+1}{t/2+j+1} \binom{t/2+j+k-1}{k-1}\binom{t/2-j+k-2}{k-2}\\
        &=\frac{a_k}{k\pi u_k\sqrt{u_kv_k(1+u_k)(1+v_k)^3}}\exp(k\cdot[h(u_k)+h(v_k)]+\mathcal O(\log k))\label{eqn:dim(S_lambda)}
\end{align}
where $u_k=\tfrac{t+2j}{2k}$ and $v_k=\tfrac{t-2j}{2k}$ are the estimators for $M$ and $N$ while $h(x):=(x+1)\log(x+1)-x\log x$ is the entropy of a mean zero thermal state with mean photon number $x$. The probability of obtaining measurement outcome $(j,t)$ can now be estimated tightly by inserting \eqref{eqn:dim(S_lambda)} into \eqref{eqn:p(nu>eps)} with $\epsilon=0$, giving
\begin{align}
    p(j,t) = c(xy)^{t/2}\mu_{j,t}\sum_{m=-j}^jq^m = c(xy)^{t/2}\mu_{j,t}q^{-j}\sum_{d=0}^{2j}q^d.
\end{align}
Upon evaluating the geometric series and writing out $c$ explicitly, we obtain 
\begin{align}
    p(j,t) \leq (1-x)^k(1-y)^k(xy)^{t/2}q^{-j}\cdot \frac{1-q^{2j+1}}{1-q}\cdot \exp(k\cdot[h(u_k)+h(v_k)]+\mathcal O(\log k)).
\end{align}

Using the relative entropy $D(R\|S)=R\log(R/S)-(R+1)\log[(R+1)/(S+1)]$ between two mean zero thermal states, and using the definition of $x$ and $y$, we therefore obtain
\begin{align}
    p(j,t)\leq \exp(-k\cdot [D(u_k\|M) + D(v_k\|N)]+\mathcal{O}(\log k)).
\end{align}
Since $D(R\|S)\geq0$ and $D(R\|S)=0$ if and only if $R=S$, $p(j,t)$ is suppressed for all realizations of $(j,t)$ for which the estimators $(u_k,v_k)$ deviates from the ``true'' parameters $(M,N)$.

Assume now that a selection rule was applied after the $(j,t)$ measurement, and that the $\omega$ measurement is only carried out if $j\in \mathbf A$ where $\mathbf A=\{j\geq k\cdot(\Delta/2-\delta)\}$ and $t\in B$ where $\mathbf B=\{t\leq k\cdot(2B+\Delta)\}$ for some $\delta>0$, $\Delta<M-N$ and $M<B$. While the first condition allows to decide whether a new rotation angles should be estimated and implemented, the second condition is mostly important for the purpose of obtaining sharp bounds in the following analysis: The probability for the event $\mathbf A\cap \mathbf B$ is bounded by first noting that 
\begin{align}
    \mathbb P(\mathbf B^\complement)
        &\leq \exp(-k\cdot2\cdot D(B+\Delta/2\|B))\\
        &\leq \exp\left(-k\tfrac{\Delta^2}{4(B+\Delta/2)(B+\Delta/2+1)}\right), 
\end{align}
(see \cite{dembo1998large} for the large deviation bound, the second bound follows from Taylor series expansion of $\log(x)$ around $x=1$). Conditioned on having a value $t\in \mathbf B$, it follows  
\begin{align}
    \mathbb P(\mathbf A^\complement\cap\mathbf B)
        &\leq \sum_{j\in \mathbf A^\complement}\sum_{t\in \mathbf B}\exp(-k[D(u_k\|M) + D(v_k\|N)] )\\
        &\leq k^2(B+\Delta)^2\exp(-k \tfrac{\delta^2}{(B+\delta)(B+\delta+1)})
\end{align}
where we used that for $R,s>0$ the estimate $D(R+s\|R)\geq\tfrac{1}{2}\tfrac{s^2}{(R+s)(R+s+1)}$ must hold. In addition, by assumption, $(u_k-M)^2+(v_k-N)^2\geq2\delta^2$.


Thus setting $\delta=s\Delta$ and utilizing $\Delta<B/2$, the probability of the event $j\in\mathbf A$ is lower bounded as
\begin{align}
    \mathbb P(\mathbf A\cap \mathbf B)&\geq1-\exp(-k \tfrac{\Delta^2s^2}{(B(1+s/2))(B(1+s/2)+1+s/2)}) \\
    &\geq1-\exp(-k\tfrac{s^2}{(1+s/2)^2}\tfrac{\Delta^2}{B(B+1)}+\mathcal O(\log k))
\end{align}
Further, we have 
\begin{align}
    \mathbb P(\mathbf B^\complement)
        &\leq \exp\left(-k\tfrac{\Delta^2}{7B(B+1)}\right).
\end{align}
Given however that $j\in\mathbf A$ we have, following \eqref{eqn:nu_geq_eps|j,t} and using $\sin(x)>x(1-x)$ for small enough non-negative $x$, the estimate  
\begin{align}
    \mathbb P(\nu\geq\epsilon|j,t) 
        &\leq \exp(k\cdot2(\Delta/2-s\Delta)\log[1-(1-1/q)\sin^2(\epsilon/2)])\\
        &\leq \exp(k\cdot(\Delta(1-2s))\log[1-\tfrac{M-N}{M(N+1)}(1-\epsilon)^2\epsilon^2/4])\\
        &\leq \exp(-k\cdot(\Delta\tfrac{(1-2s)}{4})\tfrac{\Delta}{B(B+1)}(1-\epsilon)^2\epsilon^2])\\
        &= \exp(-k\cdot\tfrac{(1-2s)}{4}\tfrac{\Delta^2}{B(B+1)}(1-\epsilon)^2\epsilon^2]).
\end{align}
By picking $s=1/3$ and using $(1-\epsilon)^2\epsilon^2\geq(7/8)\epsilon^2$ for $\epsilon$ small enough, we obtain that the probability $\mathbb P(\nu\geq\epsilon) $ can, including an abort if $t$ is too large or $j$ too small, be bounded as 
\begin{align}
    \mathbb P(\nu\geq\epsilon|(j,t)\in\mathbf A\times\mathbf B) \leq \exp(-k\tfrac{\Delta^2\epsilon^2}{16B(B+1)}+\mathcal O(\log k)).
\end{align}
At the same time, for the abort conditions it holds
\begin{align}
    \max\{\mathbb P(\mathbf A^\complement),\mathbb P(\mathbf B^\complement)\} \leq \exp(-k\tfrac{\Delta^2\epsilon^2}{16B(B+1)}+\mathcal O(\log k)).
\end{align}
As can be seen, better constants in the order of $1/14$ instead of $1/16$ are possible as well.

\paragraph{\gls{tdma} Photon Counting}

We define the bounded random variable,  $r$-th repetition of setting $i$, 
\begin{equation}
    Y_{i,r} := \begin{cases}
        +1, & \text{if the outcome is }(1,0),\\
        -1, & \text{if the outcome is }(0,1),\\
        0,  & \text{otherwise}.
    \end{cases}
    \label{eq:tdma_random_variable}
\end{equation}
 Then
\begin{equation}
    \mathbb{E}[Y_{i,r}] = p_i(1,0)-p_i(0,1) = \frac{M-N}{(M+1)(N+1)}n_i,\label{eq:tdma_mean}
\end{equation}
\begin{equation}
    \mathbb{E}[Y_{i,r}^{2}]
    =
    p_i(1,0)+p_i(0,1)
    =
    \frac{M+N+2MN}{(M+1)(N+1)}.
    \label{eq:tdma_second_moment}
\end{equation}
Therefore, each setting gives one component of $\boldsymbol{n}$ with some factor. 
For notational simplicity, we define:
\begin{equation}
    \lambda := \frac{M-N}{(M+1)(N+1)}, \qquad      q:= \frac{M+N+2MN}{(M+1)(N+1)}
\end{equation}
Hence,
\begin{equation}
  \operatorname{Var}(Y_{i,r})=q-\lambda^2n_i^2\leq q  , \qquad |Y_{i,r}-\mathbb{E}Y_{i,r}|\leq1+\lambda\leq2, \qquad q-\lambda=\frac{2N(1+M)}{(M+1)(N+1)}\geq0
\end{equation}
which are used below to control the Bernstein denominator.

\paragraph{Estimator.}
The estimator is as follows:
    \begin{equation}
    \widehat{s}_i :=\frac{1}{m}\sum_{r=1}^{m}Y_{i,r},
\end{equation}
and, we collect $\widehat{\boldsymbol{s}}=(\widehat{s}_x,\widehat{s}_y,\widehat{s}_z)$ such that, 
\begin{equation}
    \mathbb{E}[\widehat{\boldsymbol{s}}]=\lambda\boldsymbol{n}. 
\end{equation}
Therefore, the estimator on average does point towards the correct direction, but with a scale, which can be removed by normalization:
\begin{equation}
    \widehat{\boldsymbol{n}}_{\mathrm{pc}}
    :=
    \frac{\widehat{\boldsymbol{s}}}{\|\widehat{\boldsymbol{s}}\|_2},
    \label{eq:tdma_normalized_estimator}
\end{equation}
From the standard triangle inequality, we get:
\begin{equation}
    \bigl\|\widehat{\boldsymbol{n}}_{\mathrm{pc}}-\boldsymbol{n}\bigr\|_2
    \leq
    \frac{2}{\lambda}
    \bigl\|\widehat{\boldsymbol{s}}-\lambda\boldsymbol{n}\bigr\|_2 .
    \label{eq:tdma_normalization_bound}
\end{equation}
A vector in $\mathbb{R}^3$ of norm at least $\lambda\epsilon/2$ has a coordinate of modulus at least $\lambda\epsilon/(2\sqrt{3})$, hence
\begin{equation}
    \bigl\{
        \|\widehat{\boldsymbol{n}}_{\mathrm{pc}}-\boldsymbol{n}\|_2\geq\epsilon
    \bigr\}
    \subseteq
    \bigcup_{i\in\{x,y,z\}}
    \left\{
        \bigl|\widehat{s}_i-\lambda n_i\bigr|
        \geq
        \frac{\lambda\epsilon}{2\sqrt{3}}
    \right\}.
    \label{eq:tdma_event_inclusion}
\end{equation}

Since $ |Y_{i,r}-\mathbb{E}Y_{i,r}|
    \leq2$, and \(\operatorname{Var}(Y_{i,r})\leq q\),  Bernstein's inequality (see e.g. \cite{BoucheronLugosiMassart2013}) gives
\begin{equation}
    \Pr
    \left(
        \left|
            \widehat{s}_i-\lambda n_i
        \right|
        \geq t
    \right)
    \leq
    2\exp
    \left[
        -
        \frac{
            mt^2
        }{
            2\left(
                q+\frac{2t}{3}
            \right)
        }
    \right].
    \label{eq:tdma_bernstein}
\end{equation}
Setting $t=\lambda\epsilon/(2\sqrt{3})$, a union bound over the three settings
and $m=k/3$ yield the nonasymptotic bound
\begin{align}
    \mathbb P\left(
        \bigl\|\widehat{\boldsymbol{n}}_{\mathrm{pc}}-\boldsymbol{n}\bigr\|_2
        \geq\epsilon
    \right) &\leq6\exp(-\frac{k\lambda^2\epsilon^2}{144\left(q+\frac{\lambda\epsilon}{3\sqrt{3}}\right)})\label{eq:tdma_finite_sample_bound}\\
    &\leq 6\exp(-\frac{k\Delta^2\epsilon^2}{144B(B+1)^3})
\end{align}
which applies whenever 
\paragraph{Heterodyne- and Homodyne Unit Cell}

The covariance matrix of the thermal state $S_M\otimes S_N$ is given by 
\begin{align}
    \sigma_{M,N} = \left(\begin{array}{cccc} 
                            2M+1 & 0 & 0 & 0\\
                            0 & 2M+1 & 0 & 0 \\
                            0 & 0 & 2N+1 & 0\\
                            0 & 0 & 0 & 2N+1
                        \end{array}\right)
                = \left(\begin{array}{cc} 
                            a\eins & 0 \\
                            0 & b\eins 
                        \end{array}\right)
\end{align}
with $a=2M+1$ and $b=2N+1$. According to \cite[Eq. (5.9)]{bookserfini} this matrix transforms as
\begin{align}
    \sigma_{M,N}\to S\sigma_{M,N}S^T
\end{align}
under a quadratic unitary. For the phase-shifter on mode $A$, this transformation is described by \cite[Eq. (5.12)]{bookserfini}
\begin{align}
    S_\phi =\left(\begin{array}{cccc} 
                            \cos(\phi) & \sin(\phi) & 0 & 0\\
                            -\sin(\phi) & \cos(\phi) & 0 & 0 \\
                            0 & 0 & 1 & 0\\
                            0 & 0 & 0 & 1\\
            \end{array}\right)
            = \left(\begin{array}{cc} 
                            V(\phi) & 0 \\
                            0 & \eins 
                        \end{array}\right)
\end{align}
and for the beam-splitter we have by \cite[Eq. (5.13)]{bookserfini}
\begin{align}
    S_\theta =\left(\begin{array}{cccc} 
                            \cos(\theta) & 0 & \sin(\theta) & 0\\
                            0 & \cos(\theta) & 0 & \sin(\theta)\\
                            -\sin(\theta) & 0 & \cos(\theta) & 0\\
                            0 & -\sin(\theta) & 0 & \cos(\theta)\\
            \end{array}\right)
            =\left(\begin{array}{cc} 
                            \cos(\theta)\eins & \sin(\theta)\eins\\
                            -\sin(\theta)\eins & \cos(\theta)\eins
            \end{array}\right).
\end{align}
Setting $\sigma_{M,N,\theta}:=S_\theta\sigma_{M,N}S_\theta^T$ this yields 
\begin{align}
    S_\theta \sigma_{M,N} S_\theta^{T}
    &=
    \left(\begin{array}{cc}
            (c^2a+s^2b)\eins&sc(b-a)\eins\\
            sc(b-a)\eins&(s^2a+c^2b)\eins
    \end{array}\right),
\end{align}
which then transforms to 
\begin{align}
    \sigma_{M,N,\theta,\phi}:=S_\phi S_\theta \sigma_{M,N} S_\theta^{T}S_\phi^T
    &=
    \left(\begin{array}{cc}
            (c^2a+s^2b)\eins&sc(b-a)V(\phi)\\
            sc(b-a)V(\phi)^T&(s^2a+c^2b)\eins
    \end{array}\right)
\end{align}
under the action of the phase shifter. The covariance matrix of the heterodyne measurement yields a mean-zero multimode normal distribution with covariance matrix 
\begin{align}
    \cov_\mathrm{het}(M,N,\theta,\phi)=\tfrac{1}{2}(\sigma_{M,N,\theta,\phi}+\eins)
\end{align}
We define 
\begin{align}
    A_x &= \frac{1}{2}
        \begin{pmatrix}
            0  & 0  & -1 & 0  \\
            0  & 0  & 0  & -1 \\
            -1 & 0  & 0  & 0  \\
            0  & -1 & 0  & 0
        \end{pmatrix},
    \\[1ex]
    A_y &= \frac{1}{2}
            \begin{pmatrix}
                0  & 0 & 0 & -1 \\
                0  & 0 & 1 & 0  \\
                0  & 1 & 0 & 0  \\
                -1 & 0 & 0 & 0
            \end{pmatrix},
    \\[1ex]
    A_z &= \frac{1}{2}
            \begin{pmatrix}
                1 & 0 & 0  & 0  \\
                0 & 1 & 0  & 0  \\
                0 & 0 & -1 & 0  \\
                0 & 0 & 0  & -1
            \end{pmatrix}.
\end{align}
Writing $y^4=(x_1,p_1,x_2,p_2)^T$ we can then estimate the direction of $\mathbf n$ from a series $y^k=(y_1^4,\ldots,y_k^4)$ via 
\begin{align}
    \hat d_i = k^{-1}\sum_{j=1}^k\langle y^4_j,A_iy^4_j\rangle
\end{align}
and the full vector $\mathbf n$ as 
$\hat{\mathbf n}_i=\|\hat d\|_2^{-1}\hat d_i$. The expectation values for the estimators $d_x,d_y,d_z$ are explicitly given by 
\begin{align}
    d_i = (M-N)n_i,\qquad i=x,y,z, 
\end{align}
as a consequence of the formula $\mathbb E(G_iG_j)=\cov(G_iG_j)$ which holds for the expectation values of a correlated mean-zero normal distribution. We apply the Hanson-Wright Theorem \cite{rudelson2013hanson} after mapping $A_i$ to $\sqrt{\cov}A_i\sqrt{\cov}$ and, accordingly, $y^4$ to $y^{'4}=\cov^{-1/2}y^4$, so that the components of $y^{'4}$ are independent. It then holds
\begin{align}
    \mathbb P(|d_i - (M-N)n_i|\geq \epsilon)&\leq 2\exp(-k \tfrac{\epsilon^2}{\|\sqrt{\cov}A_i\sqrt{\cov}\|_F^2})
\end{align}
whenever $B+1 > \Delta\epsilon/\sqrt{12}$. Since it holds $\|\sqrt{\cov}A_i\sqrt{\cov}\|_F^2=(M+1)(N+1)+(M-N)n_i^2/2\geq(M+1)(N+1)$. Thus a favorable universal performance of the heterodyne-based estimator when normalizing by $(M-N)^{-1}$ is, by application of a union bound, given by
\begin{align}
    \mathbb P(\|\omega - \mathbf n\|\geq \epsilon)&\leq 6\exp(-k\cdot \tfrac{\epsilon^2(M-N)^2/3}{(B+1)^2})\\
        &\leq 6\exp(-k\tfrac{\epsilon^2\Delta^2}{3(B+1)^2}).
\end{align}
The analysis of the homodyne exponent follows a similar reasoning, but the $2$-mode Gaussian distribution has a slightly lower shot noise contribution of $1/2$ instead of $1$.
\paragraph{Estimation Error}
    The result follows from bounding the residual from above by $\tfrac{M-N}{2}\sqrt{1-\langle\mathbf n,\omega\rangle^2}$, which by the Cauchy-Schwarz inequality and the bound $M-N\leq B$. 

\end{proof}

\section{Conclusion and Outlook}

Using the framework of quantum multiparameter estimation, we have addressed the problem of separating thermal optical sources mixed by an unknown passive linear transformation. As a conventional baseline, we considered heterodyne detection, in which the covariance matrix is first estimated and subsequently diagonalized on a classical computer. In contrast, our method estimates and suppresses correlations directly within the optical device, thereby implementing an in situ diagonalization of the optical correlation matrix.

We find that quantum measurement theory can significantly improve the learning of a passive optical transformation even when the sources themselves are classical thermal states. Assuming the ability to implement Holevo-optimal local sensing, combining it with a self-configuring Jacobi network within our framework yields a practical quantum-optical unmixing transformation. Comparing with the heterodyne baseline, we identify the weak-light regime in which this approach offers its strongest advantage.

Several questions remain open. A central practical challenge is the realization of the required collective photon-number and spin measurements using experimentally available optical components and realistic detectors. From an algorithmic perspective, adaptive schedules and parallel architectures may reduce the total number of copies required \cite{11479855}. Further, the stabilization of the required optical architectures poses fundamental engineering challenges \cite{Litvin:25,Litvin:26}. Finally, extending the framework to nonthermal, non-Gaussian, or fermionic inputs \cite{Walschaers_2021, Adhikari_2024} and quantum communication networks \cite{10461354} may broaden the class of source-separation problems for which a quantum advantage can be achieved.

\section*{Acknowledgements}
The research is part of the Munich Quantum Valley, which is supported by the Bavarian state government with funds from the Hightech Agenda Bayern Plus. This work was financed by the DFG via grant NO 1129/2-1 and by the Federal Ministry of Education and Research of Germany in the Q-STARS project, grant number 16KIS2604, as well as via grants 16KISQ093, 16KISQ039 and 16KISQ077. The generous support of the state of Bavaria via the 6GQT project is greatly appreciated.

During preparation of this manuscript, the authors used ChatGPT (OpenAI) to assist with mathematical cross-checking, literature discovery, and language editing. All mathematical statements, derivations, references, and conclusions were independently verified by the authors, who take full responsibility for the content of the manuscript.

\bibliographystyle{plain}
\bibliography{bib}

\appendix

\section{Technical Proofs}
\label{appendix:technical_proofs}
\subsection{Eigenvalues}
The eigen-projections $E_j$ are derived as follows:
    Fix photon numbers $s$ and $t$. Then for every two photon number strings $\ba=(a_1,\ldots,a_k)$ and $\bb=(b_1,\ldots,b_k)$ with $|\ba|=s$ and $|\bb|=t$ we have
    \begin{align}
        \mathcal{J}|\ba,\bb\rangle &= \left[J_z^2 + \tfrac{1}{2}(J_+J_-+J_-J_+)\right]|\ba,\bb\rangle \\
            &= \tfrac{1}{4}(s-t)^2|\ba.\bb\rangle + \tfrac{1}{2}\big[J_+\sum_i\sqrt{(a_i+1)b_i}|\ba+e_i,\bb-e_i\rangle + \nonumber\\
            &\qquad + J_-\sum_i\sqrt{(b_i+1)a_i}|\ba-e_i,\bb+e_i\rangle)\big]\\
            &= \tfrac{1}{4}(s-t)^2|\ba,\bb\rangle + \\
            &\ \ \ + \tfrac{1}{2}\big[\sum_{i,j}\sqrt{(a_i+1)b_i(a_j+\delta_{ij})(b_j-\delta_{ij}+1)}|\ba+e_i-e_j,\bb-e_i+e_j\rangle + \nonumber\\ &\ \ \ + \sum_{i,j}\sqrt{(b_i+1)a_i(a_j-\delta_{ij}+1)(b_j+\delta_{ij})}|\ba-e_i+e_j,\bb+e_i-e_j\rangle)\big],\nonumber
    \end{align}
    where $e_i=(0,\ldots,0,1,0,\ldots,0)$ has the entry $1$ in the $i$-th position. As one can see, $\mathcal{J}$ preserves the subspaces that the projectors $P_s\otimes P_t$ project onto. The operators $J_x,J_y,J_z$ fulfill the commutation relations of algebra of angular momentum. Hence their joint eigenvectors are labeled as $(j,j_z)$ and their joint eigenstates $|j,j_z\rangle$ fulfill \cite{ZwiebachAngularMomentum2013}
    \begin{align}
        J^2|j,j_z\rangle &= j(j+1)|j,j_z\rangle,\qquad J_z|j,j_z\rangle=j_z|j,j_z\rangle.
    \end{align}
    Further we have, again following the exposition in \cite{ZwiebachAngularMomentum2013},
    \begin{align}
        J_\pm|j,j_z\rangle = \sqrt{j(j+1)-j_z(j_z\pm1)}|j,j_z\pm1\rangle
    \end{align}
    where $j\in\{(i-1)/2\}_{i\in\mathbb N}$ and $-j\leq j_z\leq j$. 
    To obtain the eigenvalues of $\mathcal{J}$, we write 
    \begin{align}
        \mathcal{J} |j,j_z\rangle &=(J_z^2+J_+J_--J_z)|j,j_z\rangle\label{eqn:different-version-of-C}\\
            &= \left(j_z(j_z-1) + j(j+1)-j_z(j_z-1)\right)|j,j_z\rangle\\
            &= j(j+1)|j,j_z\rangle.
    \end{align}   
    Thus, $\mathcal{J}=J(J+1)$.
    
\subsection{Measurement}

    \begin{proof}[Proof of Lemma \ref{lem:alternative-representations}]
        We prove \eqref{eqn:hatM} by using $D_z:=J - J_z$, leading to 
        \begin{align}
            k\cdot \hat M &= \tfrac{1}{2}N_{AB}+J \\
                &= \tfrac{1}{2}N_{AB}+J_z +D_z\\
                &= N_{A}+D_z\\
        \end{align}
        For \eqref{eqn:hatTau} we first define $J_z^o:=U(\theta)^{\otimes k}\cdot J_z\cdot U(\theta)^{\dagger\otimes k}$, which can using $c:=\cos(2\theta)$ and $s:=\sin(2\theta)$ be written as 
        \begin{align}
            J_z^o = c\cdot J_z + s\cdot J_x.
        \end{align}
        Since $[J,U(\theta)^{\otimes k}]=0$ and $\tau=(1+c)/2$ it follows
        \begin{align}
            \hat\tau &= \tfrac{1}{2}(1+U(\theta)^{\dagger\otimes k}J_z^oJ^{-1}U(\theta)^{\otimes k})\\
                &= \tfrac{1}{2}(1+U(\theta)^{\dagger\otimes k}\big(c\cdot J_z + s\cdot J_x\big)J^{-1}U(\theta)^{\otimes k})\\
                &= \tau + U(\theta)^{\dagger\otimes k}\big(\tfrac{c}{2}J_z + \tfrac{s}{2}J_x-\tfrac{c}{2}J\big)J^{-1}U(\theta)^{\otimes k})\\
                &= \tau + U(\theta)^{\dagger\otimes k}\big(\tfrac{s}{2}J_x-\tfrac{c}{2}D_z\big)J^{-1}U(\theta)^{\otimes k})
        \end{align}
        
    \end{proof}

        \begin{proof}[Proof of Lemma \ref{lem:bound-onJx2J-2}]
            We use the abbreviations $x=M/(M+1)$, $y=N/(N+1)$, $q=x/y$ $c=(1-x)^k(1-y)^k$. Since $M>N$, it then holds $q>1$.
            \paragraph{Proof of Equation \eqref{eqn:bound-onJx2J-2}}
            \begin{align}
                \Tr(J_x^2J^{-2}\rho_0) 
                    &=c\sum_{t,j}\sum_{m=-j}^jx^{\tfrac{t}{2}+m}y^{\tfrac{t}{2}-m}\Tr(J_x^2J^{-2}[P_{\tfrac{t}{2}+m}\otimes P_{\tfrac{t}{2}-m}]E_j)\\
                    &=c\sum_{t,j}\sum_{m=-j}^jx^{\tfrac{t}{2}+m}y^{\tfrac{t}{2}-m}\Tr(J_x^2[P_{\tfrac{t}{2}+m}\otimes P_{\tfrac{t}{2}-m}]E_j)\cdot \tfrac{1}{j^2}.
            \end{align}
            On the subspace $V_{t,m,j}$ corresponding to the projection $[P_{t/2+m}\otimes P_{t/2-m}]E_j$ the operators $\mathcal{J}$ and $J_z$ are constant, hence $V_{t,m,j}$ is a spanned by vectors $|j,m\rangle$. For each such vector, we have
            \begin{align}
                J_x|j,m\rangle &= \frac{1}{2}\sqrt{(j-m)(j+m+1)}|j,m+1\rangle + \tfrac{1}{2}\sqrt{(j+m)(j-m+1)}|j,m-1\rangle.
            \end{align}
            In particular for $J_x^2$ and $J_x^4$ this implies 
            \begin{align}
                \langle j,m|J_x^2|j,m\rangle &= \tfrac{1}{2}\left(j(j+1) - m^2\right)\\
                \langle j,m|J_x^4|j,m\rangle &= \tfrac{1}{8}\big(3(j(j+1)-m^2)^2-2j(j+1)+5 m^2\big).
            \end{align}
            By \eqref{eqn:combinatorial-dim(S_lambda)} $\Tr([P_{t/2+m}\otimes P_{t/2-m}]E_j)=\mu_{j,t}$ is independent of $m$.  and therefore  
            \begin{align}
                \Tr(J_x^2[P_{t/2+m}\otimes P_{t/2-m}]E_j)
                    &= \mu_{j,t}\frac{j(j+1)-m^2}{2}
           \end{align}
           We proceed with an estimate on the the sum over $d$: Let $S:=\sum_{d=0}^{2j}q^{-d}$ and set $p_d:=q^{-d}/S$. Then
           \begin{align}
                   S^{-1}\sum_{d=0}^{2j}q^{-d}\frac{j+2dj-d^2}{2j^2}&=\sum_{d=0}^{2j}p_d\frac{j+2dj-d^2}{2j^2}\\
                    &\leq \tfrac{1}{2j} + \mathbb E(d)\cdot\tfrac{1}{j}\\
                    &\leq \tfrac{1}{2j} + \tfrac{1}{j(q-1)}\\
                    &=\tfrac{1}{j}\tfrac{q+1}{2(q-1)}
               \end{align}
            which holds for every $j$ and is based on the convergence of the arithmetico-geometric sequence $\sum_{d=0}^\infty dr^k=r/(1-r)^2$ for values $r\in(-1,1)$. Thereby, we can transform our estimate of $\Tr(J_x^2J^{-2}\rho_0)$ to       
            \begin{align}
                \Tr(J_x^2J^{-2}\rho_0) &\leq\frac{1+q}{2(q-1)}c\sum_{t,j}(xy)^{t/2}\mu_{t,j}q^{j}\sum_{d=0}^{2j}q^{-d}\tfrac{1}{j}\\
                    &\leq\frac{q+1}{2(q-1)}\Tr(J^{-1}\rho_0)
           \end{align}
           
           \paragraph{Proof of Equation \eqref{eqn:bound-onJx4J-4}}
           In order to bound $\Tr(J_x^4J^{-4}\rho_0)$ we first define
           \begin{align}
               \mathrm{pl}(d,j):=\tfrac{12j^2d^2+12j^2d+6j^2-12jd^3-6jd^2-10jd-2j+3d^4+5d^2}{j^2}
           \end{align}
           and then derive 
            \begin{align}
                \Tr(J_x^4J^{-4}\rho_0)&=\sum_{t,j}\sum_{m=-j}^j\Tr(J_x^4J^{-4}[P_{t/2+m}\otimes P_{t/2-m}]E_j\rho_0)\nonumber\\
                    &= c\sum_{t,j}(xy)^{t/2}\mu_{j,t}\frac{1}{8j^4}\sum_{m=-j}^jq^m\big(3(j(j+1)-m^2)^2-2j(j+1)+5 m^2\big)\\
                    &\ \ = c\sum_{t,j}(xy)^{t/2}\mu_{j,t}q^{j}\frac{1}{8j^2}\sum_{d=0}^{2j}q^{-d}\mathrm{pl}(d,j)\\
                    &\ \ \leq c\sum_{t,j}(xy)^{t/2}\mu_{j,t}q^{j}\frac{1}{8j^2}\sum_{d=0}^{2j}q^{-d}\frac{12j^2d^2+12j^2d+6j^2}{j^2}\\
                    &\ \ \leq c\sum_{t,j}(xy)^{t/2}\mu_{j,t}q^{j}\frac{1}{8j^2}\sum_{d=0}^{2j}q^{-d}3((2d+1)^2+1)
                    &\ \ =\frac{3}{4}(\frac{q+1}{q-1})^2\Tr(J^{-2}\rho_0)
           \end{align}
           where we used $m=j-d$, implying $j(j+1)-(j-d)^2= j(2d+1)-d^2$, $d\leq 2j$ and again the  arithmetico-geometric series.

           \paragraph{Proof of Estimate \eqref{eqn:bound-onD_z2J-2}}
           Since $D_z=J-J_z$ commutes with $J$, we have with $\rho_0=S_M^{\otimes k}\otimes S_N^{\otimes k}$
           \begin{align}
               &\Tr(D_z^tJ^{-t}\rho_0) = c\sum_{n,j}\sum_{m=-j}^jx^{\tfrac{n}{2}+m}y^{\tfrac{n}{2}-m}\Tr(D_z^tJ^{-t}[P_{\tfrac{n}{2}+m}\otimes P_{\tfrac{n}{2}-m}]E_j)\\
                &=c\sum_{n,j}\sum_{m=-j}^jx^{n/2+m}y^{n/2-m}\frac{(j-m)^t}{j^t}\Tr([P_{n/2+m}\otimes P_{n/2-m}]E_j)\\
                &=c\sum_{n,j}\frac{q^j}{j^t}x^{n/2}y^{n/2}\Big(\sum_{d=0}^{2j}d^tq^{-d}\Big)\mu_{j,n}\\
                &\leq C_{M,N,t}\cdot c\sum_{n,j}\frac{q^j}{j^t}x^{n/2}y^{n/2}\Tr([P_{n/2+j}\otimes P_{n/2-j}]E_j)\\
                &\leq C_{M,N,t}\Tr(J^{-t}\rho_0)\\
                &\leq C_{M,N,t}\left(\Big(\frac{2}{k(M-N-\epsilon)}\Big)^t  + o(k^{-t})\right)
           \end{align}
           where we used Lemma \ref{lem:bound-for-Jinverse} in the last step and $C_{M,N,t}=L_{-t}(1/q)$ as well as the dimension formula \eqref{eqn:combinatorial-dim(S_lambda)} where $L_{-t}(1/q)$ is the polylogarithm \cite{polylogarithm}. 
        \end{proof}
        \begin{proof}[Proof of Lemma \ref{lem:bound-for-Jinverse}]
            Let $\eins_{m\neq0}$ be the subspace on which $J_z\neq0$. Then by the eigenvalue estimate $-j\leq j_z\leq j$ it holds $J\eins_{m\neq0}\geq|J_z|\eins_{m\neq0}$. The space on which the eigenvalues of $J_z$ equal zero is exactly the direct sum of subspaces $V_s\otimes V_s$ of equal photon numbers, where $\psi\in V_s\otimes V_s$ inevitably implies $(N_A-N_B)\psi=0$. Write $J=J^\parallel+J^\perp$, where $J^\parallel$ has support only on the orthocomplement of $\mathrm{ker}(J_z^2)$ and $J^\perp$ only on the $\mathrm{ker}(J_z^2)$. Then using generalized inverses, we have $J^{-1}=J^{\parallel-1}+J^{\perp-1}$. It follows 
            \begin{align}
                \Tr(J^{\parallel-t}\rho_0) 
                    &\leq \Tr(|J_z|^{-t}\eins_{m\neq0}\rho_0)\\
                    &=2^t\cdot\mathbb E_{A^k B^k}\left((1-\delta_{N(A^k)=N(B^k)})\cdot |N(A^k)-N(B^k)|^{-t}\right)
            \end{align}
            where the expectation is with respect to the thermal product measures $p_M^{\otimes k}$ and $p_N^{\otimes k}$, with $p_X(n):=\frac{1}{X+1}(\tfrac{X}{X+1})^n$ and $N(A^k)$ ($N(B^k)$) denotes the photon number in the length-$k$ string $A^k$ ($N(B^k)$). The functions $N(x^k):=\sum_ix_i$ satisfy, upon using the Chernoff bound,
            \begin{align}
                \mathbb P(X)\leq e^{-k\cdot C'_{M,N}}
            \end{align}
            where $X=\{| N(A^k) - N(B^k)- k(M-N)|\geq k\epsilon\}$ and $C'_{M,N,\epsilon}>0$ depends on $M$ and $N$ as well as $\epsilon$. Let $\Xi_{\neq0}$ be the set of $(a^k,b^k)$ such that $N(a^k)\neq N(b^k)$. On this set, $|N(a^k)-N(b^k)|\geq1$. Then
            \begin{align}
                \Tr(J^{\parallel-t}\rho_0) &\leq2^t\cdot\mathbb E_{A^k B^k}\left((1-\delta_{N(A^k)=N(B^k)})\cdot |N(A^k)-N(B^k)|^{-t}\right)\\
                &\leq \frac{2^t}{k^t(M-N-\epsilon)^t} + 2^t\mathbb E_{X\cap\Xi_{\neq0}}|N(A^k)-N(B^k)|^{-t}\\
                    &\leq \frac{2^t}{k^t(M-N-\epsilon)^t} + 2^t\mathbb P(X\cap\Xi_{\neq0})\\
                    &\leq \frac{2^t}{k^t(M-N-\epsilon)^t} + e^{-k\cdot C'_{M,N,\epsilon}}.
            \end{align}
            On the space where $J_z^2=0$ the smallest non-zero eigenvalue of $J$ is $1/2$, hence $J^{\perp-t}\leq2^{t}\eins_{\mathrm{ker}(J_z^2)}$. However by using the Chernoff bound, we obtain
            \begin{align}
                \Tr(\eins_{\mathrm{ker}(J_z^2)}\rho_0) &= \sum_{N(A^k)=N(B^k)}p_M^{\otimes k}(A^k)p_N^{\otimes k}(B^k)\\
                    &\leq\exp(-k\tfrac{\Delta^2}{5B(B+1)}).
            \end{align}
            Thus $\Tr(J^{\perp-t}\rho_0)\leq\exp(-k\tfrac{\Delta^2}{5B(B+1)})$.
        \end{proof}
        \begin{proof}[Proof of Lemma \ref{lem:bound-on-Dz}]
            We use the abbreviations $x=M/(M+1)$, $y=N/(N+1)$, $q=x/y$ $c=(1-x)^k(1-y)^k$. Since $M>N$, it then holds $q>1$.
            Recall that $D_z = J - J_z$, $J_z=(N_A-N_B)/2$ and let $E_j$ be the spectral projections of $J$. Thus if $t\geq1$, 
            \begin{align}
                &\Tr(D_z^tS_M^{\otimes k}\otimes S_N^{\otimes k})
                    =c\sum_{u,v}x^uy^v\Tr(D_z^t[P_u\otimes P_v])\\
                    &=c\sum_{n,j}\sum_{m=-j}^jx^{n/2+m}y^{n/2-m}\Tr(D_z^t[P_{n/2+m}\otimes P_{n/2-m}]E_j)\\
                    &=c\sum_{n,j}\sum_{m=-j}^jx^{n/2+m}y^{n/2-m}(j-m)^t\Tr([P_{n/2+m}\otimes P_{n/2-m}]E_j)\\
                    &=c\sum_{n,j}\sum_{d=0}^{2j}x^{n/2+j}y^{n/2-j}q^{-d}d^t\mu_{n,j}\\
                    &\leq c\sum_{n,j}x^{n/2+j}y^{n/2-j}\mu_{n,j}\sum_{d=0}^\infty q^{-d}d^t\\
                    &=c\sum_{n,j}x^{n/2+j}y^{n/2-j}\mu_{n,j}L_{-t}(1/q)\\
                    &\leq L_{-t}(1/q)
           \end{align}
           where $L_{-t}(1/q)$ is the polylogarithm \cite{polylogarithm}. The derivative equals 
           \begin{align}
               \partial_M\Tr(D_z\rho_0)&=\frac{1}{M(M+1)}\Tr(D_z(N_A-kM)\rho_0)\\
                        &= \frac{1}{M(M+1)}\cov(D_z,N_A)\\
                        &\leq\mathcal O(\sqrt{k})
           \end{align}  
           by above bounds, and the same holds for $\partial_M\Tr(D_z\rho_0)$.        
        \end{proof}

    \subsection{Holevo Optimality}\label{app:holevo-optimality}

        \begin{proof}[Proof of Lemma \ref{lem:first-and-second-order-spin-momemnts}]
            We conider the case of a single irreducible representation first, where the equality $J_\omega|j,\omega\rangle=j|j,\omega\rangle$ uniquely defines $|j,\omega\rangle$. In this case, it is possible to define $M_j(\omega):=\frac{2j+1}{4\pi}|j,\omega\rangle\langle j,\omega|$. From \cite[Eqns. (1)]{Giraud_2008} we then have the following:
            \begin{align}
                \int M_j(\omega) &= E_j
            \end{align}
            From \cite[Eq. (A5)]{Manai_2023} we have 
            \begin{align}
                \langle j,\omega|J|j,\omega\rangle = j\cdot\omega .
            \end{align}
            while \cite[Eq. (A6)]{Manai_2023} gives
            \begin{equation}
                \mathbf J = \frac{2j+1}{4\pi}\int_{S^2}(j+1)\cdot\omega\cdot |j,\omega\rangle\langle j,\omega|d\omega .
            \end{equation}
            and finally \cite[Eq. (A7)]{Manai_2023} translates to 
            \begin{align}
                J_z^2 = \frac{2j+1}{4\pi}\int_{S^2}\left[(j+1)\left(j+\frac{3}{2}\right)\omega_z^2-\frac{j+1}{2}\right]|j,\omega\rangle\langle j,\omega|d\omega .
            \end{align}
            which finally gives by rotational invariance and via $M_j(d\omega)=\tfrac{2j+1}{4\pi}|j,\omega\rangle\langle j,\omega|d\omega$ the equation 
            \begin{align}
                \int_{S^2}\langle\mathbf n,\omega\rangle M_j(d\omega)
                    &= \frac{J_\mathbf{n}}{j+1}\\
                \int_{S^2}\langle\mathbf n,\omega\rangle^2\,M_j(d\omega)
                    &=\frac{2J_\mathbf{n}^2}{(j+1)(2j+3)} + \frac{1}{2j+3}\,\mathbb I_j
            \end{align}
            for every unit vector $\mathbf n$. Vectors of arbitrary length are handled by rescaling of the formula.
            To obtain the cross-terms, we write $Q(\mathbf n):=\int_{S^2}\langle \mathbf n,\omega\rangle^2\,M_j(d\omega)$ and then for $i\neq j$
            \begin{align}
                2\cdot Q(\tfrac{1}{\sqrt{2}}(e_i+e_l)) = Q(e_i)+Q(e_l) + 2\int_{S^2}\omega_i\omega_l\cdot M_j(d\omega),
            \end{align}
            so that the form as used by us can be completely derived from \cite{Manai_2023}. Alternatively, the formulas can be derived by explicit calculation using \cite[Appendix D]{arcchiAtomicCoherentStatesInQuantumOptics}, in particular equations D16 and D17, as well as D6.  

            Since all spin-$j$ representations have exactly the same dimension $\mu_{j,t}$, all arguments carry over to the statement of Lemma \ref{lem:first-and-second-order-spin-momemnts}
        \end{proof}

        \begin{proof}[Proof of Lemma \ref{lem:abba}]
            Let
            \begin{align}
            A:=J_xJ^{-1},\qquad B:=D_zJ^{-1},\qquad D_z:=J-J_z .
            \end{align}
            We work on a fixed spin-\(j\) sector, \(j>0\). Then \(J^{-1}=j^{-1}\) on this sector, and
            \begin{align}
            B|j,m\rangle=\frac{j-m}{j}|j,m\rangle .
            \end{align}
            We set $d=j-m$, so that $m=j-d$. With this relabelling of parameters, we get 
            \begin{align}
            B|j,m\rangle=\frac{d}{j}|j,m\rangle .
            \end{align}
            It holds $J_x=\frac{J_++J_-}{2}$, therefore
            \begin{align}
                J_+|j,m\rangle &=\sqrt{(j-m)(j+m+1)}\,|j,m+1\rangle\\
                J_-|j,m\rangle &=\sqrt{(j+m)(j-m+1)}\,|j,m-1\rangle.
            \end{align}
            Hence
            \begin{align}
                A|j,m\rangle &= \frac{1}{2j}\Big[ \sqrt{(j-m)(j+m+1)}|j,m+1\rangle + \nonumber\\
                &\qquad \qquad + \sqrt{(j+m)(j-m+1)}|j,m-1\rangle \Big].
            \end{align}
            We proceed to computing $\langle j,m|ABAB|j,m\rangle$. From $B|j,m\rangle = \frac{j-m}{j}|j,m\rangle$ we derive 
            \begin{align}
                AB|j,m\rangle &= \frac{j-m}{2j^2}\Big[ \sqrt{(j-m)(j+m+1)}|j,m+1\rangle +\nonumber\\
                    &\qquad\qquad + \sqrt{(j+m)(j-m+1)}|j,m-1\rangle \Big]
            \end{align}
            and thereby 
            \begin{align}
                BAB|j,m\rangle &= \frac{j-m}{2j^2} \Big[ \frac{j-(m+1)}{j} \sqrt{(j-m)(j+m+1)}|j,m+1\rangle + \nonumber\\
                &\qquad \qquad +  \frac{j-(m-1)}{j} \sqrt{(j+m)(j-m+1)}|j,m-1\rangle\Big].
            \end{align}
            We thus conclude that 
            \begin{align}
                \langle j,m|ABAB|j,m\rangle &= \frac{j-m}{4j^3} \Big[ \frac{j-m-1}{j}(j-m)(j+m+1) + 
                \frac{j-m+1}{j}(j+m)(j-m+1) \Big] \\
                    &= \frac{j-m}{4j^4} \left[ (j-m-1)(j-m)(j+m+1) + (j-m+1)^2(j+m) \right].
            \end{align}
            Moving to $m=j-d$, we have $j-m-1=d-1$, $j+m+1=j+(j-d)+1=2j-d+1$, $j+m=j+(j-d)=2j-d$ and thus 
            \begin{align}
                \langle j,j-d|ABAB|j,j-d\rangle = \frac{d}{4j^4} \left[ d(d-1)(2j-d+1) + (d+1)^2(2j-d) \right].
            \end{align}
            By applying the same logic to the other four terms, we obtain the universal upper bound $C\frac{(1+d)^3}{j^3}$ for $X\in\{ABAB, BABA, ABBA,BAAB\}$. 
            Therefore 
            \begin{align}
                \Tr(X\rho_0) &\leq C\cdot \Tr((\eins+D_z)^3J^{-3}\rho_0)\\
                    &\leq C\cdot \sqrt{\Tr((\eins+D_z)^6\rho_0)\Tr(J^{-6}\rho_0)}\\
                    &=\mathcal O(k^{-3}).
            \end{align}
            Here, we used Lemma \ref{lem:bound-on-Dz} and Lemma \ref{lem:bound-for-Jinverse}.
     \end{proof}

\end{document}